\documentclass{article}
\usepackage[a4paper,margin=1.3in]{geometry}
\pdfoutput=1
\usepackage[utf8]{inputenc}
\usepackage[english]{babel}
\usepackage[T1]{fontenc}
\usepackage{amsmath,amssymb,amsfonts,thmtools,thm-restate,mathtools,physics, mathrsfs, enumitem}
\usepackage{csquotes}

\usepackage[dvipsnames]{xcolor}
\usepackage[pdfusetitle,bookmarksnumbered,colorlinks,linkcolor=MidnightBlue,citecolor=olive]{hyperref}
\usepackage[capitalise,nameinlink,noabbrev]{cleveref}
\crefname{equation}{Eq.}{Eqs.}
\Crefname{equation}{Eq.}{Eqs.}

\usepackage[style=numeric-comp,
    sortcites=true,
    sorting=none,
    maxcitenames=3,maxbibnames=9,
    date=year, 
    doi=false,isbn=false,url=false,eprint=true]{biblatex}
\DeclareFieldFormat{postnote}{#1} 
\renewbibmacro{in:}{\ifentrytype{article}{}{\printtext{\bibstring{in}\intitlepunct}}} 

\usepackage{tikzit}
\usepackage{tikz, circuitikz}
\tikzstyle{env}=[copoint,regular polygon rotate=0,minimum width=0.2cm, fill=black]

\tikzstyle{probs}=[shape=semicircle,fill=white,draw=black,shape border rotate=180,minimum width=1.2cm]

\tikzstyle{nudge}=[yshift=0.6mm]

\tikzstyle{every picture}=[baseline=-0.25em,scale=0.5]
\tikzstyle{dotpic}=[] 
\tikzstyle{diredges}=[every to/.style={diredge}]
\tikzstyle{math matrix}=[matrix of math nodes,left delimiter=(,right delimiter=),inner sep=2pt,column sep=1em,row sep=0.5em,nodes={inner sep=0pt},text height=1.5ex, text depth=0.25ex]

\tikzstyle{inline text}=[text height=1.5ex, text depth=0.25ex,yshift=0.5mm]
\tikzstyle{label}=[font=\footnotesize,text height=1.5ex, text depth=0.25ex]
\tikzstyle{left label}=[label,anchor=east,xshift=2mm]
\tikzstyle{right label}=[label,anchor=west,xshift=-2mm]

\tikzstyle{braceedge}=[decorate,decoration={brace,amplitude=2mm,raise=-1mm}]
\tikzstyle{small braceedge}=[decorate,decoration={brace,amplitude=1mm,raise=-1mm}]

\tikzstyle{doubled}=[line width=1.6pt] 
\tikzstyle{boldedge}=[doubled,shorten <=-0.17mm,shorten >=-0.17mm]
\tikzstyle{boldedgegray}=[doubled,gray,shorten <=-0.17mm,shorten >=-0.17mm]
\tikzstyle{singleedgegray}=[gray]

\tikzstyle{semidoubled}=[line width=1.4pt] 
\tikzstyle{semiboldedgegray}=[semidoubled,gray,shorten <=-0.17mm,shorten >=-0.17mm]

\tikzstyle{boxedge}=[semiboldedgegray]

\tikzstyle{boldedgedashed}=[very thick,dashed,shorten <=-0.17mm,shorten >=-0.17mm]
\tikzstyle{vboldedgedashed}=[doubled,dashed,shorten <=-0.17mm,shorten >=-0.17mm]
\tikzstyle{left hook arrow}=[left hook-latex]
\tikzstyle{right hook arrow}=[right hook-latex]
\tikzstyle{sembracket}=[line width=0.5pt,shorten <=-0.07mm,shorten >=-0.07mm]

\tikzstyle{causal edge}=[->,thick,gray]
\tikzstyle{causal nondir}=[thick,gray]
\tikzstyle{timeline}=[thick,gray, dashed]

\tikzstyle{cedge}=[<->,thick,gray!70!white]

\tikzstyle{empty diagram}=[draw=gray!40!white,dashed,shape=rectangle,minimum width=1cm,minimum height=1cm]
\tikzstyle{empty diagram small}=[draw=gray!50!white,dashed,shape=rectangle,minimum width=0.6cm,minimum height=0.5cm]

\tikzstyle{dot}=[inner sep=0mm,minimum width=2mm,minimum height=2mm,draw,shape=circle]  
\tikzstyle{Wsquare}=[white dot, shape=regular polygon, rounded corners=0.8 mm, minimum size=3.3 mm, regular polygon sides=3, outer sep=-0.2mm]
\tikzstyle{Wsquareadj}=[white dot, shape=regular polygon, rounded corners=0.8 mm, minimum size=3.3 mm, regular polygon sides=3, outer sep=-0.2mm, regular polygon rotate=180]
\tikzstyle{ddot}=[inner sep=0mm, doubled, minimum width=2.5mm,minimum height=2.5mm,draw,shape=circle]

\tikzstyle{black dot}=[dot,fill=black]
\tikzstyle{white dot}=[dot,fill=white,,text depth=-0.2mm]
\tikzstyle{white Wsquare}=[Wsquare,fill=white,,text depth=-0.2mm]
\tikzstyle{white Wsquareadj}=[Wsquareadj,fill=white,,text depth=-0.2mm]
\tikzstyle{green dot}=[white dot] 
\tikzstyle{gray dot}=[dot,fill=gray!40!white,,text depth=-0.2mm]
\tikzstyle{red dot}=[gray dot] 

\tikzstyle{black ddot}=[ddot,fill=black]
\tikzstyle{white ddot}=[ddot,fill=white]
\tikzstyle{gray ddot}=[ddot,fill=gray!40!white]

\tikzstyle{gray edge}=[gray!60!white]

\tikzstyle{small dot}=[inner sep=0.5mm,minimum width=0pt,minimum height=0pt,draw,shape=circle]

\tikzstyle{small black dot}=[small dot,fill=black]
\tikzstyle{small white dot}=[small dot,fill=white]
\tikzstyle{small gray dot}=[small dot,fill=gray!40!white]

\tikzstyle{very small dot}=[inner sep=0.3mm,minimum width=0pt,minimum height=0pt,draw,shape=circle]

\tikzstyle{very small black dot}=[very small dot,fill=black]
\tikzstyle{very small white dot}=[small dot,fill=white]
\tikzstyle{very small gray dot}=[small dot,fill=gray!40!white]

\tikzstyle{causal dot}=[inner sep=0.4mm,minimum width=0pt,minimum height=0pt,draw=white,shape=circle,fill=gray!40!white]

\tikzstyle{phase dimensions}=[minimum size=5mm,font=\footnotesize,rectangle,rounded corners=2.5mm,inner sep=0.2mm,outer sep=-2mm]
\tikzstyle{dphase dimensions}=[minimum size=5mm,font=\footnotesize,rectangle,rounded corners=2.5mm,inner sep=0.2mm,outer sep=-2mm]

\tikzstyle{white phase dot}=[dot,fill=white,phase dimensions]
\tikzstyle{white phase ddot}=[ddot,fill=white,dphase dimensions]

\tikzstyle{white rect ddot}=[draw=black,fill=white,doubled,minimum size=5mm,font=\footnotesize,rectangle,rounded corners=2.5mm,inner sep=0.2mm]
\tikzstyle{gray rect ddot}=[draw=black,fill=gray!40!white,doubled,minimum size=6mm,font=\footnotesize,rectangle,rounded corners=3mm]

\tikzstyle{gray phase dot}=[dot,fill=gray!40!white,phase dimensions]
\tikzstyle{gray phase ddot}=[ddot,fill=gray!40!white,dphase dimensions]
\tikzstyle{grey phase dot}=[gray phase dot]
\tikzstyle{grey phase ddot}=[gray phase ddot]

\tikzstyle{small phase dimensions}=[minimum size=4mm,font=\tiny,rectangle,rounded corners=2mm,inner sep=0.2mm,outer sep=-2mm]
\tikzstyle{small dphase dimensions}=[minimum size=4mm,font=\tiny,rectangle,rounded corners=2mm,inner sep=0.2mm,outer sep=-2mm]

\tikzstyle{small gray phase dot}=[dot,fill=gray!40!white,small phase dimensions]
\tikzstyle{small gray phase ddot}=[ddot,fill=gray!40!white,small dphase dimensions]

\tikzstyle{small map}=[draw,shape=rectangle,minimum height=4mm,minimum width=4mm,fill=white]

\tikzstyle{cnot}=[fill=white,shape=circle,inner sep=-1.4pt]

\tikzstyle{asym hadamard}=[fill=white,draw,shape=NEbox,inner sep=0.6mm,font=\footnotesize,minimum height=4mm]
\tikzstyle{asym hadamard conj}=[fill=white,draw,shape=NWbox,inner sep=0.6mm,font=\footnotesize,minimum height=4mm]
\tikzstyle{asym hadamard dag}=[fill=white,draw,shape=SEbox,inner sep=0.6mm,font=\footnotesize,minimum height=4mm]

\tikzstyle{hadamard}=[fill=white,draw,inner sep=0.6mm,font=\footnotesize,minimum height=4mm,minimum width=4mm]
\tikzstyle{small hadamard}=[fill=white,draw,inner sep=0.6mm,minimum height=1.5mm,minimum width=1.5mm]
\tikzstyle{small hadamard rotate}=[small hadamard,rotate=45]
\tikzstyle{dhadamard}=[hadamard,doubled]
\tikzstyle{small dhadamard}=[small hadamard,doubled]
\tikzstyle{small dhadamard rotate}=[small hadamard rotate,doubled]
\tikzstyle{antipode}=[white dot,inner sep=0.3mm,font=\footnotesize]

\tikzstyle{scalar}=[diamond,draw,inner sep=0.5pt,font=\small]
\tikzstyle{dscalar}=[diamond,doubled, draw,inner sep=0.5pt,font=\small]

\tikzstyle{small box}=[rectangle,inline text,fill=white,draw,minimum height=5mm,yshift=-0.5mm,minimum width=5mm,font=\small]
\tikzstyle{small gray box}=[small box,fill=gray!30]
\tikzstyle{medium box}=[rectangle,inline text,fill=white,draw,minimum height=5mm,yshift=-0.5mm,minimum width=8mm,font=\small]
\tikzstyle{square box}=[small box] 
\tikzstyle{medium gray box}=[small box,fill=gray!30]
\tikzstyle{semilarge box}=[rectangle,inline text,fill=white,draw,minimum height=5mm,yshift=-0.5mm,minimum width=12.5mm,font=\small]
\tikzstyle{large box}=[rectangle,inline text,fill=white,draw,minimum height=5mm,yshift=-0.5mm,minimum width=15mm,font=\small]
\tikzstyle{large gray box}=[small box,fill=gray!30]

\tikzstyle{Bayes box}=[rectangle,fill=black,draw, minimum height=3mm, minimum width=3mm]

\tikzstyle{gray square point}=[small box,fill=gray!50]

\tikzstyle{dphase box white}=[dhadamard]
\tikzstyle{dphase box gray}=[dhadamard,fill=gray!50!white]
\tikzstyle{phase box white}=[hadamard]
\tikzstyle{phase box gray}=[hadamard,fill=gray!50!white]

\tikzstyle{point}=[regular polygon,regular polygon sides=3,draw,scale=0.75,inner sep=-0.5pt,minimum width=9mm,fill=white,regular polygon rotate=180]
\tikzstyle{copoint}=[regular polygon,regular polygon sides=3,draw,scale=0.75,inner sep=-0.5pt,minimum width=9mm,fill=white]
\tikzstyle{dpoint}=[point,doubled]
\tikzstyle{dcopoint}=[copoint,doubled]

\tikzstyle{wide copoint}=[fill=white,draw,shape=isosceles triangle,shape border rotate=90,isosceles triangle stretches=true,inner sep=0pt,minimum width=1.5cm,minimum height=6.12mm]
\tikzstyle{wide point}=[fill=white,draw,shape=isosceles triangle,shape border rotate=-90,isosceles triangle stretches=true,inner sep=0pt,minimum width=1.5cm,minimum height=6.12mm,yshift=-0.0mm]
\tikzstyle{wide point plus}=[fill=white,draw,shape=isosceles triangle,shape border rotate=-90,isosceles triangle stretches=true,inner sep=0pt,minimum width=1.74cm,minimum height=7mm,yshift=-0.0mm]

\tikzstyle{wide dpoint}=[fill=white,doubled,draw,shape=isosceles triangle,shape border rotate=-90,isosceles triangle stretches=true,inner sep=0pt,minimum width=1.5cm,minimum height=6.12mm,yshift=-0.0mm]

\tikzstyle{tinypoint}=[regular polygon,regular polygon sides=3,draw,scale=0.55,inner sep=-0.15pt,minimum width=6mm,fill=white,regular polygon rotate=180] 

\tikzstyle{white point}=[point]
\tikzstyle{white dpoint}=[dpoint]
\tikzstyle{green point}=[white point] 
\tikzstyle{white copoint}=[copoint]
\tikzstyle{gray point}=[point,fill=gray!40!white]
\tikzstyle{gray dpoint}=[gray point,doubled]
\tikzstyle{red point}=[gray point] 
\tikzstyle{gray copoint}=[copoint,fill=gray!40!white]
\tikzstyle{gray dcopoint}=[gray copoint,doubled]

\tikzstyle{white point guide}=[regular polygon,regular polygon sides=3,font=\scriptsize,draw,scale=0.65,inner sep=-0.5pt,minimum width=9mm,fill=white,regular polygon rotate=180]

\tikzstyle{black point}=[point,fill=black,font=\color{white}]
\tikzstyle{black copoint}=[copoint,fill=black,font=\color{white}]

\tikzstyle{tiny gray point}=[tinypoint,fill=gray!40!white]

\tikzstyle{diredge}=[->]
\tikzstyle{ddiredge}=[<->]
\tikzstyle{rdiredge}=[<-]
\tikzstyle{thickdiredge}=[->, very thick]
\tikzstyle{pointer edge}=[->,very thick,gray]
\tikzstyle{pointer edge part}=[very thick,gray]
\tikzstyle{dashed edge}=[dashed]
\tikzstyle{thick dashed edge}=[very thick,dashed]
\tikzstyle{thick gray dashed edge}=[thick dashed edge,gray!40]
\tikzstyle{thick map edge}=[very thick,|->]

\makeatletter
\newcommand{\boxshape}[3]{%
\pgfdeclareshape{#1}{
\inheritsavedanchors[from=rectangle] 
\inheritanchorborder[from=rectangle]
\inheritanchor[from=rectangle]{center}
\inheritanchor[from=rectangle]{north}
\inheritanchor[from=rectangle]{south}
\inheritanchor[from=rectangle]{west}
\inheritanchor[from=rectangle]{east}
\backgroundpath{
\southwest \pgf@xa=\pgf@x \pgf@ya=\pgf@y
\northeast \pgf@xb=\pgf@x \pgf@yb=\pgf@y

\@tempdima=#2
\@tempdimb=#3

\pgfpathmoveto{\pgfpoint{\pgf@xa - 5pt + \@tempdima}{\pgf@ya}}
\pgfpathlineto{\pgfpoint{\pgf@xa - 5pt - \@tempdima}{\pgf@yb}}
\pgfpathlineto{\pgfpoint{\pgf@xb + 5pt + \@tempdimb}{\pgf@yb}}
\pgfpathlineto{\pgfpoint{\pgf@xb + 5pt - \@tempdimb}{\pgf@ya}}
\pgfpathlineto{\pgfpoint{\pgf@xa - 5pt + \@tempdima}{\pgf@ya}}
\pgfpathclose
}
}}

\boxshape{NEbox}{0pt}{0pt}
\boxshape{SEbox}{0pt}{-5pt}
\boxshape{NWbox}{5pt}{0pt}
\boxshape{SWbox}{-5pt}{0pt}
\boxshape{EBox}{-3pt}{3pt}
\boxshape{WBox}{3pt}{-3pt}
\makeatother

\tikzstyle{cloud}=[shape=cloud,draw,minimum width=1.5cm,minimum height=1.5cm]

\tikzstyle{map}=[draw,shape=NEbox,inner sep=2pt,minimum height=6mm,fill=white]
\tikzstyle{dashedmap}=[draw,dashed,gray,shape=NEbox,inner sep=2pt,minimum height=6mm,fill=white]
\tikzstyle{medium dashedmap}=[draw,dashed,gray,shape=NEbox,inner sep=2pt,minimum height=6mm,fill=white,minimum width=7mm]
\tikzstyle{semilarge dashedmap}=[draw,dashed,gray,shape=NEbox,inner sep=2pt,minimum height=6mm,fill=white,minimum width=9.5mm]
\tikzstyle{large dashedmap}=[draw,dashed,gray,shape=NEbox,inner sep=2pt,minimum height=6mm,fill=white,minimum width=12mm]
\tikzstyle{very large dashedmap}=[draw,dashed,gray,shape=NEbox,inner sep=2pt,minimum height=6mm,fill=white,minimum width=17mm]

\tikzstyle{dashed map}=[fill=white, draw=gray, shape=rectangle, style=map, dashed]

\tikzstyle{mapdag}=[draw,shape=SEbox,inner sep=2pt,minimum height=6mm,fill=white]
\tikzstyle{mapadj}=[draw,shape=SEbox,inner sep=2pt,minimum height=6mm,fill=white]
\tikzstyle{maptrans}=[draw,shape=SWbox,inner sep=2pt,minimum height=6mm,fill=white]
\tikzstyle{mapconj}=[draw,shape=NWbox,inner sep=2pt,minimum height=6mm,fill=white]

\tikzstyle{medium map}=[draw,shape=NEbox,inner sep=2pt,minimum height=6mm,fill=white,minimum width=7mm]
\tikzstyle{medium map dag}=[draw,shape=SEbox,inner sep=2pt,minimum height=6mm,fill=white,minimum width=7mm]
\tikzstyle{medium map adj}=[draw,shape=SEbox,inner sep=2pt,minimum height=6mm,fill=white,minimum width=7mm]
\tikzstyle{medium map trans}=[draw,shape=SWbox,inner sep=2pt,minimum height=6mm,fill=white,minimum width=7mm]
\tikzstyle{medium map conj}=[draw,shape=NWbox,inner sep=2pt,minimum height=6mm,fill=white,minimum width=7mm]
\tikzstyle{semilarge map}=[draw,shape=NEbox,inner sep=2pt,minimum height=6mm,fill=white,minimum width=9.5mm]
\tikzstyle{semilarge map trans}=[draw,shape=SWbox,inner sep=2pt,minimum height=6mm,fill=white,minimum width=9.5mm]
\tikzstyle{semilarge map adj}=[draw,shape=SEbox,inner sep=2pt,minimum height=6mm,fill=white,minimum width=9.5mm]
\tikzstyle{semilarge map dag}=[draw,shape=SEbox,inner sep=2pt,minimum height=6mm,fill=white,minimum width=9.5mm]
\tikzstyle{semilarge map conj}=[draw,shape=NWbox,inner sep=2pt,minimum height=6mm,fill=white,minimum width=9.5mm]
\tikzstyle{large map}=[draw,shape=NEbox,inner sep=2pt,minimum height=6mm,fill=white,minimum width=12mm]
\tikzstyle{large map conj}=[draw,shape=NWbox,inner sep=2pt,minimum height=6mm,fill=white,minimum width=12mm]
\tikzstyle{very large map}=[draw,shape=NEbox,inner sep=2pt,minimum height=6mm,fill=white,minimum width=17mm]
\tikzstyle{very very large map}=[draw,shape=NEbox,inner sep=2pt,minimum height=6mm,fill=white,minimum width=50mm]
\tikzstyle{large map dag}=[draw,shape=SEbox,inner sep=2pt,minimum height=6mm,fill=white,minimum width=12mm]

\tikzstyle{medium dmap}=[draw,doubled,shape=NEbox,inner sep=2pt,minimum height=6mm,fill=white,minimum width=7mm]
\tikzstyle{medium dmap dag}=[draw,doubled,shape=SEbox,inner sep=2pt,minimum height=6mm,fill=white,minimum width=7mm]
\tikzstyle{medium dmap adj}=[draw,doubled,shape=SEbox,inner sep=2pt,minimum height=6mm,fill=white,minimum width=7mm]
\tikzstyle{medium dmap trans}=[draw,doubled,shape=SWbox,inner sep=2pt,minimum height=6mm,fill=white,minimum width=7mm]
\tikzstyle{medium dmap conj}=[draw,doubled,shape=NWbox,inner sep=2pt,minimum height=6mm,fill=white,minimum width=7mm]
\tikzstyle{semilarge dmap}=[draw,doubled,shape=NEbox,inner sep=2pt,minimum height=6mm,fill=white,minimum width=9.5mm]
\tikzstyle{semilarge dmap trans}=[draw,doubled,shape=SWbox,inner sep=2pt,minimum height=6mm,fill=white,minimum width=9.5mm]
\tikzstyle{semilarge dmap adj}=[draw,doubled,shape=SEbox,inner sep=2pt,minimum height=6mm,fill=white,minimum width=9.5mm]
\tikzstyle{semilarge dmap dag}=[draw,doubled,shape=SEbox,inner sep=2pt,minimum height=6mm,fill=white,minimum width=9.5mm]
\tikzstyle{semilarge dmap conj}=[draw,doubled,shape=NWbox,inner sep=2pt,minimum height=6mm,fill=white,minimum width=9.5mm]
\tikzstyle{large dmap}=[draw,doubled,shape=NEbox,inner sep=2pt,minimum height=6mm,fill=white,minimum width=12mm]
\tikzstyle{large dmap conj}=[draw,doubled,shape=NWbox,inner sep=2pt,minimum height=6mm,fill=white,minimum width=12mm]
\tikzstyle{large dmap trans}=[draw,doubled,shape=SWbox,inner sep=2pt,minimum height=6mm,fill=white,minimum width=12mm]
\tikzstyle{large dmap adj}=[draw,doubled,shape=SEbox,inner sep=2pt,minimum height=6mm,fill=white,minimum width=12mm]
\tikzstyle{large dmap dag}=[draw,doubled,shape=SEbox,inner sep=2pt,minimum height=6mm,fill=white,minimum width=12mm]
\tikzstyle{very large dmap}=[draw,doubled,shape=NEbox,inner sep=2pt,minimum height=6mm,fill=white,minimum width=19.5mm]

\tikzstyle{muxbox}=[draw,shape=rectangle,minimum height=3mm,minimum width=3mm,fill=white]
\tikzstyle{dmuxbox}=[muxbox,doubled]

\tikzstyle{box}=[draw,shape=rectangle,inner sep=2pt,minimum height=6mm,minimum width=6mm,fill=white]
\tikzstyle{dbox}=[draw,doubled,shape=rectangle,inner sep=2pt,minimum height=6mm,minimum width=6mm,fill=white]
\tikzstyle{dmap}=[draw,doubled,shape=NEbox,inner sep=2pt,minimum height=6mm,fill=white]
\tikzstyle{dmapdag}=[draw,doubled,shape=SEbox,inner sep=2pt,minimum height=6mm,fill=white]
\tikzstyle{dmapadj}=[draw,doubled,shape=SEbox,inner sep=2pt,minimum height=6mm,fill=white]
\tikzstyle{dmaptrans}=[draw,doubled,shape=SWbox,inner sep=2pt,minimum height=6mm,fill=white]
\tikzstyle{dmapconj}=[draw,doubled,shape=NWbox,inner sep=2pt,minimum height=6mm,fill=white]

\tikzstyle{ddmap}=[draw,doubled,dashed,shape=NEbox,inner sep=2pt,minimum height=6mm,fill=white]
\tikzstyle{ddmapdag}=[draw,doubled,dashed,shape=SEbox,inner sep=2pt,minimum height=6mm,fill=white]
\tikzstyle{ddmapadj}=[draw,doubled,dashed,shape=SEbox,inner sep=2pt,minimum height=6mm,fill=white]
\tikzstyle{ddmaptrans}=[draw,doubled,dashed,shape=SWbox,inner sep=2pt,minimum height=6mm,fill=white]
\tikzstyle{ddmapconj}=[draw,doubled,dashed,shape=NWbox,inner sep=2pt,minimum height=6mm,fill=white]

\boxshape{sNEbox}{0pt}{3pt}
\boxshape{sSEbox}{0pt}{-3pt}
\boxshape{sNWbox}{3pt}{0pt}
\boxshape{sSWbox}{-3pt}{0pt}
\tikzstyle{smap}=[draw,shape=sNEbox,fill=white]
\tikzstyle{smapdag}=[draw,shape=sSEbox,fill=white]
\tikzstyle{smapadj}=[draw,shape=sSEbox,fill=white]
\tikzstyle{smaptrans}=[draw,shape=sSWbox,fill=white]
\tikzstyle{smapconj}=[draw,shape=sNWbox,fill=white]

\tikzstyle{dsmap}=[draw,dashed,shape=sNEbox,fill=white]
\tikzstyle{dsmapdag}=[draw,dashed,shape=sSEbox,fill=white]
\tikzstyle{dsmaptrans}=[draw,dashed,shape=sSWbox,fill=white]
\tikzstyle{dsmapconj}=[draw,dashed,shape=sNWbox,fill=white]

\boxshape{mNEbox}{0pt}{10pt}
\boxshape{mSEbox}{0pt}{-10pt}
\boxshape{mNWbox}{10pt}{0pt}
\boxshape{mSWbox}{-10pt}{0pt}
\tikzstyle{mmap}=[draw,shape=mNEbox]
\tikzstyle{mmapdag}=[draw,shape=mSEbox]
\tikzstyle{mmaptrans}=[draw,shape=mSWbox]
\tikzstyle{mmapconj}=[draw,shape=mNWbox]

\tikzstyle{mmapgray}=[draw,fill=gray!40!white,shape=mNEbox]
\tikzstyle{smapgray}=[draw,fill=gray!40!white,shape=sNEbox]

\makeatletter

\pgfdeclareshape{cornerpoint}{
\inheritsavedanchors[from=rectangle] 
\inheritanchorborder[from=rectangle]
\inheritanchor[from=rectangle]{center}
\inheritanchor[from=rectangle]{north}
\inheritanchor[from=rectangle]{south}
\inheritanchor[from=rectangle]{west}
\inheritanchor[from=rectangle]{east}
\backgroundpath{
\southwest \pgf@xa=\pgf@x \pgf@ya=\pgf@y
\northeast \pgf@xb=\pgf@x \pgf@yb=\pgf@y

\pgfmathsetmacro{\pgf@shorten@left}{\pgfkeysvalueof{/tikz/shorten left}}
\pgfmathsetmacro{\pgf@shorten@right}{\pgfkeysvalueof{/tikz/shorten right}}

\pgfpathmoveto{\pgfpoint{0.5 * (\pgf@xa + \pgf@xb)}{\pgf@ya - 5pt}}
\pgfpathlineto{\pgfpoint{\pgf@xa - 8pt + \pgf@shorten@left}{\pgf@yb - 1.5 * \pgf@shorten@left}}
\pgfpathlineto{\pgfpoint{\pgf@xa - 8pt + \pgf@shorten@left}{\pgf@yb}}
\pgfpathlineto{\pgfpoint{\pgf@xb + 8pt - \pgf@shorten@right}{\pgf@yb}}
\pgfpathlineto{\pgfpoint{\pgf@xb + 8pt - \pgf@shorten@right}{\pgf@yb - 1.5 * \pgf@shorten@right}}
\pgfpathclose
}
}

\pgfdeclareshape{cornercopoint}{
\inheritsavedanchors[from=rectangle] 
\inheritanchorborder[from=rectangle]
\inheritanchor[from=rectangle]{center}
\inheritanchor[from=rectangle]{north}
\inheritanchor[from=rectangle]{south}
\inheritanchor[from=rectangle]{west}
\inheritanchor[from=rectangle]{east}
\backgroundpath{
\southwest \pgf@xa=\pgf@x \pgf@ya=\pgf@y
\northeast \pgf@xb=\pgf@x \pgf@yb=\pgf@y

\pgfmathsetmacro{\pgf@shorten@left}{\pgfkeysvalueof{/tikz/shorten left}}
\pgfmathsetmacro{\pgf@shorten@right}{\pgfkeysvalueof{/tikz/shorten right}}

\pgfpathmoveto{\pgfpoint{0.5 * (\pgf@xa + \pgf@xb)}{\pgf@yb + 5pt}}
\pgfpathlineto{\pgfpoint{\pgf@xa - 8pt + \pgf@shorten@left}{\pgf@ya + 1.5 * \pgf@shorten@left}}
\pgfpathlineto{\pgfpoint{\pgf@xa - 8pt + \pgf@shorten@left}{\pgf@ya}}
\pgfpathlineto{\pgfpoint{\pgf@xb + 8pt - \pgf@shorten@right}{\pgf@ya}}
\pgfpathlineto{\pgfpoint{\pgf@xb + 8pt - \pgf@shorten@right}{\pgf@ya + 1.5 * \pgf@shorten@right}}
\pgfpathclose
}
}

\makeatother

\pgfkeyssetvalue{/tikz/shorten left}{0pt}
\pgfkeyssetvalue{/tikz/shorten right}{0pt}

\tikzstyle{kpoint common}=[draw,fill=white,inner sep=1pt,minimum height=4mm]
\tikzstyle{kpoint sc}=[shape=cornerpoint,kpoint common]
\tikzstyle{kpoint adjoint sc}=[shape=cornercopoint,kpoint common]
\tikzstyle{kpoint}=[shape=cornerpoint,shorten left=5pt,kpoint common]
\tikzstyle{kpoint adjoint}=[shape=cornercopoint,shorten left=5pt,kpoint common]
\tikzstyle{kpoint conjugate}=[shape=cornerpoint,shorten right=5pt,kpoint common]
\tikzstyle{kpoint transpose}=[shape=cornercopoint,shorten right=5pt,kpoint common]
\tikzstyle{kpoint symm}=[shape=cornerpoint,shorten left=5pt,shorten right=5pt,kpoint common]

\tikzstyle{black kpoint}=[shape=cornerpoint,shorten left=5pt,kpoint common,fill=black,font=\color{white}]
\tikzstyle{black kpoint adjoint}=[shape=cornercopoint,shorten left=5pt,kpoint common,fill=black,font=\color{white}]
\tikzstyle{black kpointadj}=[shape=cornercopoint,shorten left=5pt,kpoint common,fill=black,font=\color{white}]

\tikzstyle{black dkpoint}=[shape=cornerpoint,shorten left=5pt,kpoint common,fill=black, doubled,font=\color{white}]
\tikzstyle{black dkpoint adjoint}=[shape=cornercopoint,shorten left=5pt,kpoint common,fill=black, doubled,font=\color{white}]
\tikzstyle{black dkpointadj}=[shape=cornercopoint,shorten left=5pt,kpoint common,fill=black, doubled,font=\color{white}] 

\tikzstyle{kpointdag}=[kpoint adjoint]
\tikzstyle{kpointadj}=[kpoint adjoint]
\tikzstyle{kpointconj}=[kpoint conjugate]
\tikzstyle{kpointtrans}=[kpoint transpose]

\tikzstyle{big kpoint}=[kpoint, minimum width=1.2 cm, minimum height=8mm, inner sep=4pt, text depth=3mm]

\tikzstyle{wide kpoint}=[kpoint, minimum width=1 cm, inner sep=2pt]
\tikzstyle{wide kpointdag}=[kpointdag, minimum width=1 cm, inner sep=2pt]
\tikzstyle{wide kpointconj}=[kpointconj, minimum width=1 cm, inner sep=2pt]
\tikzstyle{wide kpointtrans}=[kpointtrans, minimum width=1 cm, inner sep=2pt]

\tikzstyle{gray kpoint}=[kpoint,fill=gray!50!white]
\tikzstyle{gray kpointdag}=[kpointdag,fill=gray!50!white]
\tikzstyle{gray kpointadj}=[kpointadj,fill=gray!50!white]
\tikzstyle{gray kpointconj}=[kpointconj,fill=gray!50!white]
\tikzstyle{gray kpointtrans}=[kpointtrans,fill=gray!50!white]

\tikzstyle{gray dkpoint}=[kpoint,fill=gray!50!white,doubled]
\tikzstyle{gray dkpointdag}=[kpointdag,fill=gray!50!white,doubled]
\tikzstyle{gray dkpointadj}=[kpointadj,fill=gray!50!white,doubled]
\tikzstyle{gray dkpointconj}=[kpointconj,fill=gray!50!white,doubled]
\tikzstyle{gray dkpointtrans}=[kpointtrans,fill=gray!50!white,doubled]

\tikzstyle{white label}=[draw,fill=white,rectangle,inner sep=0.7 mm]
\tikzstyle{gray label}=[draw,fill=gray!50!white,rectangle,inner sep=0.7 mm]
\tikzstyle{black label}=[draw,fill=black,rectangle,inner sep=0.7 mm]

\tikzstyle{dkpoint}=[kpoint,doubled]
\tikzstyle{wide dkpoint}=[wide kpoint,doubled]
\tikzstyle{dkpointdag}=[kpoint adjoint,doubled]
\tikzstyle{wide dkpointdag}=[wide kpointdag,doubled]
\tikzstyle{dkcopoint}=[kpoint adjoint,doubled]
\tikzstyle{dkpointadj}=[kpoint adjoint,doubled]
\tikzstyle{dkpointconj}=[kpoint conjugate,doubled]
\tikzstyle{dkpointtrans}=[kpoint transpose,doubled]

\tikzstyle{kscalar}=[kpoint common, shape=EBox, inner xsep=-1pt, inner ysep=3pt,font=\small]
\tikzstyle{kscalarconj}=[kpoint common, shape=WBox, inner xsep=-1pt, inner ysep=3pt,font=\small]

\tikzstyle{spekpoint}=[kpoint sc,minimum height=5mm,inner sep=3pt]
\tikzstyle{spekcopoint}=[kpoint adjoint sc,minimum height=5mm,inner sep=3pt]

\tikzstyle{dspekpoint}=[spekpoint,doubled]
\tikzstyle{dspekcopoint}=[spekcopoint,doubled]

 \tikzstyle{discard}=[ground,rotate=180,scale=1.5,inner sep=-2mm]
 \tikzstyle{downground}=[circuit ee IEC,thick,ground,rotate=-90,scale=1.5,inner sep=-2mm]

\tikzstyle{maxmix}=[regular polygon,regular polygon sides=3,draw=black,xscale=0.4,yscale=0.3,inner sep=-0.5pt,minimum width=10mm,fill=gray,regular polygon rotate=180]

 \tikzstyle{bigground}=[regular polygon,regular polygon sides=3,draw=gray,scale=0.50,inner sep=-0.5pt,minimum width=10mm,fill=gray]

\tikzstyle{arrs}=[-latex,font=\small,auto]
\tikzstyle{arrow plain}=[arrs]
\tikzstyle{arrow dashed}=[dashed,arrs]
\tikzstyle{arrow bold}=[very thick,arrs]
\tikzstyle{arrow hide}=[draw=white!0,-]
\tikzstyle{arrow reverse}=[latex-]
\tikzstyle{cdnode}=[]

\tikzstyle{green dashed arrow}=[green, arrow dashed]
\tikzstyle{dashed blue}=[blue, dashed]
\tikzstyle{red dashed arrow}=[red, arrow dashed]
\tikzstyle{orange arrow}=[orange, arrs]
\tikzstyle{blue arrow}=[blue, arrs]
\tikzstyle{magenta arrow}=[magenta, arrs]

\tikzstyle{dotted line}=[-, style=dotted, tikzit draw=brown]
\tikzstyle{dashed line}=[-, style=dashed, tikzit draw=cyan]
\tikzstyle{green fill line}=[-, fill={green!90!black}, tikzit draw=green]
\tikzstyle{blue fill}=[-, fill=blue, tikzit fill=blue, tikzit draw={rgb,255: red,102; green,117; blue,255}]
\tikzstyle{red}=[-, draw=red, tikzit draw=red]
\tikzstyle{blue}=[-, draw=blue, tikzit draw=blue]
\tikzstyle{thick black}=[-, draw=black, tikzit draw=black, line width=1pt]
\tikzstyle{dotted red}=[-, draw=red, style=dotted, tikzit draw=red]
\tikzstyle{dotted blue}=[-, draw=blue, tikzit draw=blue, style=dotted]
\tikzstyle{dashed thick blue}=[-, draw={rgb,255: red,28; green,176; blue,255}, tikzit draw={rgb,255: red,83; green,19; blue,156}, line width=1pt, style=dashed]
\tikzstyle{dashed thick red}=[-, draw=red, tikzit draw={rgb,255: red,255; green,100; blue,10}, line width=1pt, style=dashed]
\tikzstyle{green}=[-, draw=green, tikzit draw=green]
\tikzstyle{dotted green}=[-, draw=green, tikzit draw=green, style=dotted]
\tikzstyle{arrow}=[->]
\tikzstyle{arrow green dashed}=[draw=green, ->, tikzit draw=green, style=dashed]
\tikzstyle{arrow dashed red}=[draw=red, ->, style=dashed, tikzit draw=red]
\tikzstyle{dashed green}=[-, tikzit draw=green, draw=green, style=dashed]

\DeclareFontSeriesDefault[rm]{bf}{b}
\newcommand{\defn}[1]{\textbf{#1}}

\usepackage{abstract}
\AtBeginDocument{} 

\usepackage[font=small,labelfont=it]{caption}

\usepackage{tikz}
\usepackage{lipsum}

\usepackage{amsthm, bm}
\usepackage{mdframed}

\theoremstyle{definition}
\newtheorem{example}{Example}[section]
\newtheorem{definition}{Definition}[section]
\newtheorem{remark}{Remark}[section]
\theoremstyle{plain}
\newtheorem{theorem}{Theorem}[section]

\newtheorem{proposition}{Proposition}[section]

\newenvironment{examplebox}
  {\begin{mdframed}[
     linewidth=0.8pt,
     roundcorner=5pt,
     innertopmargin=1pt,
     innerbottommargin=12pt,
     innerleftmargin=8pt,
     innerrightmargin=8pt
   ]
   \begin{example}}
  {\end{example}
   \end{mdframed}}

\newenvironment{remarkbox}
  {\begin{mdframed}[
     linewidth=0.8pt,
     roundcorner=5pt,
     innertopmargin=1pt,
     innerbottommargin=12pt,
     innerleftmargin=8pt,
     innerrightmargin=8pt
   ]
   \begin{remark}}
  {\end{remark}
   \end{mdframed}}

\newcommand{\gen}[1]{\left\langle #1 \right\rangle}

\newcommand{\ca}{\mathcal A}
\newcommand{\cb}{\mathcal B}
\newcommand{\cc}{\mathcal C}
\newcommand{\cd}{\mathcal D}

\newcommand{\ch}{\mathcal H}
\newcommand{\ci}{\mathcal I}

\newcommand{\cl}{\mathcal L}

\newcommand{\cs}{\mathcal S}

\newcommand{\cu}{\mathcal U}
\newcommand{\cv}{\mathcal V}
\newcommand{\cw}{\mathcal W}
\newcommand{\cx}{\mathcal X}

\newcommand{\bA}{{\bm A}}
\newcommand{\bB}{{\bm B}}
\newcommand{\bC}{{\bm C}}

\newcommand{\bF}{{\bm F}}
\newcommand{\bG}{{\bm G}}
\newcommand{\bH}{{\bm H}}
\newcommand{\bI}{{\bm I}}
\newcommand{\bJ}{{\bm J}}
\newcommand{\bK}{{\bm K}}
\newcommand{\bM}{{\bm M}}
\newcommand{\bN}{{\bm N}}
\newcommand{\bO}{{\bm O}}
\newcommand{\bP}{{\bm P}}

\newcommand{\bR}{{\bm R}}
\newcommand{\bS}{{\bm S}}
\newcommand{\bT}{{\bm T}}

\newcommand{\bX}{{\bm X}}
\newcommand{\bY}{{\bm Y}}
\newcommand{\bZ}{{\bm Z}}

\DeclareMathOperator{\spn}{span}
\newcommand{\quadand}{\quad\text{and}\quad}
\newcommand{\qquadand}{\qquad\text{and}\qquad}

\usepackage{bettercenternot}
\newcommand{\ninf}[1]{\centernotsmall{\xrightarrow{#1}}}
\DeclareMathOperator{\CNOT}{CNOT}

\usepackage[normalem]{ulem}

\usepackage{authblk}

\title{\huge \bf Decoherence without the state \\ [0.5em] \Large A  causal quantum Darwinist approach}
\author[1]{Nick Ormrod\footnote{normrod@perimeterinstitute.ca}}
\author[1]{Tein van der Lugt}
\author[1,2]{Yìlè Yīng}
\author[3]{Jarosław K. Korbicz}
\affil[1]{\small Perimeter Institute for Theoretical Physics,  31 Caroline Street North, Waterloo, Ontario, N2L 2Y5, Canada}
\affil[2]{\small Department of Physics and Astronomy, University of Waterloo, Waterloo, Ontario, N2L 3G1, Canada}
\affil[3]{\small Center for Theoretical Physics, Polish Academy of Sciences, Lotników 32/46, 02-668 Warszawa, Poland}
\date{\today}

\begin{document}
    \maketitle

\begin{abstract}
    The consistent histories formalism can be used to describe histories comprised of many events associated with a variety of systems, times, and places, plausibly rich enough to describe our experiences of the classical world; however, many consistent history sets are nonclassical and bear no obvious relevance to our experiences. 
    On the other hand, the program of environmentally induced decoherence identifies classical degrees of freedom that are dynamically privileged, but does not itself provide a general account of when or how many such degrees of freedom can be consistently combined to form histories.         
    This work shows that the strengths of these two approaches can be combined by adopting a \textit{dynamics-first} perspective on decoherence. 
    Drawing inspiration from both quantum causal models and quantum Darwinism, we define the \textit{process} of decoherence in terms of the causal influences through unitary dynamics required for the redundant proliferation of information about observables.
    This approach characterises decoherence as a property of the unitary dynamics alone; it does not presuppose the existence of any quantum state.
    Instead, we show that the quantum state emerges from \emph{dual decoherence}: the process related to decoherence by time-reversal of the unitary dynamics.
    Indeed, for any set of systems in an arbitrary unitary circuit, we show that decoherence and its dual single out a unique privileged consistent history set---and we demonstrate through examples that in the privileged history set, states emerge from dual decoherence while outcomes emerge from decoherence.
    The interplay between the emergent states and outcomes makes the history set rich enough to describe complex experimental scenarios.
    Hence the idea that quantum states themselves emerge from the process of decoherence turns out to be the key missing ingredient for unifying environmentally induced decoherence and consistent histories. 
    Moreover, we show that taking this idea ontologically seriously leads to a recently proposed causal interpretation of quantum theory, or else to a dynamics-first version of the Everett interpretation.
    Finally, the causal approach developed here sheds light on the relation between decoherence and certain concepts commonly associated with it, such as the suppression of off-diagonal terms, time asymmetry, robustness of the pointer basis, the predictability sieve, and noiseless subsystems.
\end{abstract}

\clearpage
\tableofcontents

\clearpage
\section{Introduction} \label{sec:intro}

Since the pioneering works of Zeh \cite{zeh1970interpretation, zeh1973toward} and Zurek \cite{zurek1981pointer, zurek1982environment} half a century ago, decoherence has been investigated intensively, and is now widely believed to play an essential role in the emergence of classicality and the appearance of definite measurement outcomes. Despite this, a precise and widely adopted definition of decoherence remains lacking.

One common approach defines decoherence in terms of the suppression of off-diagonal terms in a density matrix. But as emphasized by Zurek,
\begin{quote}
    diagonality alone is only a symptom—and not a cause—of the effective classicality of the preferred states\ldots\ Causes of diagonality may, on occasion, differ from those relevant for the dynamics of the process of decoherence\ldots \cite{Zurek:1994zq}
\end{quote}
A different tradition defines decoherence in terms of the suppression of interference between alternative \textit{histories}, or, more formally, the \textit{consistency} of a history set \cite{griffiths1984consistent, griffiths2003consistent, gell1990complexity}. But as argued by Dowker and Kent \cite{dowker1996consistent}, consistency can be satisfied by history sets that are nonclassical. Like diagonality, consistency is a symptom of decoherence---and does not define the underlying dynamical \textit{process of decoherence}.

This paper will give a precise formulation of the process of decoherence independently of its kinematical symptoms, and explore its implications for the emergence of classicality.
Combining insights from two previously unrelated fields---quantum Darwinism~\cite{zurek2003decoherence, Ollivier_2004, zurek2004quantum, Ollivier_2005, Zurek_2009, zurek2025decoherence, korbicz2014objectivity, horodecki2015quantum, Korbicz2021roadstoobjectivity} and quantum causal modelling~\cite{Allen_2017, barrett2020quantum, barrett2021cyclic, ormrod2023causal, ormrod2025causal}---we develop an approach to decoherence that is grounded \emph{purely} in the unitary dynamics, and does not presuppose the existence of any quantum state.
This leads to a striking conclusion: the quantum state itself emerges from the unitary dynamics. Not only the diagonality of the state, but its very existence is a symptom of the process of decoherence.
The idea that not only outcomes but also states are emergent then proves its worth by allowing us to derive a unique dynamically privileged consistent history set in a highly general setting, thus unifying the consistent histories approach to decoherence with the program of environmentally induced decoherence associated with Zeh and Zurek.

Our starting point is the notion of a \emph{causal influence} between quantum degrees of freedom, drawn from the literature on quantum causal modelling, which is defined directly in terms of the unitary dynamics via noncommutation in the Heisenberg picture.
(Note that this notion of causal influence is not an emergent thermodynamic phenomenon, but a property of the unitary dynamics that turns out to be time-symmetric.)
Importantly, the presence or absence of a causal influence does not depend on the quantum state.

Inspired also by quantum Darwinism, we define decoherence in terms of the causal influences through a unitary interaction between a system and one fragment of its environment required for an observable to both ``reproduce'' and ``survive''---more precisely, for information about the observable to both be transmitted into the fragment and remain available for further transmission into other fragments.
Decohered observables are precisely those that both survive and reproduce, and thus have the potential to redundantly proliferate throughout the environment. They turn out to form a commutative algebra, singling out a preferred classical degree of freedom.

The absence of quantum states from our fundamental analysis leads us to the notion of \emph{dual decoherence}, related to decoherence by time reversal of the unitary dynamics.
Like decoherence proper, dual decoherence selects a preferred commutative algebra. While decoherence can be understood as the potential for the proliferation of information about observables, dual decoherence can be understood as the potential for the \textit{redundant implementation of generators} of reversible transformations. As we will show, decoherence leads to the emergence of outcome-like events, while dual decoherence leads to the emergence of state-like events.

We use the causal definition of decoherence and dual decoherence to combine the strengths of the two major approaches to emergent classicality: the environmentally induced approach and consistent histories. The latter is capable of describing, in a highly general setting, how many classical degrees of freedom associated with various systems, times, and places collectively emerge from the underlying quantum reality, comprising full classical \textit{histories} that admit a natural probability distribution, and explaining our experience of a complex but integrated classical world. However, as Dowker and Kent~\cite{dowker1996consistent} point out, not all consistent history sets are classical in any strong sense or bear any obvious physical significance or relation to our experiences. 
On the other hand, environmentally induced decoherence identifies individual dynamically privileged degrees of freedom that are plausibly relevant to our experience, but does not itself derive, for a general unitary process, privileged consistent history sets rich enough to account for our experience of the classical world.

Using the causal approach, we show that for an arbitrary set of systems of interest in an arbitrary unitary circuit, a unique consistent history set is privileged by the dynamics, which may be rich enough to describe an arbitrarily complicated experimental scenario. The privileged history set consists of events half of which are outcome-like events emerging from decoherence, and half of which are state-like events from dual decoherence. It is the interplay between these two types of event that allows us to reproduce the quantum predictions. Hence the idea that quantum states are emergent turns out to be the essential missing ingredient required for a derivation of privileged consistent history sets from decoherence in a highly general setting. 

This brings us to the fundamental message of this work, expressed in its title albeit in a slightly exaggerated form. We do not deny that the quantum state plays an important role in the emergence of the classical world; indeed, the classical world partly consists of events corresponding to preparations of quantum states. However, on our account, the classical world does not emerge from a decohering quantum state, nor a state of any kind. Instead, it emerges directly from the unitary dynamics, and more precisely from their causal structure.

On this approach, there is simply no need to posit the existence of any quantum state on which the unitary dynamics act. Indeed, in our derivation of consistent history sets, no states appear except for those that emerge out of the dynamics; there is no initial state on which the dynamics act. This suggests that the dynamics should thus not be thought of as a rule for transforming states, but as a \enquote{thing in itself} out of which states emerge. That this is a consistent way of thinking about the dynamics is confirmed by the causal interpretation of quantum theory recently introduced in \cite{ormrod2024quantum}, which we show in this paper is naturally expressed as a theory by which histories emerge from (causal) decoherence. 

Other results in this paper include derivations of familiar kinematical symptoms of decoherence such as off-diagonal suppression from the dynamical process; a discussion of the relationship between decoherence and time (a)symmetry; and a critical discussion of the use of robustness (i.e.\ dynamical invariance) as a criterion for identifying preferred states. In the appendices, we connect our approach with another criterion for preferred states known as the \textit{predictability sieve}~\cite{Zurek:1994zq} and extend our analysis of decoherence to Hamiltonians (rather than unitary channels).
Key limitations include the assumptions throughout that (1) all systems are finite-dimensional and that (2) decoherence is \enquote{ideal} in the sense that decohered observables survive completely unscathed by their interaction with the environment, rather than being allowed to leak partially into the environment. 
Relaxing these simplifying assumptions remains an important direction for future work.

The rest of this paper is structured as follows.
After \cref{sec:notation} sets up notations, \cref{sec:channel} provides a causal analysis of the process of decoherence in the simple context of a two-input, two-output unitary interaction between a system and a fragment of its environment. \Cref{sec:symptoms} uses this causal characterization to clarify the relationship between the process of decoherence and its symptoms. This includes a discussion of the suppression of off-diagonal entries and time asymmetry. It also discusses how robustness is related to our analysis.
\Cref{sec:dual} introduces dual decoherence and gives a heuristic account of how it gives rise to states.
This account is then made more precise in \cref{sec:circuit}, in which the causal approach is extended beyond bipartite unitary interactions to the larger context of unitary circuits.
It is in this section that we give a general derivation of consistent history sets from decoherence, aided by three concrete examples: a single measurement, two noncommuting measurements performed in sequence, and the Wigner's friend scenario.
The resulting formalism is applied in \cref{sec:when} to answer some common questions about when exactly a measurement takes place, and in \cref{sec:crqt} to reformulate the interpretation from \cite{ormrod2024quantum} and motivate a ``dynamics-first'' version of the Everett interpretation. Finally, \cref{sec:limitations} discusses limitations and directions for future work before \cref{sec:conc} concludes the paper.

\section{Notation} \label{sec:notation}

We denote a quantum system by a boldface letter, e.g. $\bA$, using the same letter to denote the algebra of linear operators $\bA  =  \cl(\ch_\bA)$ on the Hilbert space $\ch_\bA$ of the system. Throughout this work, we only consider finite-dimensional systems.

An operator in $\bA$ is denoted by $M_\bA$ and is sometimes referred to as an \enquote{$\bA$-operator}. When taking a product of operators that act locally on different systems, e.g.\
\begin{equation} \label{eq:prod}
    (M_\bA \otimes I_\bB)(I_\bA \otimes N_\bB),
\end{equation}
we often adopt the shorthand of leaving tensor products with identity operators implicit, so that \cref{eq:prod} becomes
\begin{equation}
    M_\bA N_\bB.
\end{equation}
With this convention, $\bA\bB$ denotes the tensor product of the operator algebras $\bA$ and $\bB$, i.e.\ 
\begin{equation}
    \begin{split}
        \bA \bB &= (\bA \otimes I_\bB)(I_\bA \otimes \bB) \\
        &= \bA \otimes \bB,
    \end{split}
\end{equation}
and thus represents the composite system formed by the subsystems $\bA$ and $\bB$.

Closed-system dynamics are described by unitary channels $\cu: \bA \to \bB$, which are of the form $\cu = U(\cdot) U^\dagger$ where $U: \ch_\bA \to \ch_\bB$ is a unitary map.
We adopt the convention that the input systems to a channel are labelled differently from the output systems, even if they have matching dimensions and could therefore be regarded as the \enquote{same system}.

Throughout this work, an \textit{algebra} is a von Neumann algebra. In our finite-dimensional setting, this is any set of operators on a Hilbert space that contains the identity and is closed under complex linear combinations, products, and adjoints.
The commutant of a set of operators $S\subseteq \cl(\ch)$ is denoted by $S' \coloneqq \{ M \in \cl(\ch) \mid \forall N \in S, [M,N]=0 \}$. The \textit{centre} $Z(S)$ of $S$ is the subset of $S$ that commute with all elements of $S$, i.e.\ $Z(S) = S \cap S'$.

We denote by $\gen{M}$ the smallest von Neumann algebra containing the operator $M$; this is the algebra generated by taking the closure of $\{M, I\}$ under adjoints, complex sums and products, and if $M$ is Hermitian it coincides with the double commutant $M''$.
Finally, given any two algebras $\bA$ and $\bB$, $\bA \lor \bB$ denotes the algebra they generate, i.e.\ $\bA \lor \bB = \gen{\bA \cup \bB} = (\bA\cup\bB)''$.

\section{Decoherence: the causal structure of proliferation} \label{sec:channel}

Why, in a world that classical physics ultimately fails to describe, do the objects that we perceive appear classical? Quantum Darwinism~\cite{zurek2003decoherence, Ollivier_2004, zurek2004quantum, Ollivier_2005, Zurek_2009, zurek2025decoherence, korbicz2014objectivity, horodecki2015quantum, Korbicz2021roadstoobjectivity} claims that the key to answering this question lies in the simple fact that we are only able to perceive an object when information about it is capable of proliferating throughout a macroscopic environment.\footnote{Quantum Darwinists often cast the ability to proliferate as a prerequisite for \enquote{objectivity} or “objective existence”. Here we will instead focus on its status as a prerequisite for \enquote{perceivability}. The reason for this shift in terminology is to avoid interpretative or philosophical commitments that are not strictly necessary for the project of explaining the \textit{appearance} of the classical world: if the goal is simply to save the appearances, then it is only necessary to establish that all perceivable, rather than all objectively existing, objects are effectively classical. Our arguments still apply if one makes the stronger assumption that proliferation is required for objectivity.} The fact that I perceive the table implies that there is an imprint of the table in the electromagnetic field just in front of my eyes, but the fact that you see it too implies that a similar imprint is in front of your eyes; imprints of the table are all over the room, in every location from which it can be seen. It can be argued that all perceivable objects share the same property: the ability to interact with their surrounding environment in such a way that information about them becomes redundantly encoded in many different parts of the environment, ready for access by many independent macroscopic observers.

In fact, this ability is arguably required even for a single macroscopic observer to perceive an object. Macroscopic observers have large sense organs that are inevitably not strongly sensitive to individual microscopic degrees of freedom. The imprint of the table through which I alone perceive it is not formed by a single photon, or even a small family, but by a vast array of photons.

It is therefore \textit{not} safe to assume that every object postulated by a theory is something that can be perceived. We do not perceive whatever exists, but only that which has the physical capabilities necessary to be perceived. This suggests that in order to explain why our quantum world appears so classical, one should seek to demonstrate that, according to quantum theory, the only objects capable of informationally proliferating throughout the environment are effectively classical, and that their evolution is well-approximated by classical equations of motion. 

This Darwinist perspective on emergent classicality will guide our characterization of decoherence. But as discussed in the introduction, we will make sure to characterise the \textit{process} of decoherence independently of its state-level symptoms. To this end, we will also draw from recent work on quantum causal models \cite{Allen_2017, barrett2020quantum, ormrod2023causal,ormrod2025causal}. This section will begin to develop a causal quantum Darwinist approach to decoherence by asking:
\begin{quote}
    \textit{What \  quantum \ causal \ influences \ facilitate \ the \ redundant \ proliferation \ of \ information?}
\end{quote}

To refine this question, we consider an arbitrary unitary interaction between a system of interest and a fragment of its environment, represented by a two-input, two-output unitary channel $\cu: \bS\bF \rightarrow \bT \bG$. We think of $\bS$ and $\bT$ as respectively representing the system before and after the interaction and of $\bF$ and $\bG$ as respectively representing the environment fragment before and after the interaction:
\begin{equation}
    \begin{tikzpicture}
	\begin{pgfonlayer}{nodelayer}
		\node [style=none] (0) at (-4.5, -0.5) {};
		\node [style=none] (1) at (4.5, -0.5) {};
		\node [style=none] (2) at (-4.5, 0.5) {};
		\node [style=none] (3) at (4.5, 0.5) {};
		\node [style=none] (4) at (0, 0) {$\cu$};
		\node [style=none] (5) at (-4, -2.5) {};
		\node [style=none] (6) at (-4, -0.5) {};
		\node [style=none] (7) at (-4, 0.5) {};
		\node [style=none] (8) at (-4, 2.5) {};
		\node [style=none] (9) at (4, -2.5) {};
		\node [style=none] (10) at (4, -0.5) {};
		\node [style=none] (11) at (4, 0.5) {};
		\node [style=none] (12) at (4, 2.5) {};
		\node [style=none] (13) at (-3.5, -2) {$\bS$};
		\node [style=none] (14) at (-3.5, 2) {$\bT$};
		\node [style=none] (15) at (4.5, -2) {$\bF$};
		\node [style=none] (16) at (4.5, 2) {$\bG$};
		\node [style=none] (17) at (-1.5, 0.5) {};
		\node [style=none] (18) at (-1.5, 4.5) {};
		\node [style=none] (19) at (-4, 3.5) {\small \color{blue} system};
		\node [style=none] (20) at (4, 3.75) {\small \color{red} environment};
		\node [style=none] (21) at (-1.5, -3.5) {};
		\node [style=none] (22) at (-1.5, -0.5) {};
		\node [style=none] (23) at (-6.5, 4.5) {};
		\node [style=none] (24) at (-6.5, -3.5) {};
		\node [style=none] (33) at (1.5, 0.5) {};
		\node [style=none] (34) at (1.5, 4.5) {};
		\node [style=none] (36) at (1.5, -3.5) {};
		\node [style=none] (37) at (1.5, -0.5) {};
		\node [style=none] (38) at (6.5, 4.5) {};
		\node [style=none] (39) at (6.5, -3.5) {};
		\node [style=none] (40) at (4, 3) {\small \color{red} fragment};
	\end{pgfonlayer}
	\begin{pgfonlayer}{edgelayer}
		\draw (3.center) to (1.center);
		\draw (1.center) to (0.center);
		\draw (0.center) to (2.center);
		\draw (2.center) to (3.center);
		\draw (6.center) to (5.center);
		\draw (8.center) to (7.center);
		\draw (10.center) to (9.center);
		\draw (12.center) to (11.center);
		\draw [dashed, blue] (18.center) to (17.center);
		\draw [dashed, blue] (22.center) to (21.center);
		\draw [dashed, blue] (24.center) to (21.center);
		\draw [dashed, blue] (24.center) to (23.center);
		\draw [dashed, blue] (23.center) to (18.center);
		\draw [dashed, red] (34.center) to (33.center);
		\draw [dashed, red] (37.center) to (36.center);
		\draw [dashed, red] (39.center) to (36.center);
		\draw [dashed, red] (39.center) to (38.center);
		\draw [dashed, red] (38.center) to (34.center);
	\end{pgfonlayer}
\end{tikzpicture}\quad .
\end{equation}
(For the sake of generality, however, we do not assume that ${\rm dim}(\bS) = {\rm dim}(\bT)$ or ${\rm dim}(\bF) = {\rm dim}(\bG)$ except where we say so explicitly.)
We will ask what causal influences $\cu$ must exhibit to help ensure that information about a system observable $M_\bS$ proliferates throughout the environment. 

At the intuitive level, our answer to this question will be: to help $M_\bS$ proliferate, the causal structure of $\cu$ must not only make $M_\bS$ \enquote{reproduce}, i.e.\ transmit information about itself into the environment fragment $\bG$, but also \enquote{survive}, i.e.\ remain available for further transmission into other parts of the environment, via possible future interactions. The rest of this section is devoted to making this answer precise and exploring some of its consequences for the emergence of classicality.

\subsection{Causal influence}
To more precisely answer the question of what causal influences permit proliferation, the first step is to give a formal definition of causal influence. We follow works on quantum causation~\cite{Allen_2017,barrett2020quantum,ormrod2023causal,ormrod2025causal} by defining it as noncommutation in the Heisenberg picture:

\begin{definition}\label{def:influence}
    Consider operators $M \in \bS\bF$ and $N\in \bT\bG$.
    $M$ \defn{influences $N$ through} $\cu$, written $M \xrightarrow{\cu} N$, if $[\cu^{-1}(N), M] \neq 0$. 
\end{definition}

To obtain an operational interpretation of this notion, let us assume that $M$ and $N$ are Hermitian, and interpret $M$ as a \emph{generator} of time evolution and $N$ as an \emph{observable}.
Also assume for the moment that $\cu=\ci$, the identity channel.
Heisenberg's equation reads ${\dot N = -i  [N, M]}$; thus, $M \xrightarrow \ci N$ if and only if implementing the generator $M$ affects the outcome probabilities for a measurement of the observable $N$, given at least one initial state of the composite system.
For a general $\cu$, $M \xrightarrow \cu N$ if and only if implementing $M$ \textit{before} $\cu$ changes the outcome probabilities for a measurement of $N$ carried out \textit{after} $\cu$.%
\footnote{\label{ft:hermitian}One might be concerned that we have defined influence as a relation between any pair of operators, even though we have only stated an operational interpretation for influences between Hermitian operators. After this section's analysis of decoherence is complete, we will refer the reader to \cref{app:hermitian}, which explains that the use of non-Hermitian operators in this section is a harmless mathematical convenience.}

Note that according to \cref{def:influence}, whether or not one operator influences another is a property of the unitary dynamics, and does not depend on the initial quantum state of $\bS\bF$. 

\begin{examplebox}[{Influences through a CNOT}] \label{ex:inf_cnot_op} Letting all systems be qubits, we consider the controlled NOT (CNOT) unitary interaction. It is defined at the Hilbert space level by $\ket k_\bS \ket l_\bF \mapsto  \ket k_\bT \ket{l \oplus k}_\bG$, where $\oplus$ denotes addition modulo 2, and is represented by the circuit diagram
\begin{equation}\label{eq:cnotcirc}
   \begin{tikzpicture}
	\begin{pgfonlayer}{nodelayer}
		\node [style=none] (66) at (1.5, -1.75) {};
		\node [style=none] (67) at (1.5, 1.75) {};
		\node [style=none] (68) at (2, -1.25) {$\bF$};
		\node [style=none] (69) at (1.5, 1.75) {};
		\node [style=none] (71) at (-1.5, -1.75) {};
		\node [style=none] (72) at (-1, -1.25) {$\bS$};
		\node [style=none] (73) at (2, 1.25) {$\bG$};
		\node [style=none] (74) at (-1, 1.25) {$\bT$};
		\node [style=none] (75) at (-1.5, 1.75) {};
		\node [style=none] (76) at (-1.5, 1.75) {};
		\node [style=black dot] (83) at (-1.5, 0) {};
		\node [style=none] (84) at (2, 0) {};
		\node [style=none] (85) at (1, 0) {};
		\node [style=none] (86) at (2, 0) {};
		\node [style=none] (87) at (1, 0) {};
	\end{pgfonlayer}
	\begin{pgfonlayer}{edgelayer}
		\draw (83) to (84.center);
		\draw [bend left=90, looseness=1.75] (85.center) to (84.center);
		\draw [bend right=90, looseness=1.75] (87.center) to (86.center);
		\draw (66.center) to (69.center);
		\draw (71.center) to (76.center);
	\end{pgfonlayer}
\end{tikzpicture}.
\end{equation}
We define the usual Pauli operators,
\begin{equation}
X = \ket{0}\!\bra{1} + \ket{1}\!\bra{0}, \qquad
Y = -i\ket{0}\!\bra{1} + i\ket{1}\!\bra{0}, \qquad
Z = \ket{0}\!\bra{0} - \ket{1}\!\bra{1}.
\end{equation}
To find out which Pauli operators on $\bS$ influence $X_\bG$ and $Z_\bG$ through $\cu$, we pull back the latter two operators to obtain 
\begin{equation}\label{eq:cnot_ze_pullback}
    {\rm CNOT}^{-1}(X_\bG) = X_\bF \qquadand 
    {\rm CNOT}^{-1}(Z_\bG) = Z_\bS Z_\bF. 
\end{equation}
$X_\bF$ commutes with all operators on $\bS$, whereas the only Pauli on $\bS$ that $Z_\bS Z_\bF$ commutes with is $Z_\bS$. Hence
\begin{subequations} \label{eq:inf_cnot} 
\begin{align}
X_\bS &\ninf{\CNOT} X_\bG, \qquad 
Y_\bS \ninf{\CNOT} X_\bG, 
\qquad Z_\bS \ninf{\CNOT} X_\bG, \label{eq:inf_cnot_x} \\
X_\bS &\xrightarrow{\CNOT} Z_\bG,
\qquad Y_\bS \xrightarrow{\CNOT} Z_\bG, 
\qquad Z_\bS \ninf{\CNOT} Z_\bG . \label{eq:inf_cnot_z}
\end{align}
\end{subequations}

Operationally, $X_\bS \xrightarrow{\ \mathrm{CNOT}\ } Z_\bG$ means that an agent who chooses whether to implement the generator $X_\bS$ before the CNOT can signal to an agent who measures the observable $Z_\bG$ after the CNOT (for at least one initial state of $\bS\bF$). This is easily confirmed, since e.g.
\begin{equation}
    \label{eq:cnot_signalling}
    e^{-\frac{iX_\bS\pi}{2}} \ket 0_\bS \ket 0_\bF = \ket  1_\bS \ket 0_\bF \ \mapsto \  \ket 1_\bT \ket 1_\bG,
    \quad \text{whereas} \quad
    \ket 0_\bS \ket 0_\bF \ \mapsto \ \ket 0_\bT \ket 0_\bG.
\end{equation}
On the other hand, $Z_\bS \ninf{\CNOT} Z_\bG $ means that no similar protocol exists using the $Z_\bS$ generator instead of $X_\bS$.

One might have expected that $Z_\bS$ \textit{would} influence $Z_\bG$, since $Z_\bG$ comes to encode information about the observable $Z_\bS$. Recall however that causal influence, as defined by \cref{def:influence}, is a relation by which a \emph{generator} signals to an \emph{observable}, and not a relation by which one observable encodes information about another. (The latter is in fact closer to the notion of \emph{accessibility} that we will soon define.) \end{examplebox}

It is both useful and straightforward to extend \cref{def:influence} to define causal influences from an operator to a \textit{system}, or from one system to another. Recall that the \textit{commutant} $\bA'$ of a subalgebra $\bA \subseteq \bS\bF$ is the set of operators in $\bS\bF$ that commute with all operators in $\bA$.

\begin{definition}
    \label{def:influence2}
    We write
    \begin{itemize}
        \item $M \xrightarrow{\cu} \bG$ if $M\notin\cu^{-1}(\bG)'$, i.e.\ if some $N_\bG\in\bG$ is influenced by $M$;
        \item $\bS \xrightarrow \cu N$ if $\cu^{-1}(N) \notin \bS'$, i.e.\  if there exists an $M_\bS \in \bS$ such that $M_\bS \xrightarrow \cu N$;
        \item $\bS \xrightarrow \cu \bG$ if $\bS\nsubseteq\cu^{-1}(\bG)'$, i.e.\ if there exist an $M_\bS \in \bS$ and an $N_\bG \in \bG$ such that $M_\bS \xrightarrow \cu N_\bG$.
    \end{itemize}
\end{definition}

\begin{examplebox}[Influences on a system] \label{ex:inf_cnot_system}
   In the case of the CNOT of \cref{ex:inf_cnot_op}, the system observables $M_\bS$ that influence the environment fragment $\bG$ are precisely those that fail to commute with $Z_\bS$:
    \begin{equation}  \label{eq:cnot_inf_on_env}
        M_\bS \xrightarrow {\rm CNOT} \bG \qquad \Longleftrightarrow \qquad [M_\bS, Z_\bS] \neq 0.
    \end{equation}
    To verify this, note that the qubit algebra $\bG$ is obtained from the Pauli operators $\{X_\bG, Z_\bG\}$ by taking closure under complex sums and products. Therefore, $\bG' = \{X_\bG, Z_\bG\}'$, and
    \begin{equation}
    \begin{split}
            {\rm CNOT}^{-1}(\bG)' &=\{{\rm CNOT}^{-1}(X_\bG), {\rm CNOT}^{-1} (Z_\bG)\}' \\
            &  \stackrel{\eqref{eq:cnot_ze_pullback}}{=} \{X_\bF, Z_\bS Z_\bF\}'.
    \end{split}
    \end{equation}
    Hence the operators in $\bS$ that do not commute with ${\rm CNOT}^{-1}(\bG)$ are precisely those that do not commute with $Z_\bS$. 
\end{examplebox}

\subsection{Accessibility as \enquote{reproduction}}
\label{sec:reprod}

Having formalized the notion of causal influence, we can now ask: what causal influences are necessary for $\cu$ to allow all information about $M_\bS$ to flow into $\bG$, i.e.\ for $M_\bS$ to \enquote{reproduce} in this fragment? Our answer is given by the notion of \textit{accessibility}, which was introduced and motivated at length in \cite{ormrod2025causal}.

\begin{definition}
    \label{def:accessibility}
    Let $M_\bS\in\bS$. $M_\bS$ is \defn{made accessible to $\bG$ by $\cu$} (in short, $M_\bS$ is \defn{accessible to $\bG$}) if for all $G_\bS\in\bS$ such that $[M_\bS,G_\bS]\neq 0$, we have $G_\bS \xrightarrow{\cu} \bG$.
    We denote the set of all accessible operators by $\bS_{{\rm acc}\bG}^\cu$.
\end{definition}

In other words, an observable $M_\bS$ is made accessible to $\bG$ by $\cu$ if and only if every local generator on $\bS$ that nontrivially transforms $M_\bS$ also influences $\bG$ through $\cu$.
Intuitively, this means that $\bG$ \enquote{notices} any local change made to $M_\bS$.
(For non-Hermitian operators $M_\bS$, see \cref{ft:hermitian}.)
The fact that we consider only generators $G_\bS$ local to $\bS$ is motivated as follows.
$\bF$ and $\bG$ together represent the environment fragment; thus any generator $G\in\bS\bF$ that fails to commute with $\bF$ immediately affects the fragment, even if it does not influence $\bG$.
We can therefore restrict our attention to generators that commute with $\bF$, which are precisely the local generators $G_\bS \in \bF' = \bS$.
($\bF$ thus plays an important role in the notion of accessibility, in addition to $\bG$; however, it does not feature in our notation as it will suffice to think of $\bF$ as \enquote{all input systems besides $\bS$}.)

We propose that accessibility is a necessary condition for all information about $M_\bS$ to reproduce in this fragment.
This is reinforced by the following example.

\begin{examplebox}[Accessibility through a CNOT] \label{ex:accessibility}
    Again, let
    \begin{equation}
       \cu \qquad = \qquad   \begin{tikzpicture}
	\begin{pgfonlayer}{nodelayer}
		\node [style=none] (66) at (1.5, -1.75) {};
		\node [style=none] (67) at (1.5, 1.75) {};
		\node [style=none] (68) at (2, -1.25) {$\bF$};
		\node [style=none] (69) at (1.5, 1.75) {};
		\node [style=none] (71) at (-1.5, -1.75) {};
		\node [style=none] (72) at (-1, -1.25) {$\bS$};
		\node [style=none] (73) at (2, 1.25) {$\bG$};
		\node [style=none] (74) at (-1, 1.25) {$\bT$};
		\node [style=none] (75) at (-1.5, 1.75) {};
		\node [style=none] (76) at (-1.5, 1.75) {};
		\node [style=black dot] (83) at (-1.5, 0) {};
		\node [style=none] (84) at (2, 0) {};
		\node [style=none] (85) at (1, 0) {};
		\node [style=none] (86) at (2, 0) {};
		\node [style=none] (87) at (1, 0) {};
	\end{pgfonlayer}
	\begin{pgfonlayer}{edgelayer}
		\draw (83) to (84.center);
		\draw [bend left=90, looseness=1.75] (85.center) to (84.center);
		\draw [bend right=90, looseness=1.75] (87.center) to (86.center);
		\draw (66.center) to (69.center);
		\draw (71.center) to (76.center);
	\end{pgfonlayer}
\end{tikzpicture}.
    \end{equation}
    \cref{eq:cnot_inf_on_env} implies that among the Pauli operators on $\bS$, only $Z_\bS$ is accessible:
    \begin{equation} \label{eq:acc_example}
            X_\bS \notin \bS_{{\rm acc}\bG}^{\rm CNOT}, 
            \qquad Y_\bS \notin \bS_{{\rm acc}\bG}^{\rm CNOT},  
            \qquad Z_\bS \in \bS_{{\rm acc}\bG}^{\rm CNOT}. 
    \end{equation}
    Intuitively, this is the correct verdict: as \cref{eq:cnot_signalling} shows, one can signal to $\bG$ through the CNOT by choosing a $Z_\bS$ eigenstate, but no similar signalling protocols exist for $X_\bS$ or $Y_\bS$ eigenstates.
\end{examplebox}

However, we do not claim that $M_\bS \in \bS_{{\rm acc}\bG}^\cu$ is also a \textit{sufficient} condition for reproduction. Indeed, whether or not any information about $M_\bS$ actually flows into $\bG$ may depend on the initial state of $\bF$. For instance, in \cref{ex:accessibility}, one can prevent information about $Z_\bS$ flowing into $\bG$ by preparing $\bF$ in the state $\ket +_\bF$. (In \cref{sec:circuit}, however, we will see that such initial states can themselves be seen as emerging from prior unitary interactions, meaning that whether information actually flows ultimately depends only on the unitary dynamics and that which emerges from them.)

\subsection{Potential accessibility as \enquote{survival}}\label{subsec:pacc}

Recall that the question of this section is what quantum causal influences facilitate \textit{proliferation} of information about an observable. And proliferation requires not just a single act of reproduction, but many. This means that for the interaction $\cu$ to help $M_\bS$ proliferate, it must not only make $M_\bS$ reproduce in $\bG$, but also \textit{survive} this act of reproduction. By \enquote{survival}, we mean $M_\bS$ living on after $\cu$ as an observable capable of further reproduction via possible future interactions between the system and \textit{other} fragments of the environment.

We therefore consider a subsequent unitary interaction $\cv: \bT\bH \rightarrow \bR \bI$ between the system $(\bT, \bR)$ and another fragment of the environment $(\bH, \bI)$ (but \textit{not} involving the original fragment). The net result is a three-input, three-output unitary interaction
\begin{equation} \label{eq:potential_acc}.
    \begin{tikzpicture}
	\begin{pgfonlayer}{nodelayer}
		\node [style=none] (0) at (-13.75, -0.75) {};
		\node [style=none] (1) at (-13.75, 0.75) {};
		\node [style=none] (2) at (-2, 0.75) {};
		\node [style=none] (3) at (-2, -0.75) {};
		\node [style=none] (4) at (-8, 0) {$\cw(\cu, \cv)$};
		\node [style=none] (5) at (-13, -4.25) {};
		\node [style=none] (6) at (-13, 4.25) {};
		\node [style=none] (7) at (-13, 0.75) {};
		\node [style=none] (8) at (-13, -0.75) {};
		\node [style=none] (9) at (-2.75, -4.25) {};
		\node [style=none] (10) at (-2.75, -0.75) {};
		\node [style=none] (11) at (-2.75, 0.75) {};
		\node [style=none] (12) at (-2.75, 4.25) {};
		\node [style=none] (13) at (-12.5, -3.75) {$\bH$};
		\node [style=none] (14) at (-2.25, -3.75) {$\bF$};
		\node [style=none] (15) at (-13, 4.25) {};
		\node [style=none] (16) at (-2.75, 4.25) {};
		\node [style=none] (17) at (-8, 0.75) {};
		\node [style=none] (18) at (-8, 4.25) {};
		\node [style=none] (19) at (-8, -0.75) {};
		\node [style=none] (20) at (-8, -4.25) {};
		\node [style=none] (21) at (-7.5, -3.75) {$\bS$};
		\node [style=none] (22) at (-12.5, 3.75) {$\bI$};
		\node [style=none] (23) at (-2.25, 3.75) {$\bG$};
		\node [style=none] (24) at (-7.5, 3.75) {$\bR$};
		\node [style=none] (25) at (5.5, -2.75) {};
		\node [style=none] (26) at (5.5, -1.25) {};
		\node [style=none] (27) at (11, -1.25) {};
		\node [style=none] (28) at (11, -2.75) {};
		\node [style=none] (29) at (8.25, -2) {$\cu$};
		\node [style=none] (30) at (10.25, -4.25) {};
		\node [style=none] (31) at (10.25, -2.75) {};
		\node [style=none] (32) at (10.25, -1.25) {};
		\node [style=none] (33) at (10.25, 4.25) {};
		\node [style=none] (34) at (10.75, -3.75) {$\bF$};
		\node [style=none] (35) at (10.25, 4.25) {};
		\node [style=none] (36) at (6.25, 0) {};
		\node [style=none] (37) at (6.25, -2.75) {};
		\node [style=none] (38) at (6.25, -4.25) {};
		\node [style=none] (39) at (6.75, -3.75) {$\bS$};
		\node [style=none] (40) at (10.75, 3.75) {$\bG$};
		\node [style=none] (41) at (6.75, 0) {$\bT$};
		\node [style=none] (42) at (1.5, 1.25) {};
		\node [style=none] (43) at (1.5, 2.75) {};
		\node [style=none] (44) at (7, 2.75) {};
		\node [style=none] (45) at (7, 1.25) {};
		\node [style=none] (46) at (4.25, 2) {$\cv$};
		\node [style=none] (47) at (6.25, -1.25) {};
		\node [style=none] (48) at (6.25, 1.25) {};
		\node [style=none] (49) at (6.25, 2.75) {};
		\node [style=none] (50) at (6.25, 4.25) {};
		\node [style=none] (51) at (6.25, 4.25) {};
		\node [style=none] (52) at (2.25, 2.75) {};
		\node [style=none] (53) at (2.25, 4.25) {};
		\node [style=none] (54) at (2.25, 1.25) {};
		\node [style=none] (55) at (2.25, -4.25) {};
		\node [style=none] (56) at (2.75, -3.75) {$\bH$};
		\node [style=none] (57) at (6.75, 3.75) {$\bR$};
		\node [style=none] (58) at (2.75, 3.75) {$\bI$};
		\node [style=none] (59) at (0, 0) {$\coloneqq$};
		\node [style=none] (78) at (-6, 6.25) {};
		\node [style=none] (79) at (-8, 5.25) {\small \color{blue} sys.};
		\node [style=none] (80) at (-2.75, 5.25) {\small \color{red} frag.\ 1};
		\node [style=none] (81) at (-6, -5) {};
		\node [style=none] (83) at (-10, 6.25) {};
		\node [style=none] (84) at (-10, -5) {};
		\node [style=none] (86) at (-5, 6.25) {};
		\node [style=none] (87) at (-5, -5) {};
		\node [style=none] (89) at (-1, 6.25) {};
		\node [style=none] (90) at (-1, -5) {};
		\node [style=none] (92) at (-6, 6.25) {};
		\node [style=none] (93) at (-6, -5) {};
		\node [style=none] (94) at (-5, 6.25) {};
		\node [style=none] (95) at (-5, -5) {};
		\node [style=none] (96) at (-13, 5.25) {\small \color{red} frag.\ 2};
		\node [style=none] (97) at (-15, 6.25) {};
		\node [style=none] (98) at (-15, -5) {};
		\node [style=none] (99) at (-11, 6.25) {};
		\node [style=none] (100) at (-11, -5) {};
		\node [style=none] (101) at (-15, 6.25) {};
		\node [style=none] (102) at (-15, -5) {};
	\end{pgfonlayer}
	\begin{pgfonlayer}{edgelayer}
		\draw (1.center) to (2.center);
		\draw (2.center) to (3.center);
		\draw (3.center) to (0.center);
		\draw (0.center) to (1.center);
		\draw (8.center) to (5.center);
		\draw (7.center) to (6.center);
		\draw (12.center) to (11.center);
		\draw (10.center) to (9.center);
		\draw (18.center) to (17.center);
		\draw (19.center) to (20.center);
		\draw (26.center) to (27.center);
		\draw (27.center) to (28.center);
		\draw (28.center) to (25.center);
		\draw (25.center) to (26.center);
		\draw (33.center) to (32.center);
		\draw (31.center) to (30.center);
		\draw (37.center) to (38.center);
		\draw (43.center) to (44.center);
		\draw (44.center) to (45.center);
		\draw (45.center) to (42.center);
		\draw (42.center) to (43.center);
		\draw (50.center) to (49.center);
		\draw (48.center) to (47.center);
		\draw (53.center) to (52.center);
		\draw (54.center) to (55.center);
		\draw [dashed, blue] (84.center) to (81.center);
		\draw [dashed, blue] (84.center) to (83.center);
		\draw [dashed, blue] (83.center) to (78.center);
		\draw [dashed, red] (90.center) to (87.center);
		\draw [dashed, red] (90.center) to (89.center);
		\draw [dashed, red] (89.center) to (86.center);
		\draw [dashed, blue] (93.center) to (92.center);
		\draw [dashed, red] (95.center) to (94.center);
		\draw [dashed, red] (100.center) to (98.center);
		\draw [dashed, red] (100.center) to (99.center);
		\draw [dashed, red] (99.center) to (97.center);
		\draw [dashed, red] (102.center) to (101.center);
	\end{pgfonlayer}
\end{tikzpicture},
\end{equation}
where, as this equation indicates, we are thinking of $\bR$ as the system of interest after $\cv$.
We now want to ask what system observables are accessible to the second fragment.

We cannot directly apply \cref{def:accessibility}, since it presupposes a \textit{bipartite} interaction.
To address this, we will group some of our systems together.
Recall from \cref{sec:reprod} that to determine whether an observable $M_\bS$ is accessible to a fragment $(\bH,\bI)$, we need to consider generators that nontrivially transform $M_\bS$ and that do not immediately affect $\bH$, i.e.\ that are in $\bH' = \bS\bF$.
These generators are then required to influence $\bI$.
Therefore the appropriate choice is to treat $\bS\bF$ and $\bR\bG$ as composite systems:
\begin{equation} \label{eq:uni_regrouping}
    \begin{tikzpicture}
	\begin{pgfonlayer}{nodelayer}
		\node [style=none] (0) at (-6, -0.75) {};
		\node [style=none] (1) at (-6, 0.75) {};
		\node [style=none] (2) at (6, 0.75) {};
		\node [style=none] (3) at (6, -0.75) {};
		\node [style=none] (4) at (0, 0) {$\cw(\cu, \cv)$};
		\node [style=none] (5) at (-5, -4.25) {};
		\node [style=none] (6) at (-5, 4.25) {};
		\node [style=none] (7) at (-5, 0.75) {};
		\node [style=none] (8) at (-5, -0.75) {};
		\node [style=none] (13) at (-4.5, -3.75) {$\bH$};
		\node [style=none] (15) at (-5, 4.25) {};
		\node [style=none] (17) at (2.25, 0.75) {};
		\node [style=none] (18) at (2.25, 4.25) {};
		\node [style=none] (19) at (2.25, -0.75) {};
		\node [style=none] (20) at (2.25, -4.25) {};
		\node [style=none] (21) at (3, -3.75) {$\bS \bF$};
		\node [style=none] (22) at (-4.5, 3.75) {$\bI$};
		\node [style=none] (24) at (3, 3.75) {$\bR \bG$};
		\node [style=none] (78) at (7, 6.25) {};
		\node [style=none] (79) at (2.25, 5.25) {\small \color{blue} the rest};
		\node [style=none] (81) at (7, -5) {};
		\node [style=none] (83) at (-2, 6.25) {};
		\node [style=none] (84) at (-2, -5) {};
		\node [style=none] (92) at (7, 6.25) {};
		\node [style=none] (93) at (7, -5) {};
		\node [style=none] (96) at (-5, 5.25) {\small \color{red} frag.\ 2};
		\node [style=none] (97) at (-7, 6.25) {};
		\node [style=none] (98) at (-7, -5) {};
		\node [style=none] (99) at (-3.25, 6.25) {};
		\node [style=none] (100) at (-3.25, -5) {};
		\node [style=none] (101) at (-7, 6.25) {};
		\node [style=none] (102) at (-7, -5) {};
	\end{pgfonlayer}
	\begin{pgfonlayer}{edgelayer}
		\draw (1.center) to (2.center);
		\draw (2.center) to (3.center);
		\draw (3.center) to (0.center);
		\draw (0.center) to (1.center);
		\draw (8.center) to (5.center);
		\draw (7.center) to (6.center);
		\draw (18.center) to (17.center);
		\draw (19.center) to (20.center);
		\draw [dashed, blue] (84.center) to (81.center);
		\draw [dashed, blue] (84.center) to (83.center);
		\draw [dashed, blue] (83.center) to (78.center);
		\draw [dashed, blue] (93.center) to (92.center);
		\draw [dashed, red] (100.center) to (98.center);
		\draw [dashed, red] (100.center) to (99.center);
		\draw [dashed, red] (99.center) to (97.center);
		\draw [dashed, red] (102.center) to (101.center);
	\end{pgfonlayer}
\end{tikzpicture}\ .
\end{equation}
We may then apply \cref{def:accessibility} to obtain a set of accessible observables $\bS\bF_{{\rm acc}\bI}^{\cw(\cu, \cv)} \subseteq \bS\bF$.

For a given $M_\bS$ and $\cu$, we may now define what it means for $M_\bS$ to remain \emph{potentially accessible} to other environment fragments after $\cu$---i.e.\ for it to \enquote{survive} $\cu$: we ask that there exists any subsequent interaction $\cv$ with any fragment $(\bH,\bI)$ so that $M_\bS$ would reproduce into that fragment.

\begin{definition} \label{def:potential_accessibility}
    Let $M_\bS\in\bS$.
    $\bG$ leaves $M_\bS$ \defn{potentially accessible} after $\cu$ if there exist systems $\bH,\bI$ and a unitary interaction $\cv: \bT\bH \rightarrow \bR \bI$ such that $M_\bS \in \bS\bF_{{\rm acc}\bI}^{\cw(\cu, \cv)}$. 
    We denote the set of all potentially accessible operators by $\bS_{{\rm pacc}\bG}^\cu$. 
\end{definition}

Interestingly, potential accessibility turns out to be equivalent to non-influence.

\begin{theorem}\label{thm:pacc_ninf}
    For any unitary channel $\cu: \bS\bF \to \bT\bG$ and observable $M_\bS \in \bS$, the following are equivalent:
    \begin{enumerate}
        \item\label{itm:pacc} $M_\bS$ is left potentially accessible by $\bG$ after $\cu$, i.e.\ $M_\bS \in \bS_{{\rm pacc}\bG}^\cu$;
        \item\label{itm:ninf} $M_\bS$ does not influence $\bG$, i.e.\ $M_\bS \ninf\cu \bG$;
        \item\label{itm:autonomy} $M_\bS$ does not leak into the environment, i.e.\ $\cu(M_\bS) \in \bT$.
    \end{enumerate}
    In other words, $\bS_{{\rm pacc}\bG}^\cu = \cu^{-1}(\bG)' \cap \bS = \cu^{-1}(\bT) \cap \bS$.
\end{theorem}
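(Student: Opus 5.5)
I will prove the equivalences in the order (2)$\Leftrightarrow$(3), then (3)$\Rightarrow$(1), then (1)$\Rightarrow$(3).

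\textbf{(2)$\Leftrightarrow$(3).} This step is a commutant computation. Since $\cu$ is an algebra isomorphism $\bS\bF\to\bT\bG$ and $\bT\bG=\bT\otimes\bG$, we have $\bG'=\bT$ inside $\bT\bG$. Hence
\[
\cu^{-1}(\bG)'=\cu^{-1}(\bG')=\cu^{-1}(\bT).
\]
Therefore $M_\bS\ninf\cu\bG$, i.e.\ $M_\bS\in\cu^{-1}(\bG)'$, holds if and only if $\cu(M_\bS)\in\bT$. This also yields the final set equality once (1) is included.

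\textbf{(3)$\Rightarrow$(1).} I will construct $\cv$ explicitly as a swap. Take $\bH\cong\bT$, $\bI\cong\bT$ and $\bR\cong\bH$, and let $\cv:\bT\bH\to\bR\bI$ send $\bT$ identically onto $\bI$ and $\bH$ onto $\bR$. Then
\[
\cw^{-1}(\bI)=\cu^{-1}(\bT)\subseteq\bS\bF .
\]
Take any $G\in\bH'=\bS\bF$ with $[M_\bS,G]\neq0$. We need $G\xrightarrow{\cw}\bI$, i.e.\ $G\notin\cu^{-1}(\bT)'$. Because $\cu(M_\bS)\in\bT$, we have $M_\bS\in\cu^{-1}(\bT)$. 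So if $G$ commuted with all of $\cu^{-1}(\bT)$, it would commute with $M_\bS$, which is a contradiction. Hence $M_\bS$ is accessible to $\bI$ and lies in $\bS_{{\rm pacc}\bG}^\cu$. (If one insists that $\bR$ be a genuine continuation of the system, an ancilla bookkeeping adjustment may be needed, but the swap already satisfies the definition.)

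\textbf{(1)$\Rightarrow$(3).} This is the main obstacle. Suppose some $\cv$ makes $M_\bS$ accessible to $\bI$ within $\bS\bF$, so every $G\in\bS\bF$ with $[G,M_\bS]\neq0$ fails to commute with $\cw^{-1}(\bI)$. Let
\[
\ca:=\cw^{-1}(\bI)\cap\bS\bF=\cu^{-1}\bigl(\cv^{-1}(\bI)\cap\bT\bigr)\subseteq\cu^{-1}(\bT).
\]
Here $\cv^{-1}(\bI)\subseteq\bT\bH$, and intersecting with $\bT$ is not the same as a partial projection, so I will argue via commutants rather than intersections. Accessibility says
\[
\cw^{-1}(\bI)'\cap\bS\bF\subseteq\{M_\bS\}' .
\]
First show that $\cw^{-1}(\bI)'\cap\bS\bF\supseteq\cu^{-1}(\bG)$. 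Indeed, $\cw$ acts as the identity on $\bG$ since $\cv$ does not touch it. So $\cw^{-1}(\bG)=\cu^{-1}(\bG)$, which commutes with $\cw^{-1}(\bI)$ because $\bG$ and $\bI$ are distinct tensor factors of the output. Thus $M_\bS$ commutes with $\cu^{-1}(\bG)$, i.e.\ $M_\bS\ninf\cu\bG$, which is (2), and by the first step also (3). The key realization is to use $\bG$ as a family of test generators: they lie in $\bS\bF$ after pulling back and never influence $\bI$. This makes the hard direction short.

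Finally, the equation $\bS_{{\rm pacc}\bG}^\cu=\cu^{-1}(\bG)'\cap\bS=\cu^{-1}(\bT)\cap\bS$ follows by collecting the three equivalences.
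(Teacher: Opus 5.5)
Your proposal is correct and follows essentially the same route as the paper: the commutant identity $\cu^{-1}(\bG)'=\cu^{-1}(\bG')=\cu^{-1}(\bT)$ for (2)$\Leftrightarrow$(3), a swap for (3)$\Rightarrow$(1), and, for the remaining direction, the pulled-back $\bG$-operators used as test generators that can never influence $\bI$ (the paper phrases this step as the contrapositive $\lnot$(2)$\Rightarrow\lnot$(1)). The intermediate algebra $\ca$ you introduce in that direction is never used and can be dropped.
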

\begin{proof}
        To see the equivalence of $(2)$ and $(3)$, first note that $\cu^{-1}(\bG)'  = \cu^{-1}(\bG') = \cu^{-1}(\bT)$. 
        Therefore,
        \begin{equation}
        \begin{split}
            M_\bS \ninf \cu \bG \qquad &\Longleftrightarrow \qquad M_\bS \in \cu^{-1}(\bG)'  \\
            &\Longleftrightarrow  \qquad  M_\bS \in \cu^{-1}(\bT) \\
            &\Longleftrightarrow  \qquad \cu(M_\bS) \in \bT.
        \end{split}
    \end{equation}

    For the implication $(3) \Rightarrow (1)$, note that if $M_\bS$ evolves to a local system observable $\cu(M_\bS) \in \bT$, then it can be accessed by a second environment fragment $(\bH,\bI)$ as in \cref{eq:potential_acc} by simply letting $\cv$ swap the system and environment fragment.
    That is, $M_\bS \in \bS \bF_{{\rm acc}\bI}^{\cw(\cu, {\rm SWAP})}$, from which it follows that $M_\bS \in \bS_{{\rm pacc}\bG}^\cu$.

    Finally, we prove that $\lnot(2) \Rightarrow \lnot(1)$.
    From $\neg(2)$, we know that there exists $N_\bG\in\bG$ such that $[M_\bS, \cu^{-1}(N_\bG)]\neq 0$.
    Let $\cv : \bH\bT \to \bI\bR$ be any future interaction with a second environment fragment $(\bH,\bI)$ as in \cref{eq:potential_acc}.
    Then $M_\bS$ cannot be made accessible to $\bI$ by $\cw(\cu,\cv)$: indeed, it is nontrivially transformed by $\cu^{-1}(N_\bG)$, yet by the design of $\cw(\cu,\cv)$, $\cu^{-1}(N_\bG)$ does not influence $\bI$.
\end{proof}

The equivalence of (1) and (2) reveals an interesting connection between the \textit{observable} and the \textit{generator} that are both represented by the same Hermitian operator $M_\bS$: for the observable to remain potentially accessible to other fragments of the environment, the generator must not influence $\bG$, and vice versa.
(3) then forms an intuitive bridge between the first two conditions: it can be read as saying that an observable $M_\bS$ evolves to a local system observable, or as saying that applying the generator $M_\bS$ before $\cu$ is equivalent to acting locally on $\bT$ after $\cu$.

\begin{examplebox}[Potential accessibility through a CNOT]
    In \cref{ex:inf_cnot_system}, \cref{eq:cnot_inf_on_env}, we have already characterized the system operators that influence $\bG$ through the CNOT as precisely those that do not commute with $Z_\bS$.
    Hence, by \cref{thm:pacc_ninf},
    \begin{equation}  \label{eq:cnot_pacc_by_env}
       \bS_{{\rm pacc}\bG}^{\rm CNOT} = \{Z_\bS\}' = \gen{Z_\bS}, 
    \end{equation}
   the algebra of operators generated by $Z_\bS$.
   
   To verify this, suppose first that $M_\bS$ does \emph{not} commute with $Z_\bS$. Then it is nontrivially transformed by the generator $Z_\bS Z_\bF$. But since ${\rm CNOT}(Z_\bS Z_\bF) = Z_\bG$, this generator does not influence $\bI$ through $\cw({\rm CNOT},\cv)$---no matter the choice of $\cv$. This shows that $M_\bS$ is not left potentially accessible by the CNOT.

    Now suppose that $M_\bS$ does commute with $Z_\bS$. It can then be made accessible to $\bI$ by letting $\cv$ be, for example, another CNOT, resulting in the interaction
    \begin{equation}
        \begin{tikzpicture}
	\begin{pgfonlayer}{nodelayer}
		\node [style=none] (0) at (4.5, -0.75) {};
		\node [style=none] (1) at (4.5, 0.75) {};
		\node [style=none] (2) at (10, 0.75) {};
		\node [style=none] (3) at (10, -0.75) {};
		\node [style=none] (4) at (7.25, 0) {$\cw$};
		\node [style=none] (5) at (5.25, -4.25) {};
		\node [style=none] (6) at (5.25, 4.25) {};
		\node [style=none] (7) at (5.25, 0.75) {};
		\node [style=none] (8) at (5.25, -0.75) {};
		\node [style=none] (13) at (5.75, -3.75) {$\bH$};
		\node [style=none] (15) at (5.25, 4.25) {};
		\node [style=none] (17) at (9.25, 0.75) {};
		\node [style=none] (18) at (9.25, 4.25) {};
		\node [style=none] (19) at (9.25, -0.75) {};
		\node [style=none] (20) at (9.25, -4.25) {};
		\node [style=none] (21) at (10, -3.75) {$\bS \bF$};
		\node [style=none] (22) at (5.75, 3.75) {$\bI$};
		\node [style=none] (24) at (10, 3.75) {$\bR \bG$};
		\node [style=none] (25) at (21.75, -4.25) {};
		\node [style=none] (26) at (21.75, 4.25) {};
		\node [style=none] (27) at (22.25, -3.75) {$\bF$};
		\node [style=none] (28) at (21.75, 4.25) {};
		\node [style=none] (29) at (17.75, 0) {};
		\node [style=none] (30) at (17.75, -4.25) {};
		\node [style=none] (31) at (18.25, -3.75) {$\bS$};
		\node [style=none] (32) at (22.25, 3.75) {$\bG$};
		\node [style=none] (33) at (18.25, 0) {$\bT$};
		\node [style=none] (34) at (17.75, 4.25) {};
		\node [style=none] (35) at (17.75, 4.25) {};
		\node [style=none] (36) at (13.75, 4.25) {};
		\node [style=none] (37) at (13.75, -4.25) {};
		\node [style=none] (38) at (14.25, -3.75) {$\bH$};
		\node [style=none] (39) at (18.25, 3.75) {$\bR$};
		\node [style=none] (40) at (14.25, 3.75) {$\bI$};
		\node [style=none] (41) at (11.5, 0) {$=$};
		\node [style=black dot] (42) at (17.75, -2.5) {};
		\node [style=none] (43) at (22.25, -2.5) {};
		\node [style=none] (44) at (21.25, -2.5) {};
		\node [style=none] (45) at (22.25, -2.5) {};
		\node [style=none] (46) at (21.25, -2.5) {};
		\node [style=black dot] (47) at (17.75, 2.5) {};
		\node [style=none] (48) at (13.25, 2.5) {};
		\node [style=none] (49) at (14.25, 2.5) {};
		\node [style=none] (50) at (13.25, 2.5) {};
		\node [style=none] (51) at (14.25, 2.5) {};
	\end{pgfonlayer}
	\begin{pgfonlayer}{edgelayer}
		\draw (1.center) to (2.center);
		\draw (2.center) to (3.center);
		\draw (3.center) to (0.center);
		\draw (0.center) to (1.center);
		\draw (8.center) to (5.center);
		\draw (7.center) to (6.center);
		\draw (18.center) to (17.center);
		\draw (19.center) to (20.center);
		\draw (42) to (43.center);
		\draw [bend left=90, looseness=1.75] (44.center) to (43.center);
		\draw [bend right=90, looseness=1.75] (46.center) to (45.center);
		\draw (25.center) to (28.center);
		\draw (47) to (48.center);
		\draw [bend right=90, looseness=1.75] (49.center) to (48.center);
		\draw [bend left=90, looseness=1.75] (51.center) to (50.center);
		\draw (30.center) to (35.center);
		\draw (36.center) to (37.center);
	\end{pgfonlayer}
\end{tikzpicture}. 
    \end{equation}
\end{examplebox}

\subsection{Decoherence}\label{subsec:decoherence}

Since it is an interaction between the system and just one fragment of the environment, $\cu$ itself cannot make information about an observable proliferate throughout the environment. But what it can do is \textit{help} it proliferate, by both making the observable reproduce in the fragment $\bG$ and letting it survive so that it can potentially go on to reproduce in another fragments. When $\cu$ does both of these things, we say that it \textit{decoheres} the observable. More formally:

\begin{definition} \label{def:decoherent}
    Let $M_\bS \in \bS$. $M_\bS$ is \defn{decohered} by $\cu$ relative to $\bG$ if it is both accessible to $\bG$ and left potentially accessible by $\bG$.
    We denote the set of all decohered operators by $\bS_{{\rm dec}\bG}^\cu$; thus,
    \begin{equation}
        \bS_{{\rm dec}\bG}^\cu \coloneqq \bS_{{\rm acc}\bG}^\cu \cap \bS_{{\rm pacc}\bG}^\cu.
    \end{equation}
    Moreover, we call $\cu$ \defn{decohering} if at least one nontrivial observable is decohered; that is, if $\bS_{{\rm dec}\bG}^\cu \neq \mathbb{C}I_\bS$.
\end{definition}

Although historically the notion of decoherence predates quantum Darwinism's emphasis on proliferation, here we are suggesting that decoherence is best defined in terms of proliferation: not as proliferation itself, but for the \textit{potential} for proliferation. Proliferation is when information about an observable spreads through many fragments of the environment; as such, proliferation cannot take place in a single bipartite interaction. On the other hand, decoherence takes place when an observable takes a meaningful step towards proliferation, and thus \textit{can} take place in a single bipartite interaction.

The theorem below explains how decoherence selects a preferred effectively classical (i.e.\ commutative) degree of freedom. Before stating this theorem, we recall from \cref{sec:notation} that, in finite dimensions, a (von Neumann) subalgebra of $\bA$ is any set of operators that contains the identity and is closed under adjoints, complex linear combinations, and products of its members, and that the \emph{centre} of an algebra $\bX$ is the subalgebra $Z(\bX) = \bX \cap \bX'$.

\begin{theorem}\label{thm:algebras}
    For any unitary channel $\cu: \bS\bF \to \bT\bG$, the sets $\bS_{{\rm acc}\bG}^\cu$ and $\bS_{{\rm pacc}\bG}^\cu$ are (von Neumann) subalgebras of $\bS$, each equal to the other's commutant within $\bS$:
    \begin{subequations} \label{eq:pacc_acc_comm}
        \begin{align}
            \bS_{{\rm pacc}\bG}^\cu &= (\bS_{{\rm acc}\bG}^\cu)' \cap \bS \label{eq:pacc_comm_acc} \\
            \text{and} \quad \quad
            \bS_{{\rm acc}\bG}^\cu &= (\bS_{{\rm pacc}\bG}^\cu)' \cap \bS.\label{eq:acc_comm_pacc} 
        \end{align}
    \end{subequations}
    The set of decohered operators $\bS_{{\rm dec}\bG}^\cu = \bS_{{\rm acc}\bG}^\cu \cap \bS_{{\rm pacc}\bG}^\cu$ thus forms a \emph{commutative} algebra, and is the centre of the accessible algebra and of the potentially accessible algebra.
\end{theorem}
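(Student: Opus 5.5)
Our starting point is \cref{thm:pacc_ninf}, which gives $\bS_{{\rm pacc}\bG}^\cu = \cu^{-1}(\bT)\cap\bS$. Both $\cu^{-1}(\bT)$ and $\bS$ are von Neumann subalgebras of $\bS\bF$: the first because $\cu^{-1}$ is a $*$-isomorphism, and the second trivially. Their intersection therefore contains the identity and is closed under sums, products and adjoints, so $\bS_{{\rm pacc}\bG}^\cu$ is a subalgebra of $\bS$. Write $\bP \coloneqq \bS_{{\rm pacc}\bG}^\cu$ and $\bA \coloneqq \bS_{{\rm acc}\bG}^\cu$.

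Next we rewrite accessibility in terms of $\bP$. By \cref{def:accessibility}, $M_\bS\in\bA$ if and only if every $G_\bS\in\bS$ with $[M_\bS,G_\bS]\neq0$ satisfies $G_\bS\xrightarrow{\cu}\bG$. By \cref{thm:pacc_ninf}, $G_\bS\xrightarrow{\cu}\bG$ holds exactly when $G_\bS\notin\bP$. Taking the contrapositive, $M_\bS\in\bA$ if and only if every $G_\bS\in\bP$ commutes with $M_\bS$. Hence
\begin{equation}
\bA = \bP'\cap\bS,
\end{equation}
which is \cref{eq:acc_comm_pacc}.

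This set is a subalgebra of $\bS$, since it is the relative commutant of a self-adjoint set within the algebra $\bS$. It is self-adjoint because it is closed under adjoints. For \cref{eq:pacc_comm_acc}, we need $(\bP'\cap\bS)'\cap\bS=\bP$. This is the double commutant theorem relative to the factor $\bS$. Identify $\bS\cong\cl(\ch_\bS)$, which is legitimate because $\bS$ acts on $\ch_\bS\otimes\ch_\bF$ as $\cl(\ch_\bS)\otimes I_\bF$. Under this identification $\bP$ is a unital $*$-subalgebra of $\cl(\ch_\bS)$, and the commutant taken within $\bS$ is exactly the commutant in $\cl(\ch_\bS)$. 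In finite dimensions von Neumann's bicommutant theorem then gives $\bP''=\bP$.

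This identification is the only step needing care. We must check that, for subsets $X\subseteq\bS$, the set $X'\cap\bS$ (commutant in $\bS\bF$, intersected with $\bS$) corresponds to the commutant of $X$ in $\cl(\ch_\bS)$. This is immediate, because commutation of $A\otimes I$ with $B\otimes I$ is equivalent to commutation of $A$ with $B$.

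Finally, the decohered set is $\bA\cap\bP$. Since $\bA=\bP'\cap\bS$ and $\bP\subseteq\bS$, we get $\bA\cap\bP=\bP\cap\bP'=Z(\bP)$. Symmetrically, $\bP=\bA'\cap\bS$ and $\bA\subseteq\bS$ give $\bA\cap\bP=\bA\cap\bA'=Z(\bA)$. A centre is an intersection of a von Neumann algebra with its commutant, so it is a commutative subalgebra. This yields both claims of \cref{thm:algebras} about $\bS_{{\rm dec}\bG}^\cu$.
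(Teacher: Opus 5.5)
Your proposal is correct and follows the paper's proof. Both obtain $\bS_{{\rm pacc}\bG}^\cu$ as an algebra from \cref{thm:pacc_ninf}, rewrite accessibility contrapositively as the relative commutant $(\bS_{{\rm pacc}\bG}^\cu)'\cap\bS$, and recover the other equation from the double commutant theorem. Your identification of $\bS$ with $\cl(\ch_\bS)$ to justify the relative bicommutant, and your explicit computation of both centres, spell out steps the paper leaves implicit.
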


\begin{proof} 
    By \cref{thm:pacc_ninf}, the $\bS$-operators left potentially accessible by $\bG$ are precisely those that do not influence $\bG$, and thus they form the algebra
    \begin{equation} \label{eq:pacc_form}
        \bS_{{\rm pacc}\bG}^\cu = \cu^{-1}(\bG)' \cap \bS.
    \end{equation}
    In its contrapositive form, \cref{def:accessibility} tells us that $M_\bS$ is accessible to $\bG$ precisely if it commutes with every operator in $\bS$ that does not influence $\bG$. Hence the accessible operators form the algebra
    \begin{equation}\label{eq:accdoublecom}
        \bS_{{\rm acc}\bG}^\cu = (\cu^{-1}(\bG)' \cap \bS)' \cap \bS.
    \end{equation}
    This implies \cref{eq:acc_comm_pacc}. By the double commutant theorem for von Neumann algebras (cf.\ \cref{sec:notation}), \cref{eq:pacc_comm_acc} follows. Thus $\bS_{{\rm dec}\bG}^\cu$ is the intersection of two commuting algebras, and is therefore a commutative algebra itself.
\end{proof}

\begin{examplebox}[Maximal decoherence]\label{re:maxdec}
    An important special case is encountered when the decohered algebra $\bS_{{\rm dec}\bG}^\cu$ is not only nontrivial, but in fact contains a \emph{nondegenerate} observable with eigenbasis $\{\ket{\psi_k}\}_k$.
    $\bS_{{\rm dec}\bG}^\cu$ must then contain the algebra generated by this observable, and is therefore identical to the algebra of operators diagonal in the eigenbasis: $\bS_{{\rm dec}\bG}^\cu = \spn_{\mathbb C}\{\ketbra{\psi_k}\}_k$.
    In this case one might say that the \emph{basis} $\{\ket{\psi_k}\}_k$ itself is decohered by $\cu$.
    We discuss this case in more detail in \cref{sec:symptoms}, where we call it \emph{maximal decoherence}.
\end{examplebox}

From the commutativity of the decohered algebra, another aspect of classicality arises: the emergence of a preferred subspace decomposition of $\ch_\bS$. In this decomposition, the preferred subspaces are the joint eigenspaces of the decohered observables.
More precisely, $\ch_\bS$ admits an orthogonal subspace decomposition
\begin{equation}\label{eq:preferred-decomposition}
    \ch_\bS = \bigoplus_{i=1}^n \ch_\bS^i
\end{equation}
such that
\begin{equation} \label{eq:dec_form}
    \bS_{{\rm dec}\bG}^\cu = \spn_{\mathbb C} \{\pi^i|i \in \{1, \ldots, n\}\},
\end{equation}
where each $\pi^i : \ch_\bS \to \ch_\bS$ is the projector onto $\ch_\bS^i$.
This subspace decomposition also provides a useful representation of the accessible and potentially accessible algebras: by the Artin--Wedderburn structure theorem~\cite{farenick2000algebras} and \cref{eq:pacc_acc_comm}, each of the preferred subspaces $\ch_\bS^i$ factorises into a tensor product $\ch_\bS^i = \ch_{\bS_L^i} \otimes \ch_{\bS_R^i}$ such that 
\begin{equation} \label{eq:algebra_forms}
    \bS_{{\rm acc}\bG}^\cu = \bigoplus_{i=1}^n \cl(\ch_{\bS_L^i}) \otimes I_{\bS_R^i}
    \qquad\text{and}\qquad
    \bS_{{\rm pacc}\bG}^\cu = \bigoplus_{i=1}^n I_{\bS_L^i} \otimes \cl(\ch_{\bS_R^i}).
\end{equation}

\Cref{thm:algebras} makes precise the quantum Darwinist idea that classicality emerges through a competition between observables to survive and reproduce. If all observables survive, i.e.\ if $\bS_{{\rm pacc}\bG}^\cu = \bS$, then \cref{thm:algebras} forces that $\bS_{{\rm acc}\bG}^\cu = \mathbb{C} I_\bS$, meaning that only trivial observables can reproduce. Conversely, if all observables reproduce, then only trivial observables can survive. It is therefore not possible for all observables to both survive and reproduce---hence the \enquote{competition}. The winners of this competition necessarily form a subalgebra of $\bS$ and thus pick out a preferred classical degree of freedom.

\begin{figure}[b!]
\centering
\begin{center} \small
\begin{minipage}[t]{0.31\textwidth}
\begin{examplebox}\label{ex:balance_a}\vspace{.8em} Let $\cu$ be the identity:
\begin{equation} \nonumber
\begin{tikzpicture}
	\begin{pgfonlayer}{nodelayer}
		\node [style=none] (30) at (20.25, -1.75) {};
		\node [style=none] (33) at (20.25, 1.5) {};
		\node [style=none] (34) at (20.75, -1.25) {$\bF$};
		\node [style=none] (35) at (20.25, 1.5) {};
		\node [style=none] (38) at (18, -1.75) {};
		\node [style=none] (39) at (18.5, -1.25) {$\bS$};
		\node [style=none] (40) at (20.75, 1) {$\bG$};
		\node [style=none] (41) at (18.5, 1) {$\bT$};
		\node [style=none] (50) at (18, 1.5) {};
		\node [style=none] (51) at (18, 1.5) {};
		\node [style=none] (65) at (16, 0) {$\cu$};
		\node [style=none] (66) at (17, 0) {$=$};
	\end{pgfonlayer}
	\begin{pgfonlayer}{edgelayer}
		\draw (30.center) to (35.center);
		\draw (38.center) to (51.center);
	\end{pgfonlayer}
\end{tikzpicture}
\end{equation}

We obtain the algebras
\begin{equation}  \nonumber
    \begin{split}
        \bS_{{\rm acc}\bG}^\ci &= \mathbb C I_\bS\\
        \bS_{{\rm pacc}\bG}^\ci &= \bS \\
        \bS_{{\rm dec}\bG}^\ci &= \mathbb C I_\bS, 
    \end{split}
\end{equation}
and the (trivial) preferred decomposition:
\begin{equation} \nonumber
    \ch_\bS = \ch_\bS.
\end{equation}
\end{examplebox}
\end{minipage}
\hfill
\begin{minipage}[t]{0.31\textwidth}

\begin{examplebox}\label{ex:balance_b}\vspace{.8em} Let $\cu$ be the SWAP:
\begin{equation} \nonumber
    \begin{tikzpicture}
	\begin{pgfonlayer}{nodelayer}
		\node [style=none] (30) at (20.25, -1.75) {};
		\node [style=none] (33) at (20.25, -1.25) {};
		\node [style=none] (34) at (20.75, -1.25) {$\bF$};
		\node [style=none] (35) at (20.25, -1.25) {};
		\node [style=none] (38) at (18, -1.75) {};
		\node [style=none] (39) at (18.5, -1.25) {$\bS$};
		\node [style=none] (40) at (20.75, 1) {$\bG$};
		\node [style=none] (41) at (18.5, 1) {$\bT$};
		\node [style=none] (50) at (18, -1.25) {};
		\node [style=none] (51) at (18, -1.25) {};
		\node [style=none] (65) at (16, 0) {$\cu$};
		\node [style=none] (66) at (17, 0) {$=$};
		\node [style=none] (67) at (20.25, 1) {};
		\node [style=none] (68) at (20.25, 1.5) {};
		\node [style=none] (69) at (20.25, 1.5) {};
		\node [style=none] (70) at (18, 1) {};
		\node [style=none] (72) at (18, 1.5) {};
		\node [style=none] (73) at (18, 1.5) {};
		\node [style=none] (80) at (20.25, -1.25) {};
		\node [style=none] (81) at (20.25, -1.25) {};
		\node [style=none] (82) at (18, 1) {};
		\node [style=none] (89) at (20.25, -1.25) {};
		\node [style=none] (90) at (20.25, -1.25) {};
		\node [style=none] (91) at (18, 1) {};
	\end{pgfonlayer}
	\begin{pgfonlayer}{edgelayer}
		\draw (30.center) to (35.center);
		\draw (38.center) to (51.center);
		\draw (67.center) to (69.center);
		\draw (70.center) to (73.center);
		\draw [in=-90, out=90, looseness=1.50] (51.center) to (67.center);
		\draw [in=-90, out=90, looseness=1.50] (90.center) to (91.center);
	\end{pgfonlayer}
\end{tikzpicture}
\end{equation}

We obtain the algebras
\begin{equation}  \nonumber
    \begin{split}
        \bS_{{\rm acc}\bG}^{\rm SWAP} &= \bS\\
        \bS_{{\rm pacc}\bG}^{\rm SWAP} &= \mathbb C I_\bS \\
        \bS_{{\rm dec}\bG}^{\rm SWAP} &= \mathbb C I_\bS, 
    \end{split}
\end{equation}
and the (trivial) preferred decomposition:
\begin{equation} \nonumber
    \ch_\bS = \ch_\bS.
\end{equation}
\end{examplebox}
\end{minipage}
\hfill
\begin{minipage}[t]{0.31\textwidth}

\begin{examplebox}\label{ex:balance_c}\vspace{.8em} Let $\cu$ be the CNOT:
\begin{equation} \nonumber
    \begin{tikzpicture}
	\begin{pgfonlayer}{nodelayer}
		\node [style=none] (66) at (1.5, -1.75) {};
		\node [style=none] (67) at (1.5, 1.75) {};
		\node [style=none] (68) at (2, -1.25) {$\bF$};
		\node [style=none] (69) at (1.5, 1.75) {};
		\node [style=none] (71) at (-1.5, -1.75) {};
		\node [style=none] (72) at (-1, -1.25) {$\bS$};
		\node [style=none] (73) at (2, 1.25) {$\bG$};
		\node [style=none] (74) at (-1, 1.25) {$\bT$};
		\node [style=none] (75) at (-1.5, 1.75) {};
		\node [style=none] (76) at (-1.5, 1.75) {};
		\node [style=black dot] (83) at (-1.5, 0) {};
		\node [style=none] (84) at (2, 0) {};
		\node [style=none] (85) at (1, 0) {};
		\node [style=none] (86) at (2, 0) {};
		\node [style=none] (87) at (1, 0) {};
	\end{pgfonlayer}
	\begin{pgfonlayer}{edgelayer}
		\draw (83) to (84.center);
		\draw [bend left=90, looseness=1.75] (85.center) to (84.center);
		\draw [bend right=90, looseness=1.75] (87.center) to (86.center);
		\draw (66.center) to (69.center);
		\draw (71.center) to (76.center);
	\end{pgfonlayer}
\end{tikzpicture}
\end{equation}

We obtain the algebras
\begin{equation}  \nonumber
    \begin{split}
        \bS_{{\rm acc}\bG}^{\rm CNOT} &= \gen{Z_\bS} \\
        \bS_{{\rm pacc}\bG}^{\rm CNOT} &= \gen{Z_\bS}  \\
        \bS_{{\rm dec}\bG}^{\rm CNOT} &= \gen{Z_\bS} , 
    \end{split}
\end{equation}
and the preferred decomposition:
\begin{equation} \nonumber
    \ch_\bS = \mathbb{C}\ket{0}_\bS \oplus \mathbb{C}\ket{1}_\bS
\end{equation} 
\end{examplebox}
\end{minipage}
\end{center}
    \caption{Examples illustrating the competitive aspect of decoherence: if all nontrivial observables survive, then none can reproduce, and vice versa.       (Recall from \cref{sec:notation} that $\gen{Z_\bS}$ denotes the von Neumann algebra generated by $Z_\bS$.)}
    \label{fig:balance}
\end{figure}

\Cref{fig:balance} illustrates the competition to achieve decoherence through some simple examples. In \cref{ex:balance_a}, all observables survive, and thus no nontrivial observables reproduce. Hence no nontrivial observables succeed in doing both; the decoherent algebra is trivial. In \cref{ex:balance_b}, all observables reproduce, and thus no nontrivial observables survive. Again, the decoherent algebra is trivial. But in \cref{ex:balance_c}, a nontrivial subalgebra of observables survive, and the same subalgebra of observables reproduce. This is the only one of the three examples in which there are nontrivial observables that win the competition to decohere.

These examples also point to a causal intuition for decoherence. The reason that there are no nontrivial decohered observables in \cref{ex:balance_a} is that there is \textit{not enough} influence from $\bS$ to $\bG$: since $G_\bS \ninf \cu \bG$ for all $G_\bS \in \bS$, no nontrivial observables are accessible to $\bG$. On the other hand, in \cref{ex:balance_b}, there is \textit{too much} influence from $\bS$ to $\bG$: since $G_\bS \xrightarrow \cu \bG$ for all nontrivial $G_\bS \in \bS$, no nontrivial observables are left potentially accessible (\cref{thm:pacc_ninf}). But in \cref{ex:balance_c}, a \textit{causal balance} is struck: some but not all nontrivial $G_\bS$ influence $\bG$. This makes it possible for nontrivial observables to both be accessible and remain potentially accessible. 

More generally, \cref{thm:algebras} shows that the very causal influences necessary for a given subalgebra $\bA \subseteq \bS$ to reproduce prevent all operators outside $\bA'$ from surviving. Conversely, the very no-influence relations required for $\bA$ to survive prevent all operators outside $\bA'$ from reproducing. Hence if one is choosing an interaction $\cu$ to ensure that nontrivial $\bS$-operators are decohered, one must be careful to pick one with neither too many nor too few influences from $\bS$-operators to $\bG$-operators. 

At the start of this section, we recalled the quantum Darwinist insight that one should not expect that all quantum degrees of freedom can be perceived, but rather only those capable of informationally proliferating throughout the environment. We now see that for a unitary $\cu$ to help observables achieve proliferation, a balance of causal influences must be struck, and that even when the best possible balance is struck, no more than a commutative subalgebra of observables can be so helped. This suggests that in a quantum world we should expect to perceive only effectively classical degrees of freedom.

Two final remarks are in order. First, as flagged in \cref{ft:hermitian}, even though our operational interpretations of the definitions we have introduced have always presupposed that the relevant operators were Hermitian, we defined them as relations between arbitrary operators. In \cref{app:hermitian}, we demonstrate that the consideration of non-Hermitian elements in our formal definitions is an inconsequential mathematical convenience. (The key fact is that any algebra is spanned by its Hermitian elements.)
Finally, here and throughout the main body of this work, the focus is on the discrete-time setting, in which interactions are represented by unitary channels rather than Hamiltonians. However, the causal approach to decoherence naturally extends to the Hamiltonian setting too, as we show in \cref{app:continuous}.

\subsection{Relation to other work} 

We now comment on the relation between the ideas developed in this section and existing literature, beginning with the literature on quantum causation.
\Cref{def:influence,def:influence2} first explicitly appeared in \cite{ormrod2025causal}. However, as \cite{ormrod2025causal} explains further, they are natural fine-grainings of the notion of causal influence that had already appeared in many papers on quantum causal modelling \cite{Allen_2017, barrett2020quantum, barrett2021cyclic, ormrod2023causal}. \cite{ormrod2023causal} argues for this approach to causal influence in quantum theory by showing that it is equivalent to many other a priori reasonable definitions. \Cref{def:accessibility} first appeared in \cite{ormrod2025causal}, in which it was used to derive the algebraic representation of quantum systems by assuming that a system is a complete set of observables that a probe can access through some unitary interaction. 

In the literature on quantum error correction, a \textit{noiseless subsystem} \cite{knill1997theory, lidar2003decoherence, lidar2014review} of $\bS$ is associated with a tensor factor $\ch_{\bS_R^i}$ of a subspace of the Hilbert space of the system that is protected from decoherence. \Cref{app:noiseless} discusses the connection between the present causal approach and the noiseless subsystems associated with quantum error correction. 

While developing a causal interpretation of quantum theory, \cite{ormrod2024quantum} defined \enquote{preferred projectors} in terms of causal influences between projectors. \cref{app:bubble} shows that these are precisely the projectors onto the preferred subspaces that are singled out by decoherence, as part of a larger proof of the claim from  \cref{sec:crqt} that the interpretation can be reformulated using our formalism for decoherence.

Finally, throughout this section we have been guided by the quantum Darwinist perspective on emergent classicality, which emphasises the importance of the proliferation of information. However, in the Darwinism literature, whether or not proliferation has taken place is commonly taken to be a property of the state, often formalized using information-theoretic quantities (e.g.\ \cite{zurek2003decoherence, Ollivier_2004, Ollivier_2005, zurek2009quantum, zurek2025decoherence}) or, more recently, in the proximity of a state to a \enquote{spectral broadcast structure} \cite{korbicz2014objectivity, horodecki2015quantum, Korbicz2021roadstoobjectivity, le2019strong, chisholm2023meaning}. Here, our object of study has been slightly different: rather than studying the kinematical properties that determine whether proliferation has occurred, we have instead characterized the dynamical structures that are prerequisite for proliferation to occur. Our motivation for focusing on the dynamics comes from both within the decoherence literature, specifically arguments that the process of decoherence should be distinguished from its symptoms~\cite{Zurek:1994zq}, and from outside, specifically recent advances in quantum causation that, as we have seen, turn out to provide a natural language for pinning down the process of decoherence.

While we have seen that this process can be defined without reference to the quantum state, we are yet to see how the quantum state itself emerges from decoherence. Before we get there, however, there is more to be said about the relation between the process of decoherence and some of its familiar symptoms.

\section{The process of decoherence and its symptoms}\label{sec:symptoms}

This section will clarify the relationship between the process of decoherence and some of decoherence's most familiar aspects: diagonality, time asymmetry, and robustness.

For simplicity,  we will focus on a special (but commonly studied) case of a decohering interaction---namely a \emph{maximally decohering} interaction. As in the previous section, we consider a unitary interaction $\bS\bF \to \bT\bG$. 

\begin{definition}
    \label{def:maximally-decohering}
    $\cu$ is called \defn{maximally decohering} if it satisfies any of the following equivalent conditions:
    \begin{enumerate}
            \item $\bS_{{\rm dec}\bG}^\cu = {\rm span}_{\mathbb{C}}\,\{\ketbra{\psi_k}{\psi_k}_\bS\}_k$
        for some orthonormal basis $\{\ket{\psi_k}\}_k$.
        \item Every subspace $\ch_\bS^i$ in the preferred decomposition is one-dimensional.
         \item $\bS_{{\rm dec}\bG}^\cu$ contains at least one \emph{nondegenerate} observable (not just nontrivial, as in \cref{def:decoherent}).
    \end{enumerate}
\end{definition}

The equivalence of these conditions follows directly from the form of the decoherent algebra as described in \cref{eq:dec_form}. The last condition is often useful: if one can identify just \textit{one} nondegenerate observable $M_\bS$ that $\cu$ decoheres, then one can immediately infer that $\cu$ is maximally decohering. If such an $M_\bS$ can be found, then it immediately follows that not just $M_\bS$ but also every operator diagonal in its eigenbasis is also decohered. 

The following theorem characterizes the maximally decohering interactions for the case where $\dim(\bS) = \dim(\bT)$.

\begin{theorem}\label{thm:uni_control}
    If $\dim(\bS) = \dim(\bT)$ then $\cu:\bS\bF \to \bT\bG$ is maximally decohering if and only if $\cu = U(\cdot)U^\dagger$ for some $U$ of the \enquote{coherent control} form
    \begin{equation}\label{eq:uni_control}
        U = \sum_k \ket*{\tilde \psi_k}_\bT \bra{\psi_k}_\bS \otimes V_{\bF \rightarrow \bG}^{(k)},
    \end{equation}
    where $\{\ket{ \psi_k}_\bS \}_{k} \subseteq \ch_\bS$ and $\{ \ket*{\tilde \psi_k}_\bT\}_{k} \subseteq \ch_\bT$ are orthonormal bases and $V_{\bF\to\bG}^{(k)} : \bF \to \bG$ are unitaries that are neither equal nor equivalent up to phase, i.e.\ $V_{\bF\to\bG}^{(k)} \not\propto V_{\bF\to\bG}^{(l)}$ for all $k \neq l$. Moreover, when $\cu$ is maximally decohering the control basis $\{\ket{ \psi_k}_\bS \}_{k}$ is unique up to phase.
\end{theorem}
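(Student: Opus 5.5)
The plan is to first pin down exactly which algebras appear when $\cu$ is maximally decohering, and then translate those algebraic conditions into the form of $U$ using \cref{thm:pacc_ninf}. Write $\bD := \spn_{\mathbb C}\{\ketbra{\psi_k}{\psi_k}_\bS\}_k$ for the algebra of operators diagonal in the basis $\{\ket{\psi_k}\}_k$. This is a maximal abelian subalgebra of $\bS$, so $\bD' \cap \bS = \bD$.

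\textbf{Step 1 (algebraic collapse).} Suppose $\bS_{{\rm dec}\bG}^\cu = \bD$. Since $\bS_{{\rm dec}\bG}^\cu$ is the centre of both $\bS_{{\rm acc}\bG}^\cu$ and $\bS_{{\rm pacc}\bG}^\cu$ (\cref{thm:algebras}), we have $\bD \subseteq \bS_{{\rm acc}\bG}^\cu$ and $\bD \subseteq \bS_{{\rm pacc}\bG}^\cu$. Everything in $\bS_{{\rm acc}\bG}^\cu$ commutes with $\bS_{{\rm pacc}\bG}^\cu \supseteq \bD$, so $\bS_{{\rm acc}\bG}^\cu \subseteq \bD'\cap\bS = \bD$. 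The same argument gives $\bS_{{\rm pacc}\bG}^\cu \subseteq \bD$. Hence maximal decoherence is equivalent to
\[
\bS_{{\rm pacc}\bG}^\cu = \bD .
\]
The converse holds as well: if $\bS_{{\rm pacc}\bG}^\cu = \bD$, then \cref{eq:acc_comm_pacc} gives $\bS_{{\rm acc}\bG}^\cu = \bD'\cap\bS = \bD$, and therefore $\bS_{{\rm dec}\bG}^\cu = \bD$.

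\textbf{Step 2 (forward direction).} By \cref{thm:pacc_ninf}, each projector $P_k = \ketbra{\psi_k}{\psi_k}_\bS$ satisfies $\cu(P_k) \in \bT$. Because $\cu$ is a unitary $*$-isomorphism, $\cu(P_k\otimes I_\bF) = Q_k \otimes I_\bG$, where the $Q_k$ are mutually orthogonal projectors on $\ch_\bT$ summing to the identity. Comparing ranks gives $\dim\bF = \operatorname{rank}(Q_k)\dim\bG$. The identity $\dim\bS\dim\bF = \dim\bT\dim\bG$ together with $\dim\bS=\dim\bT$ forces $\dim\bF=\dim\bG$, so each $Q_k$ has rank one; write $Q_k = \ketbra*{\tilde\psi_k}{\tilde\psi_k}$. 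Then $U$ maps $\ket{\psi_k}\otimes\ch_\bF$ isometrically onto $\ket*{\tilde\psi_k}\otimes\ch_\bG$, which yields the form \cref{eq:uni_control} with each $V^{(k)}$ unitary.

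\textbf{Step 3 (computing $\bS_{{\rm pacc}\bG}^\cu$ for the controlled form).} This step gives the condition on the $V^{(k)}$ and also the converse. For $U$ of the form \cref{eq:uni_control} and a general $M_\bS = \sum_{kl} m_{kl}\ketbra{\psi_k}{\psi_l}$,
\[
\cu(M_\bS) = \sum_{kl} m_{kl}\,\ketbra*{\tilde\psi_k}{\tilde\psi_l}\otimes V^{(k)}V^{(l)\dagger}.
\]
Since the matrix units $\ketbra*{\tilde\psi_k}{\tilde\psi_l}$ are linearly independent, this lies in $\bT$ exactly when $V^{(k)}V^{(l)\dagger}\propto I_\bG$ for every pair with $m_{kl}\neq0$. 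Therefore $\bS_{{\rm pacc}\bG}^\cu = \bD$ holds if and only if $V^{(k)}\not\propto V^{(l)}$ for all $k\neq l$. Combined with Step 1, this gives both directions of the equivalence.

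\textbf{Step 4 (uniqueness).} The decohered algebra $\bS_{{\rm dec}\bG}^\cu = \bD$ is fixed by $\cu$ alone. Its minimal projectors are exactly the rank-one projectors $\ketbra{\psi_k}{\psi_k}$, so the control basis is determined up to phases and ordering.

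The main obstacle I expect is Step 2, specifically the rank/dimension argument that forces each $Q_k$ to be rank one. This is the only place where the hypothesis $\dim\bS=\dim\bT$ enters, and it must be handled carefully.
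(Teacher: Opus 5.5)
Your proposal is correct and follows essentially the same route as the paper: the control-basis projectors are potentially accessible, so \cref{thm:pacc_ninf} plus a rank count sends each one to a rank-one projector on $\bT$ tensored with $I_\bG$, which gives the coherent-control form. The converse in both treatments is the same direct computation of $\cu(M_\bS)$, showing $\bS_{{\rm pacc}\bG}^\cu = \spn_{\mathbb C}\{\ketbra{\psi_k}{\psi_k}_\bS\}_k$.

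The differences are minor streamlining. Your Step 1 shows that maximal decoherence is equivalent to the potentially accessible algebra being the diagonal algebra, so that one computation also gives phase-inequivalence in the forward direction; the paper instead gets phase-inequivalence by a separate contradiction argument via accessibility. You also make explicit the deduction $\dim\bF=\dim\bG$ and the uniqueness of the control basis, both of which the paper leaves implicit.
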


To understand what this theorem says, bear in mind the equivalence of the conditions in \cref{def:maximally-decohering}. On the one hand, if \textit{any} given nondegenerate observable $M_\bS$ is decohered, then $\cu$ has the coherent control form above (for phase-inequivalent $V_{\bF\to\bG}^{(k)}$), in which case the eigenbasis of $M_\bS$ coincides with the control basis $\{\ket{\psi_k}_\bS\}_k$. Conversely, if $\cu$ has this form then it decoheres \textit{all} operators that are diagonal in the control basis, so that $\bS_{{\rm dec}\bG}^\cu = {\rm span}_{\mathbb{C}}\,\{\ketbra{\psi_k}{\psi_k}_\bS\}_k$.

\begin{proof}[Proof of \cref{thm:uni_control}]
    Suppose that $M_\bS \in \bS_{{\rm dec} \bG}^\cu$ is a nondegenerate decohered observable, and let $\{\ket{\psi_k}_\bS\}_k$ be its eigenbasis.
    Since $\bS_{{\rm dec} \bG}^\cu$ is an algebra, it must contain the algebra generated by $M_\bS$, which due to nondegeneracy is $\spn_\mathbb{C}\{\ketbra{\psi_k}_\bS\}_k$.
    Thus, each projector $\ketbra{\psi_k}_\bS$ is decohered, which means by \cref{thm:pacc_ninf} that it is transformed into a $\bT$-operator, i.e.\ $\cu(\ketbra{\psi_k}_\bS \otimes I_\bF) \in {\bT \otimes I_\bG}$.
    Since $\ketbra{\psi_k}_\bS\otimes I_\bF$ has rank $\dim(\bF)$ and $\dim(\bF)=\dim(\bG)$, $\cu(\ketbra{\psi_k}_\bS \otimes I_\bF)$ must correspond to a rank-one projector on $\ch_\bT$: $\cu(\ketbra{\psi_k}_\bS\otimes I_\bF) = \ketbra*{\tilde\psi_k}_\bT\otimes I_\bG$ for some $\ket*{\tilde\psi_k}_\bT \in \ch_\bT$, unique up to phase.
    It follows from unitarity that $\{\ket*{\tilde\psi_k}_\bT\}_k$ is an orthonormal basis for $\ch_\bT$ and that $\cu = U(\cdot)U^\dagger$ for some $U$ of the form $U = \sum_k \ket*{\tilde \psi_k}_\bT \bra{\psi_k}_\bS \otimes V_{\bF \rightarrow \bG}^{(k)}$.
    
    Assume for contradiction that there exist distinct $k$ and $l$ such that $V_{\bF \rightarrow \bG}^{(k)} \propto V_{\bF \rightarrow \bG}^{(l)}$.
    Then none of the system operators that are supported only on the subspace $\mathbb{C}\ket{\psi_k}_\bS + \mathbb{C} \ket{\psi_l}_\bS$ influence $\bG$ through $\cu$. But since some of those operators do not commute with $M_\bS$, it follows that $M_\bS \notin \bS_{{\rm acc} \bG}^\cu$, hence $M_\bS \notin \bS_{{\rm dec} \bG}^\cu$, which is a contradiction.
    Thus all unitaries $V_{\bF \rightarrow \bG}^{(k)}$ are phase-inequivalent.
    
    Conversely, assume \cref{eq:uni_control} for phase-inequivalent $V_{\bF \rightarrow \bG}^{(k)}$.
    We show that an operator $M_\bS\in\bS$ is left potentially accessible after $\cu$ iff it is diagonal in $\{\ket{\psi_k}_\bS\}_k$.
    Indeed, by \cref{thm:pacc_ninf} we have
    \begin{equation}
        \begin{split}
            M_\bS \in \bS_{{\rm pacc}\bG}^\cu \quad
            &\iff \quad \cu(M_\bS)\in\bT \\
            &\iff \quad 
              \sum_{k,l} \bra{\psi_k}_\bS M_\bS \ket{\psi_l}_\bS \ketbra*{\tilde \psi_k}{\tilde \psi_l}_\bT  \otimes V_{\bF \rightarrow \bG}^{(k)}(V_{\bF \rightarrow \bG}^{(l)})^\dagger  \in  \bT  \\
             &\iff \quad \forall k, l: \quad \bra{\psi_k}_\bS M_\bS \ket{\psi_l}_\bS = 0 \quad \text{or} \quad  V_{\bF \rightarrow \bG}^{(k)}(V_{\bF \rightarrow \bG}^{(l)})^\dagger \propto   I_{\bF \rightarrow \bG}  \\
             &\iff \quad \forall k \neq l : \quad  \bra{\psi_k}_\bS M_\bS \ket{\psi_l}_\bS = 0 \\
             &\iff \quad M_\bS \in \spn_{\mathbb{C}}\{\ketbra{\psi_k}_\bS\}_k.
        \end{split}
    \end{equation}
    The decohered algebra $\bS_{{\rm dec}\bG}^\cu$ is the centre of $\bS_{{\rm pacc}\bG}^\cu$ and therefore also equals $\spn_{\mathbb{C}}\{\ketbra{\psi_k}_\bS\}_k$, showing that $\cu$ is maximally decohering.
\end{proof}

\subsection{Suppression of off-diagonal elements}

The suppression of off-diagonal terms in the reduced density matrix of the system is one of the best-known consequences of decoherence. To see how it comes about, let us assume that $\cu$ acts on an initial product state
\begin{equation}\label{eq:initial-product-state}
    \rho_\bS \otimes \rho_\bF.
\end{equation}
\Cref{thm:uni_control} shows that a maximally decohering interaction $\cu$ transforms the preferred basis $\{\ket{\psi_k}_\bS\}_k$ on $\bS$ to a preferred basis  $\{\ket*{\tilde \psi_k}_\bT\}_k$ on $\bT$. If the marginal states of the system before and after $\cu$ are written in terms of these preferred bases, then the matrix elements evolve from
\begin{equation} \label{eq:suppression}
  \bra{\psi_k}_\bS \rho_\bS \ket{\psi_l}_\bS \quad \text{to}
   \quad \bra*{\tilde \psi_k}_\bT \rho_\bT \ket*{\tilde \psi_l}_\bT  = \Tr\bigl((V_{\bF \rightarrow \bG}^{(l)})^\dagger V_{\bF \rightarrow \bG}^{(k)} \rho_\bF \bigr) \,
      \bra{\psi_k}_\bS \rho_\bS \ket{\psi_l}_\bS.
\end{equation}
Hence the magnitude of the $kl$th matrix element is suppressed by a factor of
\begin{equation}
    \left|\Tr\bigl((V_{\bF\to\bG}^{(l)})^\dagger V_{\bF\to\bG}^{(k)} \rho_\bF \bigr)\right|.
\end{equation}
This equals 1 for $k=l$, meaning that on-diagonal terms are preserved. For $k \neq l$, the unitarity of $V_{\bF\to\bG}^{(l)}$ and $V_{\bF\to\bG}^{(k)}$ implies that it is no greater than $1$.

There may exist certain initial environment states $\rho_\bF$ that lead to a suppression factor of 1 for some $k \neq l$, meaning that there is an off-diagonal term that is not \emph{strictly} suppressed. But these states are fine-tuned.
Indeed, one way to see this is to realize that all full-rank $\rho_\bF$ lead to strict off-diagonal suppression, and that in the space of density operators of a fixed dimension, the subset of rank-deficient density operators has measure zero. For if $k\neq l$ then $(V_{\bF\to\bG}^{(l)})^\dagger V_{\bF\to\bG}^{(k)} \not\propto I_\bF$ by \cref{thm:uni_control}, which together with full-rankness of $\rho_\bF$ implies that the suppression factor is strictly less than 1.

Hence off-diagonal suppression is very likely, though not logically guaranteed, to occur when a maximally decohering $\cu$ is applied to a product state. This supports the view that diagonality should be regarded as a symptom rather than a defining feature of the process of decoherence~\cite{Zurek:1994zq}. 
It also turns out that regardless of how much or how little off-diagonal elements of $\rho_\bS$ are suppressed, they make no difference to the environment fragment $\bG$.
Indeed, \cref{eq:uni_control} shows that while $\bG$ obtains information encoded in the basis $\{\ket{\psi_k}_\bS\}_k$, it obtains no information about the phases \emph{between} these basis elements.
Formally, if we define the dephasing channel
\begin{equation}
    \label{eq:dephase}
    \cd_\bS(\rho_\bS) := \sum_k \ketbra{\psi_k}_\bS \,\rho_\bS\, \ketbra{\psi_k}_\bS,
\end{equation}
which eliminates the off-diagonal elements of $\rho_\bS$, then
\begin{equation}\label{eq:env-doesnt-notice}
    {\rm Tr}_\bS  \circ \cu 
    = {\rm Tr}_\bS  \circ \cu \circ \cd_\bS;
\end{equation}
that is, the environment cannot distinguish between $\rho_\bS$ and its diagonalized counterpart $\cd_\bS(\rho_\bS)$. It follows that
an observer who learns about $\bS$ only by interactions with the environment fragment $(\bF,\bG)$ cannot learn anything about these off-diagonal terms. 

Finally, we note that if $\cu$ is not maximally decohering but just \emph{decohering} (as in \cref{def:decoherent}, i.e.\ if $\bS_{{\rm dec}\bG}^\cu \neq \mathbb C I_\bS$) then, analogously to \cref{eq:suppression}, the off-diagonal \emph{blocks} of $\rho_\bS$ in the direct sum decomposition defined by $\bS_{{\rm dec}\bG}^\cu$ get suppressed.
Moreover, analogously to \cref{eq:env-doesnt-notice}, $\rho_\bS$ is equivalent to its \emph{block}-diagonalized counterpart from the point of view of the environment fragment. This reflects the fact, already evident from \cref{thm:algebras}, that the fragment has no access to observables that do not commute with the decohered observables.

\subsection{Time asymmetry}

Decoherence is often associated with time asymmetry. This is not surprising, since some of the symptoms of decoherence do indeed pick out a preferred direction of time: if off-diagonal elements are suppressed over time, then, when the same process is seen in reverse, they are \textit{amplified}. 

One might wonder if the ultimate reason for this time asymmetry is that the underlying dynamical structures responsible for decoherence somehow pick out a preferred direction of time.
But we will now argue that this is not the case.
First of all, $\cu$ itself is of course invertible and thus does not pick out a preferred direction of time.
Neither does causal influence, the basic conceptual ingredient in our analysis of decoherence, pick out a preferred direction: the existence of a causal influence between operators is time-reversal invariant, i.e.\ 
\begin{equation}\label{eq:reversibility_inf}
    M \xrightarrow \cu N \qquad \Longleftrightarrow \qquad N \xrightarrow{\cu^{-1}} M.
\end{equation}
(This follows from the fact that $[N, \cu^{-1}(M)] = \cu^{-1}([\cu(N), M]) \neq 0$ if and only if $[M, \cu(N)] \neq 0$; c.f.~\cite[Prop.~3.1]{ormrod2023causal}.)
Finally, decoherence itself prefers no particular direction of time: as the following shows, the effectively classical degrees of freedom picked out by decoherence are time-reversal invariant.

\begin{theorem} \label{thm:reversibility}
        The decoherent algebra is invariant under inversion of $\cu$, in the sense that
        \begin{equation}
            \label{eq:reversibility_dec}
            \bS_{{\rm dec}\bG}^\cu = \cu^{-1}(\bT_{{\rm dec}\bF}^{\cu^{-1}}).
        \end{equation}
        As a consequence, $\cu$ bijectively maps between the preferred subspaces $\ch_\bS^k$ of $\bS$ in \cref{eq:preferred-decomposition} and the preferred subspaces $\ch_\bT^l$ of $\bT$.
\end{theorem}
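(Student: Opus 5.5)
The plan is to reduce everything to the potentially accessible algebras, where \cref{thm:pacc_ninf} gives explicit formulas, and then pass to centres using \cref{thm:algebras}.

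\textbf{Step 1: identify the potentially accessible algebra of the reversed channel.} Apply \cref{thm:pacc_ninf} to the inverse channel $\cu^{-1}:\bT\bG\to\bS\bF$. Here $\bT$ plays the role of the input system, $\bG$ the input fragment, $\bS$ the output system and $\bF$ the output fragment. The theorem gives
\begin{equation}
\bT_{{\rm pacc}\bF}^{\cu^{-1}} = \cu(\bS)\cap\bT .
\end{equation}
For the forward channel it gives $\bS_{{\rm pacc}\bG}^\cu = \cu^{-1}(\bT)\cap\bS$. Since $\cu$ is a $*$-isomorphism $\bS\bF\to\bT\bG$, it maps intersections to intersections. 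Hence
\begin{equation}
\cu\bigl(\bS_{{\rm pacc}\bG}^\cu\bigr) = \bT\cap\cu(\bS) = \bT_{{\rm pacc}\bF}^{\cu^{-1}},
\end{equation}
so $\cu$ restricts to an algebra isomorphism between the two potentially accessible algebras. This is the heart of the argument, and it is essentially immediate: condition (3) of \cref{thm:pacc_ninf} is manifestly symmetric in the two time directions.

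\textbf{Step 2: pass to centres.} By \cref{thm:algebras}, the decohered algebra is the centre of the potentially accessible algebra, i.e.\ $\bS_{{\rm dec}\bG}^\cu = Z(\bS_{{\rm pacc}\bG}^\cu)$, and likewise $\bT_{{\rm dec}\bF}^{\cu^{-1}} = Z(\bT_{{\rm pacc}\bF}^{\cu^{-1}})$. The centre here is intrinsic to the algebra: it consists of the elements of $\bX$ commuting with all of $\bX$, which equals $\bX\cap\bX'$ whichever ambient algebra the commutant is taken in. A $*$-isomorphism therefore maps the centre onto the centre. Applying Step 1 gives $\cu(\bS_{{\rm dec}\bG}^\cu)=\bT_{{\rm dec}\bF}^{\cu^{-1}}$, which is \cref{eq:reversibility_dec}. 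The one point needing a little care is this ambient-commutant issue: I would check explicitly that the phrase ``centre of the potentially accessible algebra'' in \cref{thm:algebras} means $\bX\cap\bX'$ with the commutant taken within $\bS$, which indeed coincides with the intrinsic centre.

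\textbf{Step 3: the consequence for preferred subspaces.} The preferred subspaces $\ch_\bS^i$ correspond to the minimal projectors $\pi^i$ of the commutative algebra $\bS_{{\rm dec}\bG}^\cu$ (\cref{eq:dec_form}); similarly the $\ch_\bT^l$ correspond to the minimal projectors of $\bT_{{\rm dec}\bF}^{\cu^{-1}}$. A $*$-isomorphism between commutative algebras maps minimal projectors bijectively onto minimal projectors. Hence for each $i$ there is a unique $l=l(i)$ with
\begin{equation}
\cu(\pi^i_\bS\otimes I_\bF) = \pi^{l}_\bT\otimes I_\bG ,
\end{equation}
and $i\mapsto l(i)$ is a bijection. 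At the level of $U$, this says that $U$ maps $\ch_\bS^i\otimes\ch_\bF$ unitarily onto $\ch_\bT^{l(i)}\otimes\ch_\bG$. This is the sense in which $\cu$ bijectively maps preferred subspaces of $\bS$ to those of $\bT$.
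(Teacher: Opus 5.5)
Your proposal is correct and follows essentially the same route as the paper. The paper also uses \cref{thm:pacc_ninf} to show that $\cu$ maps $\bS_{{\rm pacc}\bG}^\cu = \cu^{-1}(\bT)\cap\bS$ onto $\bT_{{\rm pacc}\bF}^{\cu^{-1}} = \bT\cap\cu(\bS)$, and then takes centres using \cref{thm:algebras}. Your Step 3 spells out the consequence that the paper leaves implicit: minimal projectors go to minimal projectors, so $U$ carries $\ch_\bS^i\otimes\ch_\bF$ onto $\ch_\bT^{l(i)}\otimes\ch_\bG$, which remains valid even when $\dim\bF\neq\dim\bG$.
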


\begin{proof}
    First note that
    \begin{equation} \label{eq:reversibility_pacc}
        \begin{split}
            \bS_{{\rm pacc}\bG}^\cu &= \cu^{-1}(\bG)' \cap \bS
            = \cu^{-1}(\bT) \cap \bF'
            = \cu^{-1} \big(\bT \cap \cu(\bF')\big)
            = \cu^{-1}( \bT_{{\rm pacc}\bF}^{\cu^{-1}}).
        \end{split}
    \end{equation}
    By \cref{thm:algebras}, taking the centre of the algebras on each side of this equation gives \cref{eq:reversibility_dec}.
\end{proof}

These results raise a question. If neither the unitary interaction itself, nor the influences through it, nor the classical degree of freedom that they pick out a preferred direction of time, then what is the origin of the time asymmetry associated with the symptoms of decoherence? The answer is that the time asymmetry arises from background assumptions about quantum states that already pick out a preferred direction of time.

For example, to derive the suppression of off-diagonals in \cref{eq:suppression}, we assumed that the initial system-environment state factorized. If we had instead assumed that the \emph{final} state factorized, then by \cref{thm:reversibility} the very same dynamical structures would have entailed an amplification of off-diagonals over time. Thus depending on one's prior beliefs about the relevant quantum states, either the suppression or the amplification of off-diagonals can be regarded as a symptom of the process of decoherence.

\begin{remarkbox} \label{remark:reversibility}
    A similar story can be told about causation itself. Although causation is often regarded as necessarily time-asymmetric, we would argue that this is true only of its \textit{symptoms}, such as signalling. Whether agents can exploit a causal influence between systems to send a signal depends on their prior knowledge of those systems. Since agents typically know more about the past than the future, this provides a possible explanation of why agents are able to send signals into the future (but not vice versa) even if the underlying process through which they signal is reversible. 
\end{remarkbox}

\subsection{Robustness}

Unitaries of the coherent control form of \cref{eq:uni_control} feature prominently in the literature on decoherence. However, the unitaries in the literature often satisfy the additional constraint that the preferred basis $\{\ket{\psi_k}\}_k$ is \textit{robust}, meaning that 
\begin{equation}
    \forall k, \quad \ket{\psi_k}_\bS = \ket*{\tilde \psi_k}_\bT,
\end{equation}
where we are presupposing an identification between states on $\bS$ and states on $\bT$. More generally, for continuous-time evolution generated by a Hamiltonian, a basis is said to be robust if each projector $\ketbra{\psi_k}{\psi_k}$ commutes with the Hamiltonian, so that the basis states are invariant under the evolution.

In fact, robustness is often used as a defining \textit{criterion} of the preferred basis that is selected by decoherence, calling into question why our analysis allows the preferred basis to be nonrobust. To answer this question, it is helpful to begin by giving an argument for the robustness criterion adapted from \cite{zurek1981pointer}, which tackles the preferred basis problem in the context of a measurement modelled unitarily. 

The key assumption of the argument is that the preferred \enquote{pointer} basis of the measurement device should be one that is capable not only of displaying a record of the outcome of the measurement, but of \textit{storing} that record even while the device is being perturbed by its interaction with the environment.
With this in mind, we assume that $\cu$ represents not a measurement itself, but a \textit{post-measurement interaction} between $(\bS, \bT)$, representing the pointer on the measurement device, and $(\bF, \bG)$, representing the environment. Suppose that before $\cu$ the measurement outcome is recorded in the position basis of the pointer. If $\cu$ makes the pointer move, then this record may end up being lost in the correlations between the pointer and its environment. By contrast, if we assume that the position of the record is robust under $\cu$, then the pointer does not move at all, and thus is certain to continue recording the measurement outcome. 

In response to this argument, we note that not \textit{all} possible motions of the pointer lead to the loss of a record. Indeed, suppose $\cu$ makes the pointer shift by some fixed distance $d$. To recover the record after $\cu$, an observer could apply a local unitary that shifts the pointer back. Alternatively, the observer could simply change the way that they interpret the position of the pointer, bearing in mind that the position $x$ now denotes the outcome formerly denoted by $x-d$. This suggests that robustness, while sufficient, is not necessary for the preservation of records. 

By contrast, we would suggest that the condition of \emph{potential accessibility} can be regarded as both necessary and sufficient for the device to continue to store the record. Indeed, by \cref{thm:pacc_ninf}, $M_\bS$ is left potentially accessible if and only if it does not leak into the surrounding environment. If a potentially accessible $M_\bS$ is not robust, then any outcome encoded in $M_\bS$ before $\cu$ will be encoded in a different observable after $\cu$---but the new encoding observable is also a local system observable, and can be restored to the original observable by means of a local unitary applied to the system. Hence although this observable is nontrivially transformed by $\cu$, the transformation in question does not lead to the loss of a record.

Another indication that potential accessibility is a suitable criterion for identifying preferred degrees of freedom is its close connection with the \textit{predictability sieve}. Indeed, \cref{app:sieve} shows that in the present setting of a discrete-time interaction $\cu$, the causal approach and the predictability sieve are in agreement about when a basis is preferred. Hence the predictability sieve itself does not require robustness in this setting.
This strengthens the argument that in a world where time were fundamentally discrete, there would be no reason to believe that the preferred basis should be robust.  

As \cref{app:continuous} shows, robustness \textit{does} become a consequence of decoherence in the continuous-time setting of a Hamiltonian interaction.
However, the arguments above suggest that even in that setting, robustness should be regarded as precisely that: a consequence of decoherence rather than one of its conceptual ingredients. In this sense, robustness can be considered as a symptom of decoherence as well (though it is not a \enquote{kinematical} symptom, as it can be defined purely in terms of the Hamiltonian). 

Furthermore, \cref{app:noiseless} shows that when robustness is redefined in a rotating frame of reference that absorbs the system's local evolution, the resulting notion is equivalent to potential accessibility. This result also establishes a link between potential accessibility and the \textit{noiseless subsystems} from quantum error correction \cite{knill1997theory, lidar2003decoherence,lidar2014review}.

\section{Dual decoherence: the causal structure of redundant implementation} \label{sec:dual}

The next section will describe how outcome-like events emerge from decoherence. But it will also describe how states (or more precisely, state-like events) emerge from a \textit{dual} version of decoherence. The role of this section is to introduce dual decoherence and give a sense of why states should emerge from it.

Dual decoherence is naturally suggested by the previous section’s insight that the existence of a causal influence does not depend on the direction of time. Indeed, for a unitary $\cu:\bS\bF \to \bT\bG$, let us consider the algebra $\bT^{\cu^{-1}}_{{\rm dec}\bF}$. Obviously, one way of interpreting this algebra is by appealing to the causal influences through the inverse unitary $\cu^{-1}$. Explicitly, we know from \cref{sec:channel} that an operator $M_\bT$ lies in $\bT^{\cu^{-1}}_{{\rm dec}\bF}$ if and only if it satisfies both accessibility,
\begin{equation} \label{eq:acc} \forall N_\bT, \qquad\qquad [M_\bT, N_\bT] \neq 0 \qquad \implies \qquad N_\bT \xrightarrow {\cu^{-1}} \bF, \end{equation}
and potential accessibility, which by \cref{thm:pacc_ninf} is equivalent to the no-influence condition
\begin{equation} \label{eq:ninf}
    M_\bT \ninf{\cu^{-1}} \bF.
\end{equation}

There is, however, another interpretation of this algebra, which refers directly to the influences through $\cu$ rather than through its inverse. By the time symmetry of causal influence (as expressed by \cref{eq:reversibility_inf}), the causal condition of \cref{eq:acc} on $\cu^{-1}$ is equivalent to a causal condition on $\cu$, namely,
\begin{equation} \label{eq:imp}
    \forall N_\bT, \qquad\qquad [N_\bT, M_\bT] \neq 0 \qquad \implies \qquad \bF \xrightarrow {\cu} N_\bT.
\end{equation} 
Similarly, the causal condition of \cref{eq:ninf} is equivalent to
\begin{equation} \label{eq:pimp}
    \bF \ninf{\cu} M_\bT.
\end{equation}

The question now is how these last two conditions are to be interpreted. To that end, let us assume that $M_\bT$ is Hermitian. As we saw in \cref{sec:channel}, the accessibility condition of \cref{eq:acc} is then naturally interpreted by thinking of $M_\bT$ as an observable. But to interpret \cref{eq:imp} we will instead think of $M_\bT$ as a \textit{generator}.
Then \cref{eq:imp} says that any observable nontrivially transformed by the generator $M_\bT$ is influenced by $\bF$ through $\cu$.
That is, the environment fragment $\bF$ is able to transform any observable that the generator $M_\bT$ would have transformed. In this sense, the fragment is able to \textit{implement} the generator.

Similarly, the causal condition of \cref{eq:pimp} can be interpreted in terms of \textit{potential implementability}.
In a line of argument that closely resembles our motivation of potential accessibility around \cref{eq:potential_acc}, one can ask whether there exists any \emph{prior} (rather than posterior, as in \cref{eq:potential_acc}) interaction $\cv$ with another environment fragment $(\bH,\bI)$ that would allow $\bH$ to implement the generator $M_\bT$ via the resulting tripartite interaction
\begin{equation}  \label{eq:potential_imp}
    \begin{tikzpicture}
	\begin{pgfonlayer}{nodelayer}
		\node [style=none] (25) at (21, 1.25) {};
		\node [style=none] (26) at (21, 2.75) {};
		\node [style=none] (27) at (26.5, 2.75) {};
		\node [style=none] (28) at (26.5, 1.25) {};
		\node [style=none] (29) at (23.75, 2) {$\cu$};
		\node [style=none] (30) at (25.75, -4.25) {};
		\node [style=none] (31) at (25.75, 1.25) {};
		\node [style=none] (32) at (25.75, 2.75) {};
		\node [style=none] (33) at (25.75, 4.25) {};
		\node [style=none] (34) at (26.25, -3.75) {$\bF$};
		\node [style=none] (35) at (25.75, 4.25) {};
		\node [style=none] (36) at (21.75, 0) {};
		\node [style=none] (37) at (21.75, -2.75) {};
		\node [style=none] (38) at (21.75, -4.25) {};
		\node [style=none] (39) at (22.25, -3.75) {$\bR$};
		\node [style=none] (40) at (26.25, 3.75) {$\bG$};
		\node [style=none] (41) at (22.25, 0) {$\bS$};
		\node [style=none] (42) at (17, -2.75) {};
		\node [style=none] (43) at (17, -1.25) {};
		\node [style=none] (44) at (22.5, -1.25) {};
		\node [style=none] (45) at (22.5, -2.75) {};
		\node [style=none] (46) at (19.75, -2) {$\cv$};
		\node [style=none] (47) at (21.75, -1.25) {};
		\node [style=none] (48) at (21.75, 1.25) {};
		\node [style=none] (49) at (21.75, 2.75) {};
		\node [style=none] (50) at (21.75, 4.25) {};
		\node [style=none] (51) at (21.75, 4.25) {};
		\node [style=none] (52) at (17.75, -1.25) {};
		\node [style=none] (53) at (17.75, 4.25) {};
		\node [style=none] (54) at (17.75, -2.75) {};
		\node [style=none] (55) at (17.75, -4.25) {};
		\node [style=none] (56) at (18.25, -3.75) {$\bH$};
		\node [style=none] (57) at (22.25, 3.75) {$\bT$};
		\node [style=none] (58) at (18.25, 3.75) {$\bI$};
	\end{pgfonlayer}
	\begin{pgfonlayer}{edgelayer}
		\draw (26.center) to (27.center);
		\draw (27.center) to (28.center);
		\draw (28.center) to (25.center);
		\draw (25.center) to (26.center);
		\draw (33.center) to (32.center);
		\draw (31.center) to (30.center);
		\draw (37.center) to (38.center);
		\draw (43.center) to (44.center);
		\draw (44.center) to (45.center);
		\draw (45.center) to (42.center);
		\draw (42.center) to (43.center);
		\draw (50.center) to (49.center);
		\draw (48.center) to (47.center);
		\draw (53.center) to (52.center);
		\draw (54.center) to (55.center);
	\end{pgfonlayer}
\end{tikzpicture}.
\end{equation}
If so, then $M_\bT$ may be called \emph{potentially implementable}, and by an argument dual to \cref{thm:pacc_ninf}, this notion is equivalent to the causal condition in \cref{eq:pimp}. If, on the other hand, $M_\bT$ is not potentially implementable, then this means that $\cu$ intercepts the causal influences required for $\bH$ to implement the generator.

Following an argument analogous to that of \cref{thm:algebras}, we know that the operators on $\bT$ that are implementable via $\cu$ form an algebra.
And due to the equivalence of \cref{eq:acc,eq:imp}, this is precisely the algebra $\bT_{{\rm acc}\bF}^{\cu^{-1}}$ of operators that are \textit{accessible} through $\cu^{-1}$.
Similarly, the set of potentially implementable operators is precisely the algebra $\bT_{{\rm pacc}\bF}^{\cu^{-1}}$. Hence the set of operators on $\bT$ that are both implementable and potentially implementable is precisely the algebra $\bT_{{\rm dec}\bF}^{\cu^{-1}}$. We refer to the operators in $\bT_{{\rm dec}\bF}^{\cu^{-1}}$ as \emph{dual decohered}. Just as decoherence can be understood as the potential for the proliferation of information about observables, dual decoherence can be understood as the potential for the \textit{redundant implementation} of generators by many fragments of the environment.

Decoherence and dual decoherence each pick out a preferred commutative degree of freedom in an analogous way, namely by partitioning a system algebra into two subalgebras that are each other's commutant. This is no coincidence, but immediately follows from the fact that each version of decoherence is equivalent to the other up to the reversal of the unitary dynamics $\cu \mapsto \cu^{-1}$. If this reversal is seen as a change in how the physics is described rather than a change in the physics itself, then the difference between decoherence and dual decoherence must also be seen as mere convention.

Given this symmetry, and the widespread belief that decoherence plays a crucial role in the emergence of classicality, one must consider the possibility that dual decoherence plays a similarly important role. The question, then, is just what this role might be. Our answer will be: in much the same way that the outcomes of quantum measurements emerge from decoherence, quantum states emerge from dual decoherence. 

While a full understanding of how states arise from dual decoherence will not be possible until the next section, we can at this stage sketch a rough intuition for why one might expect dual decoherence to be associated with state preparation. 
Just as macroscopic observers can only perceive microscopic objects by means of the imprints the objects leave on their environment, such observers can only \textit{control} microscopic objects by controlling the objects' surrounding environment. 
To prepare a microscopic object in a particular state, the observer must act on that object, but the observer can only act on the object using tools that are comparatively vast, and composed of a great number of smaller fragments. Macroscopic observers have not only large eyes, but fat fingers.

Therefore, for a macroscopic observer to prepare a given state, it is not enough for a single small fragment to influence the system in whatever way is appropriate for preparing that state. Many independent fragments of the environment must influence the system the same way. Only then will the fragments serve as a large enough \enquote{handle} for the observer to turn, by which we mean that the ability of the fragments to prepare a state will translate into the ability of the observer to prepare that state.

This leaves the question of what sort of influence from a single fragment to the system is required to assist with the preparation of a state in a given basis $\{\ket*{\tilde \psi_k}_\bT\}_k$. On the account that we will develop, the relevant sort of influence is one that enables the fragment to implement the generators that are diagonal in this basis. This is a natural requirement because these generators may be used to randomize precisely the degrees of freedom that are conjugate to this basis.

Taken together, the remarks in the last two paragraphs suggest that whereas the proliferation of information is required for the perception of outcomes by a macroscopic observer, the redundant implementation of generators is required for the preparation of states by a macroscopic observer.
More precisely, for a given unitary interaction $\cu: \bS \bF \to \bT\bG$ to assist with the preparation of a state in $\{\ket*{\tilde \psi_k}_\bT\}_k$, it should make the generators diagonal in this basis implementable by $\bF$ while also leaving them potentially implementable via other fragments, such as $\bH$ in \cref{eq:potential_imp}. That is to say, $\cu$ should dually decohere these generators.

We close this section by briefly remarking on the need for two versions of decoherence rather than just one. We aim to explain the classicality of our experiences, but those experiences consist not only of perceptions, but also of actions; we are not merely observers, but also agents. Decoherence relates to perception, and dual decoherence to action. Decoherence is defined in terms of influences from system to environment required for proliferation, and will thus help us explain how we come to observe outcomes. Dual decoherence is defined in terms of influences from environment to system required for redundant implementation, and will thus help us explain how we prepare states.

Quantum Darwinism emphasises the importance of the environment’s role as a communication channel between system and observer \cite{Ollivier_2005}; here, we are further emphasising its status as a \textit{two-way} channel. As the next section will demonstrate, this observation is key to unifying the two major formalisms for decoherence: environmentally induced decoherence and consistent histories.

\section{Outcomes from decoherence, states from dual decoherence} \label{sec:circuit}

The program of environmentally induced decoherence developed in e.g.\ \cite{zeh1970interpretation, zeh1973toward, zurek1981pointer, zurek1982environment} tends to focus on how a certain classical degree of freedom associated with a particular system emerges from a particular interaction with its environment.  So far, this has also been true of the causal approach that we have been developing, which has focused on a single bipartite interaction $\cu$ between a system and a fragment of the environment. 
As a result, if we were to stop here, then it would not generally be  clear how, or in what circumstances, a single consistent story can be told about the emergence of \textit{many} classical degrees of freedom, where each one might be associated with a different system and a different interaction.

The consistent histories formalism \cite{griffiths1984consistent, griffiths2003consistent, gell1990complexity}, on the other hand, aptly describes histories involving many degrees of freedom associated with different systems, times, and places. However, there are vastly many consistent history sets, most of which are neither related to environmentally induced decoherence nor endowed with any obvious physical significance \cite{dowker1996consistent}.

The two approaches thus have complementary strengths: environmentally induced decoherence specializes in identifying physically significant emergent degrees of freedom, but is yet to provide a general way of “gluing them together” into a single consistent description, whereas the consistent histories formalism provides a general framework for describing many degrees of freedom across different systems, times, and places, but does not itself single out physically distinguished history sets.

This section will use the causal approach to show how a consistent history set is singled out by environmentally induced decoherence in a highly general setting. These histories are built out of many degrees of freedom associated with different systems, arising due to different decohering interactions. This result synthesizes the two approaches to decoherence within a causal framework, resulting in a powerful, unified formalism for decoherence.

Performing this synthesis requires us to move beyond the simple setting of a single bipartite unitary interaction. Our new setting will be a quantum \textit{circuit}, defined by a composition of a finite number of unitary interactions. In precise terms, what we will show is that for any subset $\mathfrak{B}$ of systems in an arbitrary circuit and every $\bX \in \mathfrak{B}$, two preferred subspace decompositions of $\ch_\bX$ are selected---one by decoherence and another by dual decoherence. Each of these subspace decompositions can be thought of as a collection of possible events, where each event corresponds to a preferred subspace of $\ch_\bX$ (or, equivalently, to the projector onto that subspace). A complete list of events, two for each system, is a \textit{history}. Unlike general history sets in quantum theory, the history set so derived from (dual) decoherence is guaranteed to be \textit{consistent} \cite{griffiths1984consistent, griffiths2003consistent, gell1990complexity}, meaning that the classical sum rule is respected, i.e.\ probabilities for mutually exclusive events are additive.

As anticipated in the previous section, we will see through examples that the events associated with decoherence are often naturally thought of as \textit{outcomes} (or records of outcomes) while those associated with dual decoherence are often naturally thought of as \textit{states}. Since dual decoherence is defined purely in terms of the unitary dynamics, this provides a precise sense in which quantum states emerge from quantum dynamics. And since dual decoherence supplies half of the events that make up a history, the idea that quantum states are not fundamental, but emergent, that they are symptoms of the process of decoherence rather than part of the definition, plays an essential role in the unification of environmentally induced decoherence and consistent histories.

The main result of this section serves as a simplification of a procedure introduced in \cite{ormrod2024quantum} for deriving consistent histories from a unitary circuit (in which the connection with decoherence was not made explicit). \Cref{sec:crqt} and \cref{app:bubble} will elaborate further on the relation between this work and \cite{ormrod2024quantum}.

This section begins with an example of a single measurement and the general derivation of consistent history sets from decoherence. We then go on to study two more examples, namely a sequence of two incompatible measurements and the Wigner's friend scenario \cite{wigner1995remarks}. Finally, we discuss what sort of relationalism is suggested by decoherence, and close the section with a summary of the account of emergent classicality suggested by our analysis.

\subsection{A single measurement: circuit and preferred subspaces}
\label{sec:singlemmt}

Since a huge part of the motivation for decoherence comes from measurement, it is natural to start with a model of a measurement.
As will become clear from our analysis, a minimal model of a quantum measurement is given by the circuit
\begin{equation} \label{eq:measurement}
    \begin{tikzpicture}
	\begin{pgfonlayer}{nodelayer}
		\node [style=none] (0) at (17.75, -7.75) {};
		\node [style=none] (1) at (8.25, -7.75) {$\bF_1$};
		\node [style=none] (2) at (17.75, 7.25) {};
		\node [style=none] (3) at (11.75, -3.75) {};
		\node [style=none] (4) at (12.25, -2) {$\bS$};
		\node [style=none] (5) at (18.25, 2) {$\bN$};
		\node [style=none] (6) at (11.75, 1.5) {};
		\node [style=none] (7) at (7.75, -6.25) {};
		\node [style=none] (8) at (7.75, -9.25) {};
		\node [style=none] (9) at (18.25, -2) {$\bM$};
		\node [style=black dot] (10) at (11.75, 0) {};
		\node [style=none] (11) at (18.25, 0) {};
		\node [style=none] (12) at (17.25, 0) {};
		\node [style=none] (13) at (18.25, 0) {};
		\node [style=none] (14) at (17.25, 0) {};
		\node [style=black dot] (15) at (11.75, -5.75) {};
		\node [style=none] (16) at (8.25, -5.75) {};
		\node [style=none] (17) at (8.25, -5.25) {};
		\node [style=none] (18) at (8.25, -6.25) {};
		\node [style=none] (19) at (7.25, -6.25) {};
		\node [style=none] (20) at (7.25, -5.25) {};
		\node [style=none] (21) at (7.75, -5.25) {};
		\node [style=none] (22) at (7.75, -2.25) {};
		\node [style=none] (23) at (7.75, -5.75) {\small $\cu$};
		\node [style=none] (24) at (11.25, -3.75) {};
		\node [style=none] (25) at (12.25, -3.75) {};
		\node [style=none] (26) at (12.25, -4.75) {};
		\node [style=none] (27) at (11.25, -4.75) {};
		\node [style=none] (28) at (11.75, -4.25) {\small $\cv$};
		\node [style=none] (29) at (11.75, -4.75) {};
		\node [style=none] (30) at (11.75, -7.5) {};
		\node [style=none] (31) at (21.75, -6.25) {};
		\node [style=none] (32) at (21.75, -9.25) {};
		\node [style=none] (33) at (22.5, -7.75) {$\bG_1$};
		\node [style=black dot] (34) at (17.75, -5.75) {};
		\node [style=none] (35) at (21.25, -5.75) {};
		\node [style=none] (36) at (21.25, -5.25) {};
		\node [style=none] (37) at (21.25, -6.25) {};
		\node [style=none] (38) at (22.25, -6.25) {};
		\node [style=none] (39) at (22.25, -5.25) {};
		\node [style=none] (40) at (21.75, -5.25) {};
		\node [style=none] (41) at (21.75, -2.25) {};
		\node [style=none] (42) at (21.75, -5.75) {\small $\cu$};
		\node [style=none] (43) at (21.75, 5.25) {};
		\node [style=none] (44) at (21.75, 2.25) {};
		\node [style=none] (45) at (22.5, 7.75) {$\bH_2$};
		\node [style=black dot] (46) at (17.75, 5.75) {};
		\node [style=none] (47) at (21.25, 5.75) {};
		\node [style=none] (48) at (21.25, 6.25) {};
		\node [style=none] (49) at (21.25, 5.25) {};
		\node [style=none] (50) at (22.25, 5.25) {};
		\node [style=none] (51) at (22.25, 6.25) {};
		\node [style=none] (52) at (21.75, 6.25) {};
		\node [style=none] (53) at (21.75, 9.25) {};
		\node [style=none] (54) at (21.75, 5.75) {\small $\cu$};
		\node [style=none] (55) at (6.75, -4.5) {};
		\node [style=none] (56) at (17.75, 2.5) {};
		\node [style=none] (57) at (17.75, -2.5) {};
		\node [style=none] (58) at (8.25, -3.75) {$\bF_2$};
		\node [style=none] (59) at (22.5, -3.75) {$\bG_2$};
		\node [style=none] (60) at (22.5, 3.75) {$\bH_1$};
	\end{pgfonlayer}
	\begin{pgfonlayer}{edgelayer}
		\draw (10) to (11.center);
		\draw [bend left=90, looseness=1.75] (12.center) to (11.center);
		\draw [bend right=90, looseness=1.75] (14.center) to (13.center);
		\draw (0.center) to (2.center);
		\draw (15) to (16.center);
		\draw (3.center) to (6.center);
		\draw (7.center) to (8.center);
		\draw (20.center) to (19.center);
		\draw (17.center) to (18.center);
		\draw (18.center) to (19.center);
		\draw (20.center) to (17.center);
		\draw (22.center) to (21.center);
		\draw (27.center) to (24.center);
		\draw (24.center) to (25.center);
		\draw (25.center) to (26.center);
		\draw (26.center) to (27.center);
		\draw (30.center) to (29.center);
		\draw (34) to (35.center);
		\draw (31.center) to (32.center);
		\draw (39.center) to (38.center);
		\draw (36.center) to (37.center);
		\draw (37.center) to (38.center);
		\draw (39.center) to (36.center);
		\draw (41.center) to (40.center);
		\draw (46) to (47.center);
		\draw (43.center) to (44.center);
		\draw (51.center) to (50.center);
		\draw (48.center) to (49.center);
		\draw (49.center) to (50.center);
		\draw (51.center) to (48.center);
		\draw (53.center) to (52.center);
	\end{pgfonlayer}
\end{tikzpicture} \quad,
\end{equation}
which consists of qubit systems, a CNOT, a local unitary channel $\cv$, and three coherent control channels, each corresponding to a unitary operator of the form $\ketbra{0} \otimes U^{(0)} + \ketbra{1} \otimes U^{(1)}$.
We assume nothing about $U^{(0)}$ and $ U^{(1)}$ except that they are neither equal to nor equivalent up to a phase. By \cref{thm:uni_control}, this is equivalent to the assumption that the control interactions are maximally decohering.

Notice that in \cref{eq:measurement}, not all systems are labelled, but only those in the subset
\begin{equation} \label{eq:bubble}
    \mathfrak{B} = \{\bF_1, \bF_2, \bG_1, \bG_2, \bS, \bM, \bN, \bH_1, \bH_2\}.
\end{equation}
These are our \textit{systems of interest}. This means that we will investigate only how preferred subspace decompositions emerge from the causal relations between members of $\mathfrak{B}$, rather than from more general causal relations. We will come to the thorny question of why the analysis requires a designated set of systems of interest and what this tells us about the emergence of classicality later on, in \cref{subsec:relationalism}. In the meantime, we simply take for granted that $\mathfrak{B}$ is the designated set.

A brief remark on how the causal relations in this circuit are interpreted is in order. Since the circuit is unitary, it can be read either from bottom to top, or from top to bottom (inverting each gate), and the influences between the systems in the circuit can equivalently be thought of as flowing upwards or downwards (by \cref{eq:reversibility_inf}). We will choose to think of influences as flowing upwards, although we will emphasise throughout that this is merely a convention, and not forced by the circuit itself.

We start by deriving preferred subspace decompositions for $\bS$. There will be two: the first arises from how $\bS$ influences systems above it (decoherence), the second from how it is influenced by systems below it (dual decoherence). Unlike in previous sections, here there are \textit{two} relevant systems that are influenced by $\bS$: $\bN$ and $\bH_2$. These two systems are themselves causally related, meaning that they cannot be combined via the usual tensor product to form a single system. Does this compel us to apply our analysis of decoherence separately, obtaining one decomposition from the influence of $\bS$ on $\bN$, and another one from the influence of $\bS$ on $\bH_2$?

It does not: if we work in the Heisenberg picture, then even causally related systems can be combined into a larger system. To shift to the Heisenberg picture, we pull back the algebra of each system through all of the unitaries in the circuit that come below it, until it reaches the overall five-qubit input space of the circuit. (Or we could equally well push forward onto the output space; what matters is just that the algebras all land on the \textit{same} Hilbert space). We denote the pulled-back algebras with tildes, e.g., $\tilde \bN$. Having done this, we take the algebra
\begin{equation}
    \tilde \bN \lor \tilde \bH_2
\end{equation}
generated by $\tilde \bN$ and $\tilde \bH_2$ to represent the joint system.

\Cref{sec:channel} defined decoherence as the conjunction of accessibility and potential accessibility. What are the relevant accessible and potentially accessible algebras here? We begin by determining the subalgebra of $\tilde \bS$ left potentially accessible by $\tilde \bN \lor \tilde \bH_2$ to other fragments of the environment that do not directly interact with $\tilde \bN \lor \tilde \bH_2$---i.e.\ fragments that only interact with the commutant $(\tilde \bN \lor \tilde \bH_2)'$.
(This parallels the discussion in \cref{sec:channel} around \cref{eq:potential_acc}, in which we asked what was left potentially accessible after the unitary interaction $\cu: \bS\bF \rightarrow \bT \bG$ by $\bG$ to a fragment $\bI$ that did not directly interact with $\bG$, but only with $\bT = \bG'$.) 

\Cref{thm:pacc_ninf} established that this potentially accessible algebra is equivalent to the subalgebra of $\tilde \bS$ that does not influence $\tilde \bN \lor \tilde \bH_2$.\footnote{\Cref{thm:pacc_ninf} only explicitly covers the case where $\tilde \bN \lor \tilde \bH_2$ is a factor algebra, but can be straightforwardly generalized to the nonfactor case.} We therefore write
\begin{equation}\label{eq:circuit-pacc-of-s}
    \tilde \bS^\uparrow_{{\rm pacc}\mathfrak{B}} = (\tilde \bN \lor \tilde \bH_2)' \cap \tilde \bS.
\end{equation}
The $\uparrow$ superscript reminds us that this algebra is interpreted as potentially accessible only because we have chosen to view causal influences as flowing upwards. The $\mathfrak{B}$ subscript then tells us that this is the algebra left potentially accessible by the combination of all systems in the future of $\bS$ within $\mathfrak{B}$. It helps to represent these operators as a subalgebra of $\bS$ itself (rather than $\tilde \bS$) by pushing them forward through each unitary in the circuit that comes below $\bS$:
\begin{equation} \label{eq:measurement_pacc}
    \begin{split}
        \bS^\uparrow_{{\rm pacc}\mathfrak{B}} 
        &= {\rm Push} (\tilde \bS^\uparrow_{{\rm pacc}\mathfrak{B}}) \\
        & = \gen{Z_\bS} \subseteq \bS.
    \end{split}
\end{equation}

In \cref{sec:channel}, it was shown that the accessible algebra is the commutant of the potentially accessible algebra, and the decoherent algebra was defined as their intersection. We thus obtain
\begin{subequations}
\begin{align}
    \bS^\uparrow_{{\rm acc}\mathfrak{B}}
        &= (\bS_{{\rm pacc}\mathfrak{B}}^\uparrow)' \cap \bS \\
        &= \gen{Z_\bS},
        \\
    \bS^\uparrow_{{\rm dec}\mathfrak{B}}
        &= \bS_{{\rm acc}\mathfrak{B}}^\uparrow \cap \bS_{{\rm pacc}\mathfrak{B}}^\uparrow \\
        &= \gen{Z_\bS},
        \\
    \ch_\bS
        &= \mathbb{C}  \ket 0_\bS \oplus \mathbb{C} \ket 1_\bS.
        \label{eq:s_up_decomp}
\end{align}
\end{subequations}
$\bS_{{\rm acc}\mathfrak{B}}^\uparrow$ is then to be interpreted as the algebra of operators accessible to the combination of all systems in the future of $\bS$ within $\mathfrak{B}$ (that is, to $\tilde \bN \lor \tilde \bH_2$), while $\bS^\uparrow_{{\rm dec}\mathfrak{B}}$ consists of the operators decohered by these systems. 
The $\uparrow$ superscript on $\bS^\uparrow_{{\rm dec}\mathfrak{B}}$ reminds us that, relative to our choice to view influence as travelling upwards, $\bS^\uparrow_{{\rm dec}\mathfrak{B}}$ is the decoherent algebra and not the dual decoherent one. It is not surprising that in this example we recover the same preferred subspace decomposition \cref{eq:s_up_decomp} as in \Cref{ex:balance_c}, since here $\bS$ is again the control input of a CNOT, while $\bN$ is the target output.

In a similar way, another subspace decomposition of $\bS$ is selected by \textit{dual} decoherence, i.e.\ by the way that $\bS$ is influenced by the systems of interest \textit{below} it. Since the only system of interest that influences $\bS$ is $\bF_1$, it is derived as follows, where $\left\{\ket {\psi^0}_\bS, \ket{\psi^1}_\bS\right\}$ is the basis to which $\cv$ maps the $\{\ket0,\ket1\}$-basis of its (unlabelled) input system, and $\psi_\bS^0 := \ketbra{\psi^0}_\bS$:
\begin{subequations}
\begin{align}
    \bS_{{\rm pacc}\mathfrak{B}}^\downarrow
        &= {\rm Push}(\tilde \bF_1' \cap \tilde \bS) \\
        &= \gen{ \psi_\bS^0}, 
        \\
    \bS_{{\rm acc}\mathfrak{B}}^\downarrow
        &= (\bS_{{\rm pacc}\mathfrak{B}}^\downarrow)' \cap \bS \\
        &= \gen{ \psi_\bS^0},
        \\
    \bS_{{\rm dec}\mathfrak{B}}^\downarrow
        &= \bS_{{\rm acc}\mathfrak{B}}^\downarrow \cap \bS_{{\rm pacc}\mathfrak{B}}^\downarrow \\
        &= \gen{ \psi_\bS^0},
        \\
    \ch_\bS
        &= \mathbb{C}  \ket {{\psi^0}}_\bS \oplus \mathbb{C} \ket {{\psi^1}}_\bS.
        \label{eq:s_down_decomp}
\end{align}
\end{subequations}
The $\downarrow$ superscripts remind us that if we had decided to view causal influences as flowing downwards, then these algebras would have been interpreted in terms of (potential) accessibility and decoherence. But since we have in fact chosen to view influences as flowing upwards, we instead interpret them in terms of (potential) implementability and dual decoherence. On this interpretation, $\bS_{{\rm dec}\mathfrak{B}}^\downarrow$ is the algebra of generators on $\bS$ that are both implementable by systems within $\mathfrak{B}$ below $\bS$ (in this case, just $\bF_1$), and potentially implementable by other systems that do not interact with those systems directly.

The above generalizes to define two preferred subspace decompositions for every system of interest $\bA \in \mathfrak{B}$. One originates from the algebra $ \bA^\uparrow_{{\rm pacc}\mathfrak{B}}$ (along with the corresponding
$\bA^\uparrow_{{\rm acc}\mathfrak{B}}$
and
$\bA^\uparrow_{{\rm dec}\mathfrak{B}}$), which is always defined as the algebra of all $\bA$-operators that commute with all systems that come above $\bA$ in the circuit---where \enquote{above} means that there is an upward-directed path from $\bA$ to that system (cf.\ \eqref{eq:circuit-pacc-of-s}).\footnote{An alternative, and perhaps better motivated, definition of $ \bA^\uparrow_{{\rm pacc}\mathfrak{B}}$ would consider only systems above $\bA$ that also \textit{do not commute} with $\bA$, i.e.\ that are in the \textit{causal future} of $\bA$ rather than simply above $\bA$ in the diagram. But these two definitions are equivalent because they only differ with respect to the consideration of algebras that the full $\bA$ commutes with anyway.}
The other subspace decomposition originates from the algebra $ \bA^\downarrow_{{\rm pacc}\mathfrak{B}}$ that commutes with the algebra generated by all systems in $\mathfrak{B}$ that come \textit{below} $\bA$. Given that influences are thought of as flowing upwards, the first of these subspace decompositions arises from decoherence and the second from dual decoherence.

Applying this procedure to every system of interest in our circuit, one obtains the following algebras, some of which the reader may like to verify. Here $\star$ denotes the trivial algebra $\mathbb{C}I$.

{\allowdisplaybreaks
\begin{subequations} \label{eq:single-mmt-all-the-algebras} \small 
\begin{align}
\tilde \bF^\downarrow_{1 {\rm pacc}\mathfrak{B}} &= \star' \cap \tilde \bF_1
&\qquad
\bF^\downarrow_{1{\rm dec}\mathfrak{B}} &= \star
\\ 
\tilde \bF^\uparrow_{1{\rm pacc}\mathfrak{B}} &= (\tilde \bF_2 \lor \tilde \bS \lor \tilde \bN \lor \tilde\bH_2)' \cap \tilde \bF_1
&
\bF^\uparrow_{1{\rm dec}\mathfrak{B}} &= \star
\\[4pt]  
\tilde \bF^\downarrow_{2 {\rm pacc}\mathfrak{B}} &= \tilde \bF_1' \cap \tilde \bF_2
&\qquad
\bF^\downarrow_{2{\rm dec}\mathfrak{B}} &= \star
\\
\tilde \bF^\uparrow_{2{\rm pacc}\mathfrak{B}} &= \star' \cap \tilde \bF_2
&
\bF^\uparrow_{2{\rm dec}\mathfrak{B}} &= \star
\\[4pt] 
\tilde \bG^\downarrow_{1{\rm pacc}\mathfrak{B}} &= \star' \cap \tilde \bG_1
&
\bG^\downarrow_{{1\rm dec}\mathfrak{B}} &= \star
\\
\tilde \bG^\uparrow_{1{\rm pacc}\mathfrak{B}} &= (\tilde \bG_2 \lor  \tilde \bM \lor \tilde \bN \lor \tilde\bH_2)' \cap \tilde \bG_1
&
\bG^\uparrow_{1{\rm dec}\mathfrak{B}} &= \star
\\[4pt] 
\tilde \bG^\downarrow_{2{\rm pacc}\mathfrak{B}} &= \tilde \bG_1' \cap \tilde \bG_2
&
\bG^\downarrow_{2{\rm dec}\mathfrak{B}} &= \star
\\
\tilde \bG^\uparrow_{2{\rm pacc}\mathfrak{B}} &= \star' \cap \tilde \bG_2
&
\bG^\uparrow_{2{\rm dec}\mathfrak{B}} &= \star
\\[4pt] 
\tilde \bS^\downarrow_{{\rm pacc}\mathfrak{B}} &= \tilde \bF_1' \cap \tilde \bS
&
\bS^\downarrow_{{\rm dec}\mathfrak{B}} &= \gen{\psi_\bS^0}
\\
\tilde \bS^\uparrow_{{\rm pacc}\mathfrak{B}} &= (\tilde \bN \lor \tilde \bH_2)' \cap \tilde \bS
&
\bS^\uparrow_{{\rm dec}\mathfrak{B}} &= \gen{Z_\bS}
\\[4pt]
\tilde \bM^\downarrow_{{\rm pacc}\mathfrak{B}} &= \tilde \bG_1' \cap \tilde \bM
&
\bM^\downarrow_{{\rm dec}\mathfrak{B}} &=  \langle Z_\bM \rangle
\\
\tilde \bM^\uparrow_{{\rm pacc}\mathfrak{B}} &= (\tilde \bN \lor \tilde \bH_2)' \cap \tilde \bM
&
\bM^\uparrow_{{\rm dec}\mathfrak{B}} &= \star
\\[4pt]
\tilde \bN^\downarrow_{{\rm pacc}\mathfrak{B}} &= (\tilde \bF_1 \lor \tilde \bG_1 \lor \tilde \bS \lor \tilde \bM)' \cap \tilde \bN
&
\bN^\downarrow_{{\rm dec}\mathfrak{B}} &= \star
\\
\tilde \bN^\uparrow_{{\rm pacc}\mathfrak{B}} &= \tilde \bH_2' \cap \tilde \bN
&
\bN^\uparrow_{{\rm dec}\mathfrak{B}} &= \langle Z_\bN \rangle
\\[4pt] 
\tilde \bH^\downarrow_{1{\rm pacc}\mathfrak{B}} &= \star' \cap \tilde \bH_1  
&
\bH^\downarrow_{1{\rm dec}\mathfrak{B}} &= \star
\\
\tilde \bH^\uparrow_{1{\rm pacc}\mathfrak{B}} &= \tilde \bH_2' \cap \tilde \bH_1
&
\bH^\uparrow_{1{\rm dec}\mathfrak{B}} &= \star
\\[4pt] 
\tilde \bH^\downarrow_{2{\rm pacc}\mathfrak{B}} &= (\tilde \bF_1 \lor \tilde \bG_1 \lor \tilde \bS \lor \tilde \bM \lor \tilde \bN)' \cap \tilde \bH_2
&
\bH^\downarrow_{2{\rm dec}\mathfrak{B}} &= \star
\\
\tilde \bH^\uparrow_{2{\rm pacc}\mathfrak{B}} &= \star' \cap \tilde \bH_2
&
\bH^\uparrow_{2 {\rm dec} \mathfrak{B} } &= \star
\end{align}
\end{subequations}%
}%
While it may look as though things have gotten a little out of hand, note that only four of the $2 \times 9 = 18$ preferred subspace decompositions end up being nontrivial. Explicitly:
\begin{subequations}\label{eq:decom4}
    \begin{align}
        \ch_\bS &=  \mathbb{C}\ket{0}_\bS \oplus \mathbb{C}\ket{1}_\bS;  \\  
        \ch_\bS &=  \mathbb{C}\ket{\psi^0}_\bS \oplus \mathbb{C}\ket{\psi^1 }_\bS; \\
        \ch_\bM &=  \mathbb{C}\ket{0}_\bM \oplus \mathbb{C}\ket{1}_\bM; \\
        \ch_\bN &=  \mathbb{C}\ket{0}_\bN \oplus \mathbb{C}\ket{1}_\bN.
    \end{align}
\end{subequations}

In standard quantum theory, a subspace decomposition of a Hilbert space is associated with a projective-valued measurement, and each individual subspace is associated with an outcome. More generally, we can think of a subspace decomposition as defining a space of possible \textit{events}, where an event is just something that happens, and is not necessarily interpreted as the outcome of a measurement or anything witnessed by an observer. A list of four events, where each one is associated with a subspace from a different decomposition in \cref{eq:decom4}, is a \textit{history}. Since each of the four decompositions in \cref{eq:decom4} has two subspaces, we have thus derived a set of $2^4=16$ possible histories.

Having derived a history set, it is natural to ask whether there exists a natural probability distribution over the histories. It turns out that there is---and not just in this simple example, but for any unitary circuit, and any $\mathfrak{B}$.

\subsection{A general derivation of a consistent history set from decoherence}

To demonstrate this, we step back from the present example and denote by $\mathfrak{B} = \{\bA_1, \ldots, \bA_n\}$ an arbitrary set of systems in an arbitrary unitary circuit. $\tilde \pi_{\bA_k}^{\uparrow e_k} \in \tilde \bA_{{\rm dec}\mathfrak{B}}^{\uparrow}$ denotes the (pulled-back) projector onto the $e_k$th preferred subspace of $\bA_k$ selected by decoherence with respect to the $\mathfrak{B}$-systems above $\bA_k$. Similarly, $\tilde \pi_{\bA_k}^{\downarrow f_k}\in \tilde \bA_{{\rm dec}\mathfrak{B}}^{\downarrow}$ denotes the (pulled-back) projector onto the $f_k$th preferred subspace of $\bA_k$ selected by \textit{dual} decoherence with respect to the $\mathfrak{B}$-systems below $\bA_k$. A $2n$-tuple $(f_1, e_1, \ldots, f_n, e_n)$ denotes a history.

\begin{proposition}
    All decohered projectors commute with each other, and likewise all dual-decohered projectors commute with each other:
    \begin{subequations} \label{eq:comm_rels}
    \begin{align}
        \forall k, l, e_k, e_l, \qquad [\tilde \pi_{\bA_k}^{\uparrow e_k}, \tilde \pi_{\bA_l}^{\uparrow e_l}] = 0;  
        \label{eq:comm_rels_a}\\
        \forall k, l, f_k, f_l, \qquad [\tilde \pi_{\bA_k}^{\downarrow f_k}, \tilde \pi_{\bA_l}^{\downarrow f_l}] = 0.
        \label{eq:comm_rels_b}
    \end{align}
\end{subequations}
As a result, the operators
\begin{subequations} \label{eq:fine_grained_proj}
    \begin{align}
            \tilde \pi_{\mathfrak{B}}^{\uparrow e_1\ldots e_n} &:= \tilde \pi_{\bA_1}^{\uparrow e_1} \cdots \tilde \pi_{\bA_n}^{\uparrow e_n}, \\
        \tilde \pi_{\mathfrak{B}}^{\downarrow f_1\ldots f_n} &:= \tilde \pi_{\bA_1}^{\downarrow f_1} \cdots \tilde \pi_{\bA_n}^{\downarrow f_n} 
    \end{align}
\end{subequations}
form two complete families of orthogonal projectors $\{\tilde \pi_{\mathfrak{B}}^{\uparrow e_1 \ldots e_n}\}_{e_1\ldots e_n}$ and $\{\tilde \pi_{\mathfrak{B}}^{\downarrow f_1 \ldots f_n}\}_{f_1\ldots f_n}$.
\end{proposition}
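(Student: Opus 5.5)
The plan is to prove \cref{eq:comm_rels_a} directly; \cref{eq:comm_rels_b} will then follow by the time-reversal symmetry $\cu \mapsto \cu^{-1}$, since the downward algebras of a circuit are the upward algebras of the inverted circuit. All algebras are taken pulled back to the common input Hilbert space. Fix $k \neq l$; the case $k = l$ is immediate because $\tilde\bA^\uparrow_{k,{\rm dec}\mathfrak{B}}$ is commutative by \cref{thm:algebras}. The circuit is acyclic, so the following two cases are exhaustive.

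\emph{Case 1: neither system is above the other.} Then $\bA_k$ and $\bA_l$ can be placed on a common time slice, that is, a cut of the circuit containing both as distinct tensor factors. Pulled back through the same unitaries below that cut, $\tilde\bA_k$ and $\tilde\bA_l$ commute. Hence any elements of the two subalgebras commute.

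\emph{Case 2: one system is above the other.} Without loss of generality, suppose $\bA_l$ is above $\bA_k$, meaning there is an upward path from $\bA_k$ to $\bA_l$. By definition,
\[
\tilde\bA^\uparrow_{k,{\rm pacc}\mathfrak{B}} = \Bigl(\bigvee_{\bX \in \mathfrak{B} \text{ above } \bA_k} \tilde\bX\Bigr)' \cap \tilde\bA_k,
\]
and the join contains $\tilde\bA_l$. So every element of $\tilde\bA^\uparrow_{k,{\rm pacc}\mathfrak{B}}$ commutes with all of $\tilde\bA_l$. The decohered algebra $\tilde\bA^\uparrow_{k,{\rm dec}\mathfrak{B}}$ is the centre of this potentially accessible algebra, so it lies inside it. In particular, each $\tilde\pi^{\uparrow e_k}_{\bA_k}$ commutes with $\tilde\pi^{\uparrow e_l}_{\bA_l} \in \tilde\bA_l$. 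This case uses only the potential-accessibility half of decoherence. Dually, in \cref{eq:comm_rels_b} with $\bA_l$ above $\bA_k$, the argument uses the $\downarrow$-pacc algebra of $\bA_l$, which commutes with $\tilde\bA_k$.

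Given pairwise commutation, the products in \cref{eq:fine_grained_proj} are products of commuting orthogonal projectors, so each product is itself an orthogonal projector and is independent of the ordering. For fixed $k$, the projectors $\{\tilde\pi^{\uparrow e_k}_{\bA_k}\}_{e_k}$ are mutually orthogonal and sum to $I$, because they are the minimal projections of a commutative algebra; see \cref{eq:dec_form}, which is preserved under unitary pullback. Two distinct tuples differ in some index $k$, so their products contain orthogonal factors that can be commuted adjacent to each other; hence the products are mutually orthogonal. Expanding $\prod_k \sum_{e_k} \tilde\pi^{\uparrow e_k}_{\bA_k} = I$ then gives completeness.

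The main obstacle is making Case 1 rigorous: one must show that two systems not connected by a directed path can always be placed in a common cut, and that pullbacks through a common cut yield commuting algebras. I would argue this by taking the set of wires not above $\bA_k$ or $\bA_l$ and closing it downward, which produces a valid foliation containing both systems.
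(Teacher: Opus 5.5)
Your proof is correct and follows essentially the same route as the paper. It uses the same two-case split: either there is no directed path between the systems, so $\tilde\bA_k \subseteq \tilde\bA_l'$, or there is an upward path, so the lower system's decohered algebra lies inside its potentially accessible algebra, which commutes with $\tilde\bA_l$; the dual claim follows by the mirrored argument. You additionally spell out the $k=l$ case, the common-time-slice justification for the first case, and the derivation of the complete orthogonal families, all of which the paper leaves implicit.
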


\begin{proof}
    Assume without loss of generality that the subscript labelling on the $\bA_k$ is compatible with the temporal ordering of the circuit in the sense that if there is an upward path of wires from $\bA_k$ to $\bA_l$ then $k < l$. By the definition of the decohered algebras, $\tilde \bA_{k {\rm dec}\mathfrak{B}}^\uparrow \subseteq \tilde \bA_{k {\rm pacc}\mathfrak{B}}^\uparrow$ for all $k$.  
    For any $k$ and $l > k$, there are two possibilities: there is no path of wires between $\bA_k$ and $\bA_l$, or there is an upward path  from $\bA_k$ to $\bA_l$. In the first case, $\tilde \bA_k \subseteq \tilde \bA_l'$, and in the second, the definition of $\tilde \bA_{k {\rm pacc}\mathfrak{B}}^\uparrow$ implies that $\tilde \bA_{k {\rm pacc}\mathfrak{B}}^\uparrow\subseteq  \tilde \bA_l'$. Hence in either case,  $\tilde \bA_{k {\rm dec}\mathfrak{B}}^\uparrow\subseteq  \tilde \bA_l'$. But $\tilde \bA_l'\subseteq (\tilde \bA_{l {\rm dec}\mathfrak{B}}^\uparrow)'$, entailing that $\tilde \bA_{k {\rm dec}\mathfrak{B}}^\uparrow \subseteq (\tilde \bA_{l {\rm dec}\mathfrak{B}}^\uparrow)'$. This implies \cref{eq:comm_rels_a}. \cref{eq:comm_rels_b} is proven by a dual argument.
\end{proof}

The complete orthogonal sets $\{\tilde \pi_{\mathfrak{B}}^{\downarrow f_1 \ldots f_n}\}_{f_1\ldots f_n}$ and $\{\tilde \pi_{\mathfrak{B}}^{\uparrow e_1 \ldots e_n}\}_{e_1\ldots e_n}$ each define a preferred subspace decomposition of the overall Hilbert space, respectively arising from decoherence and dual decoherence. We can therefore associate a complete history $(f_1, e_1, \ldots, f_n, e_n)$ with just two projectors, $(\tilde \pi_{\mathfrak{B}}^{\downarrow f_1 \ldots f_n}, \tilde \pi_{\mathfrak{B}}^{\uparrow e_1 \ldots e_n})$. It follows that the set $\{(f_1, e_1, \ldots, f_n, e_n)\}$ of all possible histories admits a natural probability rule:

\begin{theorem} \label{thm:prob}
Let $\mathfrak{B} = \{\bA_1, \ldots, \bA_n\}$ be a subset of the systems in an arbitrary unitary circuit. The projectors $\tilde \pi_{\mathfrak{B}}^{\downarrow f_1 \ldots f_n}$ and $
        \tilde \pi_{\mathfrak{B}}^{\uparrow e_1 \ldots e_n}$  from the $2n$ decoherent and dual decoherent algebras define the probability distribution
\begin{equation} \label{eq:prob}
      {\rm Prob}_\mathfrak{B}(f_1, e_1, \ldots, f_n, e_n)
    \coloneqq  \frac{1}{d} {\rm Tr}\!\bigl(\tilde \pi_{\bA_1}^{\downarrow f_1} \cdots \tilde \pi_{\bA_n}^{\downarrow f_n}   \tilde \pi_{\bA_1}^{\uparrow e_1} \cdots \tilde \pi_{\bA_n}^{\uparrow e_n}  \bigr),  
\end{equation}
where $d$ is the dimension of the overall Hilbert space. 
\end{theorem}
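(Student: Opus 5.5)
The claim is that \eqref{eq:prob} defines a probability distribution. Two things must be shown: every value is a nonnegative real number, and the values sum to one. I will work throughout with the two fine-grained projectors from the preceding proposition,
$P_f := \tilde \pi_{\mathfrak{B}}^{\downarrow f_1 \ldots f_n}$ and $Q_e := \tilde \pi_{\mathfrak{B}}^{\uparrow e_1 \ldots e_n}$.

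\textbf{Step 1 (rewrite the trace).} Because all dual-decohered projectors commute pairwise, and all decohered projectors commute pairwise \eqref{eq:comm_rels}, the ordered product inside the trace equals $P_f Q_e$. Each of $P_f$ and $Q_e$ is a product of commuting orthogonal projectors, so each is itself an orthogonal projector. This gives
\[
{\rm Prob}_\mathfrak{B}(f_1, e_1, \ldots, f_n, e_n) = \tfrac{1}{d}{\rm Tr}(P_f Q_e).
\]

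\textbf{Step 2 (positivity).} Using cyclicity of the trace together with $P_f = P_f^2 = P_f^\dagger$ and $Q_e = Q_e^2 = Q_e^\dagger$,
\[
{\rm Tr}(P_f Q_e) = {\rm Tr}(P_f Q_e Q_e P_f) = {\rm Tr}\bigl((Q_e P_f)^\dagger (Q_e P_f)\bigr) \geq 0.
\]
So each value is a Hilbert--Schmidt norm squared, hence real and nonnegative.

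\textbf{Step 3 (normalization).} The key input is completeness of each family. For each $k$, the preferred projectors $\{\tilde \pi_{\bA_k}^{\uparrow e_k}\}_{e_k}$ resolve the identity, since they are the images under pull-back of the projectors onto the subspaces in \cref{eq:preferred-decomposition}, and pull-back is a unital homomorphism. The same holds for $\{\tilde \pi_{\bA_k}^{\downarrow f_k}\}_{f_k}$. Expanding the product of sums then gives
\[
\sum_e Q_e = \prod_k \Bigl(\sum_{e_k} \tilde \pi_{\bA_k}^{\uparrow e_k}\Bigr) = I,
\]
and likewise $\sum_f P_f = I$. This is exactly the completeness statement of the proposition. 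Therefore
\[
\sum_{f,e} {\rm Prob}_\mathfrak{B}(f_1, e_1, \ldots, f_n, e_n) = \tfrac{1}{d}{\rm Tr}(I \cdot I) = 1.
\]

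\textbf{Main obstacle.} No step here is really difficult; the substantive work was already done by the commutation proposition. The one delicate point is Step 1: the order of the factors in \eqref{eq:prob} interleaves two noncommuting families. One must check that it is legitimate to regroup them into $P_f Q_e$, which requires only commutation within each family, not between the families. Once that is checked, positivity follows from the projector identity in Step 2 and normalization follows from completeness.
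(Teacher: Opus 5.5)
Your proof is correct and follows the paper's argument: rewrite the trace as $\frac{1}{d}\mathrm{Tr}(P_f Q_e)$, then use that each family consists of positive operators resolving the identity to get nonnegativity and normalization. One small remark: the product in \eqref{eq:prob} already lists all dual-decohered factors before all decohered ones, so it equals $P_f Q_e$ directly by \eqref{eq:fine_grained_proj} with no regrouping. Commutativity within each family is needed only to make $P_f$ and $Q_e$ projectors, not to reorder anything.
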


\begin{proof} Substituting \cref{eq:fine_grained_proj} into \cref{eq:prob} gives
\begin{equation}
    {\rm Prob}_\mathfrak{B}(f_1, e_1, \ldots, f_n, e_n) = \frac{1}{d} {\rm Tr}\!\bigl(
        \tilde \pi_{\mathfrak{B}}^{\downarrow f_1 \ldots f_n}
        \tilde \pi_{\mathfrak{B}}^{\uparrow e_1 \ldots e_n} \bigr).
\end{equation}
     To verify that this is indeed a probability distribution, note that $\{\tilde \pi_{\mathfrak{B}}^{\downarrow f_1 \ldots f_n}\}_{f_1\ldots f_n}$ and $\{\tilde \pi_{\mathfrak{B}}^{\uparrow e_1 \ldots e_n}\}_{e_1\ldots e_n}$ are each a set of positive semidefinite operators that sum to the identity. It follows that $\{\frac{1}{d} {\rm Tr}\!\bigl(
        \tilde \pi_{\mathfrak{B}}^{\downarrow f_1 \ldots f_n}
        \tilde \pi_{\mathfrak{B}}^{\uparrow e_1 \ldots e_n}
    \bigr)\}_{f_1, e_1, \ldots, f_n, e_n}$ is a set of nonnegative numbers that sum to 1.
\end{proof}

We will soon discuss how to interpret this probability distribution within our example of a single measurement. We first note that \cref{eq:prob} also entails that the history set $\{(f_1, e_1, \ldots, f_n, e_n)\}$ obeys the classical sum rule and is therefore \textit{consistent} in the sense of the  consistent histories formalism \cite{griffiths1984consistent, griffiths2003consistent, gell1990complexity}, as the following remark explains further.

\begin{remarkbox} \label{remark:prob} In general, a quantum history set can violate the classical sum rule, which requires that the probability for the disjunction of a pair of mutually exclusive events is the sum of the individual probabilities. For example, consider the probability distribution associated with two successive projective-valued measurements $\{P_1^{e_1}\}_{e_1}$ and $\{P_2^{e_2}\}_{e_2}$ on a system initialized in the state $\rho$:
\begin{equation} \label{eq:consistent_ex_prob}
    {\rm Prob}(e_1, e_2) = {\rm Tr}(P_2^{e_2}P_1^{e_1} \rho P_1^{e_1} P_2^{e_2}).
\end{equation}
For any pair of distinct events $e_1$ and $e_1'$, the disjunction $e_1 \lor e_1'$ is associated with the projector
\begin{equation}
    P_1^{e_1 \lor e_1'} =  P_1^{e_1} + P_1^{e_1'}.
\end{equation}
For a generic choice of the state and measurements, we are likely to find that the probabilities are non-additive in the sense that
\begin{equation}
    \begin{split}
        {\rm Prob}(e_1 \lor e_1', e_2) &= {\rm Tr}(P_2^{e_2}P_1^{e_1 \lor e_1'} \rho P_1^{e_1 \lor e_1'} P_2^{e_2}) \\ 
         &= {\rm Tr}(P_2^{e_2}(P_1^{e_1} + P_1^{e_1'}) \rho (P_1^{e_1} + P_1^{e_1'})P_2^{e_2}) \\ 
        & \neq {\rm Prob}(e_1, e_2) + {\rm Prob}(e_1', e_2).
    \end{split}
\end{equation}
A classic example is provided by the double slit experiment. $e_1$ and $e_1'$ can respectively be taken to represent the event that the particle went through the left slit and the event that it went through the right one, while $e_2$ represents the detection on the screen.

When a probability distribution ${\rm Tr}(P_n^{e_n} \ldots P_1^{e_1} \rho P_1^{e_1}\ldots P_n^{e_n})$ \textit{does} respect the classical sum rule, the associated history set $\{(e_1, \ldots, e_n)\}$ is called \textit{consistent} by \cite{griffiths1984consistent}. Because \cref{eq:prob} is manifestly linear in each projector $\tilde \pi^{\uparrow_{e_k}}_{\bA_k}$ and $\tilde \pi^{\downarrow_{f_k}}_{\bA_k}$
(unlike \cref{eq:consistent_ex_prob}, in which $P_1^{e_1}$ appears twice), it evidently defines a consistent history set, with the two projective-valued measurements $\{\tilde \pi_{\mathfrak{B}}^{\uparrow e_1 \ldots e_n}\}_{e_1\ldots e_n}$ and $\{\tilde \pi_{\mathfrak{B}}^{\downarrow f_1 \ldots f_n}\}_{f_1\ldots f_n}$ and the initial state $\rho = I/d$. (In fact, it is consistent not just in the sense of \cite{griffiths1984consistent}, but in the logically stronger sense of \cite{gell1990complexity}, which requires that ${\rm Tr}(P_n^{e_n} \ldots P_1^{e_1} \rho P_1^{e_1'}\ldots P_n^{e_n'})=0$ whenever there is some $k$ such that $e_k \neq e_k'$.)
\Cref{thm:prob} thus shows that decoherence and dual decoherence privilege a unique consistent history set for any subset $\mathfrak{B}$ of the systems in a unitary circuit.
\end{remarkbox}

\subsection{A single measurement: probabilities and interpretation of the emergent events} \label{subsec:single_prob}

We now return to our previous example of a single measurement.
We have reproduced the circuit in \cref{fig:circuit-single-mmt}, where we have also added, in blue, the decohered and dually decohered algebras for each system $\bA$ that we derived in \cref{eq:single-mmt-all-the-algebras}, along with the classical events $e_\bA$, $f_\bA$ that correspond to the subspaces in the preferred subspace decompositions picked out by these algebras.
Finally, we have decorated the figure with interpretations of each of the nontrivial emergent events.
The goal of this subsection is to justify those interpretations by analysing the probability distribution over the derived history set whose existence is guaranteed by \cref{thm:prob}.

\begin{figure}[t]
    \centering
    \begin{tikzpicture}
	\begin{pgfonlayer}{nodelayer}
		\node [style=none] (61) at (3, -7.75) {};
		\node [style=none] (63) at (3, 9.5) {};
		\node [style=none] (64) at (-3, -3.75) {};
		\node [style=none] (65) at (-2.5, -0.25) {$\bS$};
		\node [style=none] (66) at (3.5, 3.75) {$\bN$};
		\node [style=none] (67) at (-3, 3.25) {};
		\node [style=none] (68) at (-7, -6.25) {};
		\node [style=none] (69) at (-7, -9.25) {};
		\node [style=none] (70) at (3.5, -0.25) {$\bM$};
		\node [style=black dot] (71) at (-3, 1.75) {};
		\node [style=none] (72) at (3.5, 1.75) {};
		\node [style=none] (73) at (2.5, 1.75) {};
		\node [style=none] (74) at (3.5, 1.75) {};
		\node [style=none] (75) at (2.5, 1.75) {};
		\node [style=black dot] (76) at (-3, -5.75) {};
		\node [style=none] (77) at (-6.5, -5.75) {};
		\node [style=none] (78) at (-6.5, -5.25) {};
		\node [style=none] (79) at (-6.5, -6.25) {};
		\node [style=none] (80) at (-7.5, -6.25) {};
		\node [style=none] (81) at (-7.5, -5.25) {};
		\node [style=none] (82) at (-7, -5.25) {};
		\node [style=none] (83) at (-7, -2.25) {};
		\node [style=none] (84) at (-7, -5.75) {\small $\cu$};
		\node [style=none] (85) at (-3.5, -3.75) {};
		\node [style=none] (86) at (-2.5, -3.75) {};
		\node [style=none] (87) at (-2.5, -4.75) {};
		\node [style=none] (88) at (-3.5, -4.75) {};
		\node [style=none] (89) at (-3, -4.25) {\small $\cv$};
		\node [style=none] (90) at (-3, -4.75) {};
		\node [style=none] (91) at (-3, -7.5) {};
		\node [style=none] (92) at (-4.75, 0.25) {\small $\color{blue} \gen{Z_\bS}, e_\bS$};
		\node [style=none] (93) at (-4.75, -0.75) {\small $\color{blue} \gen{\psi_\bS^0}, f_\bS$};
		\node [style=none] (94) at (1.75, 0.25) {\small $\color{blue} \star$};
		\node [style=none] (95) at (1.25, -0.75) {\small $\color{blue} \gen{Z_\bM}, f_\bM$};
		\node [style=none] (96) at (7, -6.25) {};
		\node [style=none] (97) at (7, -9.25) {};
		\node [style=black dot] (99) at (3, -5.75) {};
		\node [style=none] (100) at (6.5, -5.75) {};
		\node [style=none] (101) at (6.5, -5.25) {};
		\node [style=none] (102) at (6.5, -6.25) {};
		\node [style=none] (103) at (7.5, -6.25) {};
		\node [style=none] (104) at (7.5, -5.25) {};
		\node [style=none] (105) at (7, -5.25) {};
		\node [style=none] (106) at (7, -2.25) {};
		\node [style=none] (107) at (7, -5.75) {\small $\cu$};
		\node [style=none] (108) at (6.25, -7.25) {\small $\color{blue} \star$};
		\node [style=none] (109) at (6.25, -8.25) {\small $\color{blue} \star$};
		\node [style=none] (110) at (-7.75, -7.25) {\small $\color{blue} \star$};
		\node [style=none] (111) at (-7.75, -8.25) {\small $\color{blue} \star$};
		\node [style=none] (112) at (7, 7.5) {};
		\node [style=none] (113) at (7, 4.75) {};
		\node [style=black dot] (115) at (3, 8) {};
		\node [style=none] (116) at (6.5, 8) {};
		\node [style=none] (117) at (6.5, 8.5) {};
		\node [style=none] (118) at (6.5, 7.5) {};
		\node [style=none] (119) at (7.5, 7.5) {};
		\node [style=none] (120) at (7.5, 8.5) {};
		\node [style=none] (121) at (7, 8.5) {};
		\node [style=none] (122) at (7, 11.5) {};
		\node [style=none] (123) at (7, 8) {\small $\cu$};
		\node [style=none] (124) at (6.25, 10.5) {\small $\color{blue} \star$};
		\node [style=none] (125) at (6.25, 9.5) {\small $\color{blue} \star$};
		\node [style=none] (126) at (1.25, 4.25) {\small $\color{blue} \gen{Z_\bN}, e_\bN$};
		\node [style=none] (127) at (1.75, 3.25) {\small $\color{blue} \star$};
		\node [style=none] (129) at (6, 4.25) {\color{blue} \small ``record''};
		\node [style=none] (130) at (6.75, -0.75) {\color{blue} \small ``ready state''};
		\node [style=none] (131) at (-10.25, -0.75) {\color{blue} \small ``pre-measurement state''};
		\node [style=none] (132) at (-8.25, 0.25) {\color{blue} \small ``outcome''};
		\node [style=none] (133) at (3, 4.25) {};
		\node [style=none] (134) at (3, -0.75) {};
		\node [style=none] (135) at (-6.5, -7.75) {$\bF_1$};
		\node [style=none] (136) at (-6.5, -3.75) {$\bF_2$};
		\node [style=none] (137) at (-7.75, -3.25) {\small $\color{blue} \star$};
		\node [style=none] (138) at (-7.75, -4.25) {\small $\color{blue} \star$};
		\node [style=none] (139) at (7.75, -7.75) {$\bG_1$};
		\node [style=none] (140) at (7.75, -3.75) {$\bG_2$};
		\node [style=none] (141) at (7.75, 10) {$\bH_2$};
		\node [style=none] (144) at (7.75, 6) {$\bH_1$};
		\node [style=none] (145) at (6.25, -3.25) {\small $\color{blue} \star$};
		\node [style=none] (146) at (6.25, -4.25) {\small $\color{blue} \star$};
		\node [style=none] (147) at (6.25, 6.5) {\small $\color{blue} \star$};
		\node [style=none] (148) at (6.25, 5.5) {\small $\color{blue} \star$};
	\end{pgfonlayer}
	\begin{pgfonlayer}{edgelayer}
		\draw (71) to (72.center);
		\draw [bend left=90, looseness=1.75] (73.center) to (72.center);
		\draw [bend right=90, looseness=1.75] (75.center) to (74.center);
		\draw (61.center) to (63.center);
		\draw (76) to (77.center);
		\draw (64.center) to (67.center);
		\draw (68.center) to (69.center);
		\draw (81.center) to (80.center);
		\draw (78.center) to (79.center);
		\draw (79.center) to (80.center);
		\draw (81.center) to (78.center);
		\draw (83.center) to (82.center);
		\draw (88.center) to (85.center);
		\draw (85.center) to (86.center);
		\draw (86.center) to (87.center);
		\draw (87.center) to (88.center);
		\draw (91.center) to (90.center);
		\draw (99) to (100.center);
		\draw (96.center) to (97.center);
		\draw (104.center) to (103.center);
		\draw (101.center) to (102.center);
		\draw (102.center) to (103.center);
		\draw (104.center) to (101.center);
		\draw (106.center) to (105.center);
		\draw (115) to (116.center);
		\draw (112.center) to (113.center);
		\draw (120.center) to (119.center);
		\draw (117.center) to (118.center);
		\draw (118.center) to (119.center);
		\draw (120.center) to (117.center);
		\draw (122.center) to (121.center);
	\end{pgfonlayer}
\end{tikzpicture}
    \caption{Model of a single measurement. Systems of interest are labelled in black. Decohered and dual decohered algebras, associated emergent events, and their interpretations are labelled in blue. Each decohered algebra is depicted above the corresponding dual decohered algebra because the former arises from the interaction between the system and its future while the latter arises from its interaction with its past. As before, $\star$ denotes the trivial algebra $\mathbb{C}I$. Trivial decohered algebras are associated with trivial events (corresponding to the identity operator), which are not explicitly labelled.}
    \label{fig:circuit-single-mmt}
\end{figure}

The probability distribution is calculated from the projectors corresponding to each event.
In our example, each history
\begin{equation}
    (f_\bS, e_\bS, f_\bM, e_\bN)
\end{equation}
consists of only four nontrivial events, where the $e$-events emerge from decoherence and the $f$-events emerge from dual decoherence.
$e_\bS\in \{0,1\}$ denotes the event corresponding to the $e_\bS$th subspace picked out by the decohered algebra $\gen{Z_\bS}$, which is the subspace spanned by the computational basis state $\ket{e_\bS}$; equivalently, this event corresponds to the projector $\pi_\bS^{\uparrow e_\bS} = \ketbra{e_\bS}$.
Similarly for $f_\bM$ and $e_\bN$.
The event $f_\bS$ corresponds to the projector $\pi_\bS^{\downarrow f_\bS} = \ketbra{\psi^{f_\bS}}_\bS$.
\cref{eq:prob} thus becomes
\begin{equation} \label{eq:measurement_prob} 
    \begin{split}
        {\rm Prob}_\mathfrak{B}(f_\bS, e_\bS, f_\bM, e_\bN) &= \frac{1}{2^5} {\rm Tr} ( \tilde \pi_\bS^{\downarrow f_\bS} \tilde \pi_\bM^{\downarrow f_\bM} \tilde \pi_\bS^{\uparrow e_\bS} \tilde \pi_\bN^{\uparrow e_\bN}) \\
        &= \frac{1}{4} \left|\braket{e_\bS}{\psi^{f_\bS}}\right|^2 \delta( e_\bN, e_\bS \oplus f_\bM).
    \end{split}
\end{equation}
Here $\delta(x, y) =1$ if $x=y$ and $\delta(x, y) =0$ if $x \neq y$, and $\oplus$ denotes addition modulo 2. 

(To verify this equation, it is helpful to note that ${\rm Tr} ( \tilde \pi_\bS^{\downarrow f_\bS} \tilde \pi_\bM^{\downarrow f_\bM} \tilde \pi_\bS^{\uparrow e_\bS} \tilde \pi_\bN^{\uparrow e_\bN})$ is unchanged if one applies the same unitary channel to all four traced-over projectors. This means that to obtain the correct distribution, one is not forced to represent the projectors on the input space of the circuit; instead, one can choose whichever time slice is most convenient for the calculation. Here, the time slice immediately below the CNOT proves particularly convenient, since three of the four events are already associated with it.)

To see how these statistics reproduce the predictions of standard quantum theory, we must now discuss how \textit{prior beliefs} enter the formalism. In standard quantum theory, prior beliefs are often encoded in an initial state which is then fed into the dynamics. But that is not how prior beliefs are encoded on our approach. Indeed, the role of the circuit here is not to act on an initial state; at no point in our analysis will we feed a state into it. Instead, prior beliefs take the form of assumptions about events that emerge, and most typically about events that emerge early on in the circuit.

Let us suppose for instance that we have a prior belief that
    \begin{equation} \label{eq:prior}
        f_\bS = 0.
    \end{equation}
    In light of this prior, we can update our predictions for the event $e_\bS$ using standard probability theory:
    \begin{equation} \label{eq:state}
        \begin{split}
                {\rm Prob}_\mathfrak{B}(e_\bS \mid
            f_\bS = 0) &=  \frac{\sum_{f_\bM,e_\bN}{\rm Prob}_\mathfrak{B}(f_\bS=0, e_\bS, f_\bM, e_\bN)}{\sum_{e_\bS,f_\bM,e_\bN}{\rm Prob}_\mathfrak{B}(f_\bS=0, e_\bS, f_\bM, e_\bN)} \\
            &= \left|\braket{e_\bS}{\psi^0}\right|^2.
        \end{split}
    \end{equation}
    In standard quantum-theoretical terminology, this is the Born probability of obtaining the outcome associated with $\ket{e_{\bS}}$ when performing a measurement of the system prepared in the state $\ket{\psi^0}$.
    Similarly, conditioning on $f_\bS=1$ yields the Born probability $\left|\braket{e_\bS}{\psi^1}\right|^2$.
    The emergent event $f_\bS$ thus plays the same role that the quantum state does in the standard view: prior knowledge of $f_\bS$ allows one to make predictions about future events like $e_\bS$.
    This motivates the interpretation, stipulated in \cref{fig:circuit-single-mmt}, of $f_\bS$ as a \enquote{pre-measurement state} and of $e_\bS$ as an \enquote{outcome}.

    It remains to interpret the emergent events $f_\bM$ and $e_\bN$.
    To do this, note that if we make the prior assumption that
    \begin{equation}\label{eq:prior-fm}
        f_\bM = 0,
    \end{equation}
    then the variable $e_\bN$ is certain to match $e_\bS$:
    \begin{equation} \label{eq:record}
            {\rm Prob}_\mathfrak{B}(e_\bN \mid e_\bS, \;
        f_\bM = 0) = \delta(e_\bN, e_\bS).
    \end{equation}
    Thus $f_\bM=0$ can be interpreted as the \enquote{ready state} of the measurement device, which ensures that after the CNOT measurement interaction the value of $e_\bN$ \enquote{records} the measurement outcome $e_\bS$. (This is related to how on the more usual state-first picture from standard quantum theory, if the measurement device is prepared in the state corresponding to the vector $\ket0_\bM$ and the system is prepared in $\ket{k}_\bS$, then the measurement device will end up in the state $\ket{k}_\bN$.)

    We thus see that the events $f_\bS$ and $f_\bM$ are naturally interpreted as states prepared by those fragments.
    We therefore refer to $f$-events, i.e.\ events emerging from dual decoherence, as \emph{state-like events}, or sometimes simply as \emph{states}.
    In the present account, the traditional quantum state vector (e.g.\ $\ket{\psi^{0}}_\bS\otimes\ket{0}_\bM$) serves as a summary of precisely the information about past events (in this case $f_\bS=0$ and $f_\bM=0$) that is relevant to computing probabilities for future events.
    It is in this sense that the quantum state emerges from unitary dynamics.

    Before moving on to a second example, we finish this one off with some discussion. We begin by pointing out the role played in this example by the two core concepts of the causal approach to quantum Darwinism: the proliferation of information about observables and the redundant implementation of generators. The record $e_\bN$ emerges because information about the observables in $\gen{Z_\bN}$ is transmitted into a system within $\mathfrak{B}$, namely the environment fragment $\bH_2$, while also remaining available for further proliferation into systems outside $\mathfrak{B}$ that do not directly interact with the fragment $\bH_2$. Similarly, the outcome $e_\bS$ emerges because information about the observables in $\gen{Z_\bS}$ proliferates into some systems in $\mathfrak{B}$, namely $\bN$ and $\bH_2$, while remaining available for further proliferation outside $\mathfrak{B}$ into systems that do not directly interact with them.

On the other hand, the state $f_\bS$ emerges because there is a system within $\mathfrak{B}$, namely the environment fragment $\bF_1$, that can implement the generators within the associated algebras without blocking past systems outside $\mathfrak{B}$ that do not directly interact with $\bF_1$ from doing the same by intercepting the necessary causal influences. Similarly for the state $f_\bM$ and the environment fragment $\bG_2$.

This example illustrates the equal importance of proliferation and redundant implementation in explaining emergent classicality. It is vital that in this model, the two $e$-events emerging from the potential for proliferation were correlated (\cref{eq:record}), since that is what justifies regarding one as a record of the other.
But they were only correlated upon conditioning on the state-like event $f_\bM=0$ (or, alternatively, $f_\bM=1$); in the absence of such conditioning, $e_\bS$ and $e_\bN$ are uncorrelated.
It is thus the \textit{interplay} between outcome-like events emerging from the potential for proliferation and state-like events emerging from the potential for redundant implementation that allows the privileged history set to describe the measurement scenario.

    
Relatedly, we note that in this example, the emergence of states is highly analogous to the emergence of outcomes. As mentioned, the state vector $\ket{\psi^0}_\bS$ turns out to be a useful tool for calculating the probability of any $e_\bS$ given that $f_\bS = 0$. Similarly, the bra $\bra{0}_\bS$ is useful for calculating the conditional probability of any $f_\bS$ given that $e_\bS = 0$. The symmetry between outcomes and states is only broken by the fact that one more commonly wants to make predictions for unknown outcomes given prior beliefs about states rather than vice versa, and thus more often conditions on a state than on an outcome.\footnote{This builds on a theme from \cref{sec:symptoms}: the process of decoherence is thoroughly time symmetric, and any time asymmetry originates in prior beliefs that already pick out a preferred direction of time.}
One might even argue that all that it \textit{means} to assert that one event is an outcome and another is a state is to express a particular attitude regarding which events one is likely to condition on.

We are now able to see what makes our approach to emergent classicality not merely focused on the dynamics, but \textit{dynamics-first}. It is often assumed that (our knowledge of) the fundamental quantum reality is represented by a quantum state, and that classicality emerges when that state gets decohered. Here, by contrast, the quantum state is a part of what emerges from (dual) decoherence. States are on an equal footing with outcomes: each is a symptom of the process of decoherence. The most fundamental reality is represented not by any state at all, but by a unitary circuit. And the role of the circuit is not to act on some initial state, but to give birth to state-like and outcome-like events via dual decoherence and decoherence. \cref{sec:crqt} will discuss the implications of this dynamics-first perspective for the interpretation of quantum theory.

We acknowledge that the present model of a measurement is highly simplistic in the sense that it lacks records of the outcome stored in the environment $(\bH_1, \bH_2)$, and it lacks any records whatsoever of the state-like events $f_\bS$ and $f_\bM$. This simplicity is for pedagogical reasons only; as discussed further in \cref{sec:when}, such records appear in more detailed models featuring additional decohering interactions.

\subsection{Two incompatible measurements}

The example above included a single $Z$-measurement. Our next example will feature a $Z$-measurement followed by a measurement of a different, noncommuting observable $O$. As well as illustrating how our formalism can be used to model more complicated scenarios, this example will show how it recovers the projection rule for updating quantum states. 

As in the previous example, we set up our model by drawing a unitary circuit, namely the one in \cref{fig:circuit-two-mmts}, and declaring a subset $\mathfrak{B}$ to be our systems of interest. We denote by $O_\bT$ the observable mapped by the unitary channel $\cw = W(\cdot)W^\dagger$ to the $Z$ observable on $\cw$'s unlabelled output system, and we assume that $[ O_\bT, Z_\bT] \neq 0$.
We define $\mathfrak{B}$ to be precisely the set of systems labelled in black in the figure.

Decoherence and dual decoherence then select two commutative algebras $\bA_{{\rm dec}\mathfrak{B}}^\uparrow$ and $\bA_{{\rm dec}\mathfrak{B}}^\downarrow$ for each system of interest $\bA\in\mathfrak{B}$.
The resulting algebras, and (where they are nontrivial) the corresponding events $e_\bA$ and $f_\bA$, are labelled in blue in \cref{fig:circuit-two-mmts}; we skip their derivation, but the reader is again encouraged to verify them.
Finally, we have decorated the figure with interpretations of the (dual) decohered  algebras, which we justify in what follows.

\begin{figure}[t]
    \centering
    \scalebox{.9}{\input{two_measurement_decomps.tikz}}
    \caption{Model of two incompatible measurements decorated as before with decohered and dual decohered algebras, associated events, and their interpretation. $O_\bT$ denotes the operator that is transformed to the $Z$-operator of the (unlabelled) output system of $\cw = W(\cdot) W^\dagger$. We assume that $[O_\bT, Z_\bT] \neq 0$.}
    \label{fig:circuit-two-mmts}
\end{figure}

The transformations in the lower right part of this circuit precisely match those from the previous example. As a consequence, almost all the events that emerged in the previous example also emerge here. There is one exception: no outcome event $e_\bS$ emerges at $\bS$. 

For an explanation of why not, recall that in the previous example, the observable $Z_\bS$ was decohered because it both proliferated into $\tilde \bN \lor \tilde \bH_2$ and remained available for further proliferation through future interactions with the measured system. But in the present example, $\tilde Z_\bS \not\in \tilde \bP'$. Hence $Z_\bS \not\in \bS_{{\rm pacc}\mathfrak{B}}^\uparrow$ and $Z_\bS \not\in \bS_{{\rm dec}\mathfrak{B}}^\uparrow$; the observable is not decohered. This is no coincidence: since the second measuring device $\bP$ accesses an observable $\tilde O_\bT$ that does not commute with $\tilde Z_\bS$, it inevitably (by \cref{thm:algebras}) fails to leave $Z_\bS$ potentially accessible. An event at $\bS$ corresponding to the outcome of the first measurement thus fails to emerge not only with respect to $\mathfrak{B}$, but any set of systems of interest that includes $\bP$.

    But even though no outcome event emerges at $\bS$, there is still an emergent event $e_\bN$ at $\bN$, corresponding to the same algebra $\gen{Z_\bN}$ as it did in the previous example.
    In the previous example, we showed that this event $e_\bN$ can be interpreted as a record of the outcome $e_\bS$ (given appropriate priors). 
    Here, the variable $e_\bN$ still behaves \emph{as if} it were a record of the outcome of a $Z$-measurement of $\bS$, even though strictly speaking there is no outcome event on $\bS$ itself that it is recording.
    Indeed, if we assume the prior
    \begin{equation}
        f_\bS = 0 \quadand f_\bM = 0,
    \end{equation}
    as in the previous subsection, then our probability rule from \cref{thm:prob} gives the conditional probability
    \begin{equation}
        {\rm Prob}_\mathfrak{B}(e_\bN \mid f_\bS=0, \; f_\bM = 0)  = \left|\braket{e_\bN}{\psi^{0}}\right|^2.
    \end{equation}
    Thus, as in the previous example, $f_\bS=0$ is interpreted as a \enquote{pre-measurement state}, $f_\bM=0$ as the \enquote{ready state} of the measuring device, and $e_\bN$ as the \enquote{record} of the outcome of a $Z$-measurement.
    There is in this case no event at $\bS$ itself that corresponds to this outcome, but assuming the outcome of a measurement is only ever perceived indirectly, via a record, there is no observable difference between the $Z$-measurement in this example and that in the previous one.

What about the remaining events $e_\bT$, $f_\bO$, and $e_\bP$?
Given a particular record $e_\bN$ of the first measurement, the probabilities for $e_\bT$ are given by
\begin{equation}
    \begin{split}
                {\rm Prob}_\mathfrak{B}(e_\bT \mid
        f_\bS = 0,\; f_\bM = 0,\; e_\bN)
        = \left|\bra{e_\bT} W \ket{\phi(e_\bN)}\right|^2, \\
        \quad \text{where} \ \ket{\phi(e_\bN)} := \frac{\ketbra{e_\bN}{e_\bN} \ket{\psi^0}}{\sqrt{\left|\braket{e_\bN}{\psi^0}\right|^2}} .
    \end{split}
\end{equation}
In other words, the correct probabilities for $e_\bT$ are calculated by applying the projection postulate to update the quantum state vector after the first measurement, applying the unitary $W$ to transform the state, and then finally using the Born rule.
From standard quantum theory we know this means that $e_\bT$ can be interpreted as the \enquote{outcome} of a second measurement performed on the same system.

Finally, similarly to \cref{eq:prior-fm,eq:record}, if we condition on the prior belief that
\begin{equation}
    f_\bO = 0,
\end{equation}
then the probabilities for $e_\bP$ are
\begin{equation}
     {\rm Prob}_\mathfrak{B}(e_\bP \mid
    e_\bT, f_\bO = 0) = \delta(e_\bP, e_\bT),
\end{equation}
confirming that $f_\bO=0$ serves as a \enquote{ready state} of the second measurement device and $e_\bP$ as a \enquote{record} of its outcome.

We finish this example with some discussion.
In the first example, the states arose solely from conditioning on events emerging from dual decoherence. Here, the state $\ket{\phi(e_\bN)}$ has arisen from conditioning on not only events emerging from dual decoherence (when assuming the prior) but also on an event $e_\bN$ emerging from decoherence. This is the caveat to the statement that \enquote{outcomes emerge from decoherence and states emerge from dual decoherence}. Although a \textit{state-like event} is one that emerges solely from dual decoherence, a state vector or density matrix is often a useful tool for calculating probabilities conditional on a collection of events that emerge from both decoherence and dual decoherence. 

Similarly, while an outcome-like event emerges from decoherence alone, an \textit{effect} is often a useful tool for calculating probabilities associated with a mixed collection of events. For example, in the present example the effect $\bra{e_\bN}_\bS$ on $\ch_\bS$ serves as a tool for calculating the probability of the record $e_\bN$ given the prior belief that the \enquote{ready} state-like event $f_\bM=0$ takes place. 

If we were to add a third noncommuting measurement, then the story would be similar: the outcome of the second measurement would vanish, but the first two records would remain. The prediction for the outcome of the third measurement would be correctly obtained by applying the projection postulate twice.

Before moving on to the next example, we comment briefly on how more general states and outcomes emerge via dual decoherence and decoherence. Any pure state of a system emerges via dual decoherence from a coherent control interaction where the pure state appears in the control basis. Since an arbitrary mixed state $\rho$ is the partial trace of some pure state, it is always possible to construct a circuit that \enquote{prepares} $\rho$---more precisely, such that conditioning on appropriate events that emerge from the circuit gives probabilities that coincide with those predicted by the density operator $\rho$. Similarly, the outcome of any basis measurement emerges via decoherence from an interaction coherently controlled on the basis to be measured. Since an arbitrary POVM (positive operator-valued measure) always arises from a basis measurement on a larger system, this provides a way of making an event emerge that, under the appropriate priors, corresponds to the outcome of the POVM.

\subsection{Wigner's friend}

Our final example is based on a version of the celebrated Wigner's Friend thought experiment \cite{wigner1995remarks}. It will give us a sense of the sort of relationalism that is arguably suggested by decoherence, about which we will have more to say in \cref{subsec:relationalism}.

In this thought experiment, Wigner stands outside a perfectly isolated laboratory, in which his friend performs a quantum measurement on a particle. Wigner, who presumably should be able to regard the perfectly isolated laboratory (including his friend) as evolving unitarily during the friend's measurement, subsequently undoes that unitary evolution by applying the inverse. He then performs his own measurement of an observable of the particle that does not commute with the observable that was measured by the friend. The scenario is an exotic one because not only the friend's outcome, but also their \textit{experience of seeing} the outcome corresponds to an operator that is incompatible with the operator measured by Wigner.

\begin{figure}[b!]
    \centering
    \scalebox{.9}{\input{wigner.tikz}}
    \caption{Model of the Wigner's friend scenario decorated as before with decohered and dual decohered algebras, associated events, and their interpretation.}
    \label{fig:circuit-wigner}
\end{figure}

We model this scenario with the circuit in \cref{fig:circuit-wigner}, where $\bO$ and $\bP$ represent Wigner, $\bM$ and $\bN$ represent  his friend, $\bF_i$, $\bG_j$, and $\bH_k$ represent fragments of the environment within the lab, and $\bJ_m$ and $\bK_n$ represent fragments of the environment outside the lab. The circuit here is similar to the one in \cref{fig:circuit-two-mmts}, with one exception: this time, before the second measurement is implemented, two of the interactions are \enquote{undone} using their inverses. As in the previous examples, the set of systems of interest $\mathfrak{B}$ consists of precisely those labelled in black, and (dual) decohered algebras, emergent events, and their interpretations are labelled in blue.

Recall from the second example that although there was no outcome event on $\bS$ emerging from decoherence, there was nevertheless a record. But in this example, neither an outcome of the first measurement nor a record emerges from decoherence. Wigner's measurement prevents both degrees of freedom from achieving redundant proliferation: formally, $\tilde Z_\bS, \tilde Z_\bN \notin \tilde \bP'$, and thus $Z_\bS, Z_\bN \notin \bS_{{\rm dec}\mathfrak{B}}^\uparrow$.

Given our interpretation of $\bM$ and $\bN$ as the friend, $Z_\bN$ would be the natural candidate for an operator corresponding to the friend's experience of seeing a particular outcome. Hence this experience is not something that emerges from decoherence relative to $\mathfrak{B}$. However, it does emerge relative to other sets of systems of interest. For example, if we removed from $\mathfrak{B}$ every system that is higher up in the diagram than $\bH_2$ (regardless of whether any path of wires connects it to $\bH_2$), then we would find precisely the same decoherent algebras and statistics corresponding to the measurement outcome $e_\bS$ and the record $e_\bN$ that we did in the example of a single measurement.

This is a striking illustration of a relational aspect of decoherence: there is no \textit{absolute} fact about whether a degree of freedom decoheres, but only a fact about whether it decoheres \textit{relative} to a given set $\mathfrak{B}$ of systems of interest. In this example, there is no absolute fact of the matter about whether or not an event corresponding to the friend's experience emerges.

\subsection{Relationalism} \label{subsec:relationalism}

This raises an important question. Does the emergent classical world exist in an absolute sense, or do different classical worlds exist relative to different sets of systems? Does the need to begin the analysis of decoherence by choosing a subset of systems of interest reflect a genuinely relational aspect of emergent classicality, or merely a limitation of the current formalism?

To address these questions, we begin by arguing that if there is some way of eliminating the relational aspect of decoherence, it isn't \textit{straightforward}. Indeed, one obvious way of attempting to get a nonrelational account of emergent classicality is to simply let $\mathfrak{B}$ include every system. One might hope that by doing so one could obtain a single consistent history set through which all of our observations could be described, embedding all other relevant history sets. But unfortunately, this does not work: for any unitary circuit, the privileged history set for the set of all systems is trivial.
Indeed, by including every system, one ensures that for every system $\bA$ except those at the very top of the circuit, all $\bA$-observables are accessible to systems of interest above it: $\bA^\uparrow_{{\rm acc}\mathfrak{B}} = \bA$.
For a system $\bA$ in the top layer, $\bA^\uparrow_{{\rm acc}\mathfrak{B}} = \star$. Hence $\bA^\uparrow_{{\rm dec}\mathfrak{B}} = \star$ for all $\bA$. By a similar argument, $\bA^\downarrow_{{\rm dec}\mathfrak{B}}= \star $ for all $\bA$, and there are no nontrivial emergent events.

Another strategy for achieving a nonrelational account of emergent classicality is to look for some other way of combining all privileged history sets into a consistent whole, so that each history set can be regarded as a different partial description of the same underlying reality. However, while it is possible that there is some creative way of doing this, there does not appear to be any obvious way. One reason why not is that decohered algebras associated to a single system by two different sets of systems of interest $\mathfrak{B}_1$ and $\mathfrak{B}_2$ may fail to commute with each other, i.e.\ $\tilde \bA_{{\rm dec} \mathfrak{B}_1}^\uparrow \nsubseteq (\tilde \bA_{{\rm dec} \mathfrak{B}_2}^\uparrow)'$, and thus do not generally combine to form more fine-grained commutative algebras. 
(An example is provided by the circuit in \cref{fig:circuit-two-mmts} and the sets $ \mathfrak{B}_1 = \{\bS, \bN\}$ and $ \mathfrak{B}_2 = \{\bS, \bP\}$, where it turns out that $\bS_{{\rm dec} \mathfrak{B}_1}^\uparrow =\langle Z_\bS\rangle $ and $\bS_{{\rm dec} \mathfrak{B}_2}^\uparrow = \langle O_\bS \rangle $.)

None of the above precludes the possibility of some less straightforward path to a nonrelational account of emergent classicality. However, it does suggest that the relationalism in our formalism is not an entirely superficial feature of our formalism, but instead reflects something important about the phenomenon of decoherence itself. To reinforce this point, we recall from \cref{sec:channel} that decoherence requires a balancing of causal influence, and argue that the relational character of decoherence directly follows from this need for a causal balance. Indeed, we saw above that if $\mathfrak{B}$ includes every system, then the derived history set is trivial---the reason is that when every system is included, there is inevitably too much causal influence for decoherence to take place. Conversely, if we let $\mathfrak{B}$ include just \textit{one} system $\bA$, then $\bA^\uparrow_{{\rm acc}\mathfrak{B}} = \bA^\downarrow_{{\rm acc}\mathfrak{B}} =  \star$, and again the derived history set is trivial---this time because there is inevitably too little causal influence. In the context of circuits, the connection between decoherence and causal balance entails that nontrivial observables can only be decohered when some, but not all, systems in the circuit are included in $\mathfrak{B}$.

It is worth noting how this last observation resonates with a core claim of quantum Darwinism: that an observer perceives an object by interacting with more than one, but less than all, of the fragments of its environment, and this is the reason that the observer can only perceive the preferred classical degrees of freedom. The same idea is reflected in the plots one commonly encounters of mutual information between a system and a fraction of the environment against the size of the fraction (see e.g.\ Figure 2 of \cite{zurek2009quantum}), which reach the classical maximum for medium-sized fractions and exceed it for very large fractions.

In the absence of any definitive answer to the question of what sort of relationalism is involved in the emergence of classicality, we can always take the pragmatic approach of choosing whatever $\mathfrak{B}$ allows us to derive a history set that captures what we want to capture. Indeed, this is how we have chosen $\mathfrak{B}$ in our examples above. In each example, the derived history set can be seen as a post-hoc justification for the choice we have made. But questions linger about exactly what we are doing when we choose $\mathfrak{B}$, and about what the need to make this choice tells us about the emergence of classicality.

One possible (though for now, speculative) path towards answering the question of relationalism begins by acknowledging that the present approach takes events to emerge from decoherence and dual decoherence, i.e.\ whenever there is the \textit{potential} for proliferation or redundant implementation. A more stringent requirement would be that events emerge only when proliferation or redundant implementation actually occur. It is conceivable that more stringent conditions on the emergence of events could further limit the number of nontrivial privileged consistent history sets and/or improve the prospects for somehow unifying the different nontrivial sets that do emerge. But pursuing this possibility further is beyond our current scope.

\subsection{Summary: causal quantum Darwinism}\label{subsec:circuit-summary}

At the end of \cref{sec:dual}, we gave a brief heuristic account of the emergence of outcomes and states. Now, the derivation of history sets and the three examples studied in this section have made that account much more concrete. It is therefore worth restating it from our current vantage point.

In this picture, an outcome emerges due to the influence of a system on its surrounding environment, and more precisely due to the sort of influence that helps information about the associated observables to proliferate. A state emerges due to the influence of the surrounding environment on the system, and more precisely due to the sort of influence that helps the associated generators to become redundantly implementable. Importantly, neither the distinction between an outcome and a state, nor the underlying distinction between decoherence and its dual, is fundamental: each is related to the other by time reversal. But the symmetry is broken when we condition on a prior belief that picks out a preferred direction of time. Our account is dynamics-first in the sense that both outcomes and states emerge from the unitary dynamics alone, rather than from the way that the dynamics act on some initial state. Indeed there is no such initial state; the unitary dynamics are a \enquote{thing in themselves} rather than a rule for transforming states.

This point of view could be called \textit{causal quantum Darwinism}.  In this name, \enquote{Darwinism} points to the importance of the proliferation of information in explaining the emergence of classicality. The word \enquote{causal} has a double meaning: it signals the importance of the \emph{process} of decoherence, characterized independently of its symptoms, but it also emphasises that the experiences we are trying to explain are experiences of creatures who are not just observers, but agents who both influence and are influenced by the world around them; who both perceive outcomes and prepare states. By combining these terms, \enquote{causal quantum Darwinism} represents the view that the emergence of classicality consists not just of perceived outcomes emerging through the proliferation of information, but also of prepared states emerging through the redundant implementation of generators. 

The power of the causal quantum Darwinist approach is demonstrated by our derivation of privileged consistent history sets, in which the core causal quantum Darwinist idea of states emerging through dual decoherence plays a central role. While existing literature already contains some arguments that decoherence selects preferred consistent history sets in particular models (e.g.\ \cite{Zurek:1994zq}), what we have seen in this section is a rigorous derivation of a unique consistent history set in the highly general setting of an arbitrary unitary circuit and an arbitrary set of systems of interest. Our examples and the discussion around them show that the privileged history sets are capable of describing complex scenarios involving multiple incompatible measurements. Causal quantum Darwinism thus provides a unified formalism for decoherence in which many dynamically privileged degrees of freedom come together to form history sets plausibly rich enough to describe the emergent classical world.

\section{When does a measurement take place?} \label{sec:when}

One of the key motivations for the theory of decoherence was to explain how, in a unitary quantum world, there come to be measurements with apparently definite outcomes. Now that we have a precise account of how privileged consistent history sets emerge from decoherence, we can give a correspondingly precise account of the necessary conditions for measurement. By referring to the model of a single measurement from the previous section, displayed in \cref{fig:circuit-single-mmt}, this section will critically assess three commonly made claims:
\begin{enumerate}[label=(\arabic*)]
    \item a measurement can take place only if the measurement device interacts with its environment;
    \item a measurement can take place only if the measurement device begins in an appropriate initial state; and
    \item a measurement can take place only if the environment begins in an appropriate initial state.
\end{enumerate}

First of all, the causal approach suggests that (1) is true, though with a caveat. The production of an observable record of the outcome is an essential part of any measurement, and if one modifies the model in \cref{fig:circuit-single-mmt} by removing the interaction between the measurement device $\bN$ and the environment fragment $(\bH_1, \bH_2)$, then the record $e_\bN$ on the measurement device does not emerge. 
Note, however, that the emergence of the outcome $e_\bS$ itself does not depend on this interaction, since the interaction between the system and the measurement device suffices for its emergence.

The causal approach also suggests that (2) is true. The event $f_\bM$ is naturally interpreted as the preparation of the state $\ket{f_\bM}_\bM$, and only upon conditioning on this event does a correlation arise between the outcome $e_\bS$ and the record $e_\bN$.  Since such correlations are essential for measurement, it is indeed necessary that this state-like event occurs.

However, it is interesting to note that the mere \textit{existence} of nontrivial events $e_\bS$ and $e_\bN$ does not, on the causal approach, depend on $f_\bM$ (or $f_\bS$). Indeed, even if one prevented $f_\bM$ from emerging by removing the interaction with $\bG_1$, $\bG_2$, then $e_\bS$ and $e_\bN$ would still emerge, as they would still be privileged by the dynamics that would remain.

Finally, the causal approach suggests that (3) is false, though with a caveat. In our measurement model, there are no state-like events in the environment; nevertheless, an outcome occurs and a record is produced on the measurement device, and we take this to be sufficient for a measurement to have taken place. 

The caveat here is that our measurement model is very simplistic in that, although a record of the outcome is stored on the measurement device, there are no records of the outcome stored in the environment. As such, the model does not capture how information flows from the system to the environment and then ultimately to the observer. One can capture this process in our framework, but to do so one has to add in interactions between different fragments of the environment through which more records could emerge. To ensure that these records are correlated with the outcome, one also needs interactions out of which state-like events emerge so that one can then condition on those events. Hence a less simplistic model of a measurement \textit{would} require state-like events in the environment. 

Yet it is worth bearing in mind that no matter how long a chain of records in the environment one considers, the emergence of the final record in the chain will not depend on the state of the fragment that decoheres it, or indeed on whether that fragment has a state at all---just as in our model, the emergence of the record $e_\bN$ does not require $\bH_1$ to have any state. This sets the causal approach apart from some more traditional approaches to decoherence: on the view that events emerge due to the suppression of off-diagonal terms in a quantum state, whether or not an event emerges does depend on the initial state of the decohering system. But on the causal approach, both the state of the environment and the record are symptoms of the process of decoherence, and the latter symptom can manifest without the former.

\section{Decoherence and the interpretation of quantum theory} \label{sec:crqt}

The conceptual implications of decoherence depend on one's interpretation of quantum theory. Indeed, for an Everettian, decoherence is the branching of worlds, whereas for a Bohmian, it is the splitting of the quantum potential into one part that guides the corpuscles and other parts that do not. It also goes the other way: one's interpretation of quantum theory may be guided by one's understanding of decoherence. For many, the Everett interpretation did not appear plausible until it was shown that decoherence defines a preferred basis with respect to which the branching structure can be defined. 

Likewise, the conceptual implications of our own formalism for decoherence depend on one's interpretational perspective, but can also guide it. Indeed, there is an obvious way of taking the formalism \enquote{literally} so that it does define a precise, realist interpretation of quantum theory: \\

\textbf{Causal Decoherence Theory}
\begin{enumerate}
    \item Reality is represented by unitary dynamics together with the histories that emerge from them via decoherence and dual decoherence.
    \item Relative to every subset $\mathfrak{B}$ of systems, exactly one history takes place from the privileged consistent history set.
    \item The probability for a given history to take place is given by \cref{eq:prob}. 
\end{enumerate}

\Cref{app:bubble} shows that Causal Decoherence Theory (CDT) is equivalent to the interpretation of quantum theory originally introduced in~\cite{ormrod2024quantum}, in which events emerge from the \enquote{interference influences} between complete orthogonal families of projectors. Compared to the original, the present formulation of the theory has two main advantages. 
Technically, the formalism here is simpler: preferred subspace decompositions are derived directly from the circuit itself, rather than by first breaking the wires corresponding to systems of interest and then extracting them from the resulting channel (see \cref{app:bubble} for more details).
Conceptually, it sharpens the central claim of the interpretation: while \cite{ormrod2024quantum} posited that events emerge from causal structure, the present formulation makes clear that the relevant causal structures are those associated with decoherence and dual decoherence, i.e.\ those required for the proliferation of observables and the redundant implementation of generators.

Axiom (2) above entails that Causal Decoherence Theory is \textit{stochastic}: nothing predetermines which history from the preferred set will actually happen. This is why the theory requires a probability rule, i.e.\ axiom (3). 

Stochasticity might sound like a surprising feature of an interpretation based purely on unitary quantum theory and decoherence. It would indeed be surprising on a traditional state-first approach: if the world consisted only of a quantum state evolving unitarily, then the world would be deterministic, because the unitary dynamics act deterministically on the quantum state.
But in CDT, the role of the unitary circuit is not to act on a quantum state. Its role is to give rise to events through decoherence and dual decoherence, and it does so in a stochastic way. Therefore, from the dynamics-first perspective of CDT, the unitary dynamics are not deterministic.

The stochasticity of CDT distinguishes it from another interpretation of quantum theory based purely on unitary quantum theory and decoherence, namely the Everett interpretation as laid out in e.g.\ \cite{wallace2012emergent}. Because it holds that every possible outcome of a measurement actually happens on some branch, the Everett interpretation implies that the Born probabilities are not associated with ignorance about which outcome will arise. The plausibility of the Everett interpretation therefore rests on the plausibility of some nonstandard account of the role of probabilities, in which probabilities not only do not represent ignorance, but cannot be associated with ignorance. Such accounts have been provided in e.g.\ \cite{deutsch1999quantum, wallace2012emergent, greaves2010everett} and criticized in  e.g.\ \cite{albert2010probability, adlam2014problem}.

By contrast, in CDT the Born probabilities are associated with genuine ignorance of what is to come, and thus the plausibility of CDT is not tied to the plausibility of a nonstandard interpretation of probabilities. Hence those who would like a precise realist interpretation of quantum theory based only on unitary dynamics and decoherence but are sceptical that the Everett interpretation can provide a satisfactory account of the role of probabilities are likely to prefer CDT. 

On the other hand, for those who are open to a nonstandard account of probabilities, the causal formalism for decoherence also points to a possible alternative, dynamics-first version of the Everett interpretation. This Everett interpretation is closely related to CDT, but with a crucial difference.
Rather than positing that only one history from a preferred set happens as in axiom (2) above, it is held that \textit{all} happen. 
It thus requires a nonstandard account of probabilities to explain the meaning of the probability rule.\footnote{Since we have not given this account, this proposal is currently better described as an interpretational \textit{approach} than a worked-out interpretation. However, for ease of exposition we refer to it simply as an ``interpretation''.}

This dynamics-first Everett interpretation (DFEI) is Everettian in the sense of being a branching multiverse theory. However, the branches are not part of the wavefunction; instead, they are part of the \textit{unitary dynamics}. For example, consider a maximally decohering interaction $\cu: \bS \bF \rightarrow \bT \bG$ (whose form is shown in \cref{eq:uni_control}) and the set $\mathfrak{B} = \{\bS, \bG\}$ of systems of interest. Each unitary $V_{\bF \rightarrow \bG}^{(k)}$ corresponds to a branch of $\cu$ associated with the event $e_\bS = k$ emerging from decoherence. 

We note in passing that DFEI is closely related to Deutsch and Hayden's version of the Everett interpretation \cite{deutsch2000information}. Indeed, the decoherent algebras can be thought of as providing a generalization of the \enquote{Heisenberg picture relative states} described in \cite{kuypers2021}.

We will now sketch an argument for why an Everettian might prefer the dynamics-first version of the theory over a state-first version. Both versions of the Everett interpretation are committed to the unitary dynamics; the difference is simply that the state-first version is, in addition, committed to a quantum state on which the unitary dynamics act. But if DFEI is correct to argue that histories emerge from the unitary dynamics alone, then, the argument goes, they do not cease to emerge in the state-first interpretation simply because there is also a state.

That is, in both the state-first and the dynamics-first theory, histories emerge from the unitary dynamics alone. These histories may include planets, cats, trees, observers who experience what appears to be a classical world, and experimentalists who experience what appear to be definite outcomes of measurement.
Hence the challenge for state-first Everettians is: if the state is not necessary to explain the emergence of classicality, or the appearance of definite outcomes, then what is the justification for postulating it?

\begin{remarkbox}
We note that this argument closely resembles an argument from Brown and Wallace that the de Broglie--Bohm theory \enquote{loses out} to the (state-first) Everett interpretation \cite{brown2005solving}: both the de Broglie--Bohm theory and the (state-first) Everett interpretation are committed to a unitarily evolving wavefunction; the difference between them is simply that the former is also committed to corpuscles. But if the (state-first) Everett interpretation is correct to argue that apparently definite outcomes emerge when the wavefunction is decohered, then they do not cease to emerge in the de Broglie--Bohm theory simply because of the existence of corpuscles. The corpuscles mark out a particular branch of the wavefunction, but the fact that one branch bears this mark does not prevent the others from existing. The de Broglie--Bohm theory is thus every bit as much of a multiverse theory as is the Everett interpretation. (Here Brown and Wallace echo similar claims from Zeh \cite{deutsch1996comment} and Deutsch \cite{zeh1999bohm}.)

Our argument suggests that the wavefunction plays a similar role in the state-first Everett interpretation to the one played by the corpuscles in the de Broglie--Bohm theory according to Brown and Wallace. The universal wavefunction highlights certain branches of the dynamics by attributing them a low probability, and \enquote{lowlights} others by attributing them a low or even zero probability, but it does not thereby prevent any branches of the dynamics from existing. Whereas Brown and Wallace argued that a unitarily evolving state is not prevented from giving rise to a branching multiverse simply because it is made to guide corpuscles, our argument suggests that the unitary dynamics are not prevented from giving rise to a \enquote{dynamics-first} branched multiverse simply because they are made to guide a quantum state.

While assessing the validity of Brown and Wallace's argument is beyond our current scope, we do suggest that \textit{if} their argument did successfully establish that de Broglie--Bohm theory loses out to the state-first Everett interpretation, then our similar argument must also establish that, in turn, the state-first Everett interpretation loses out to DFEI.
\end{remarkbox}

Despite their differences, DFEI and CDT both embrace a similar metaphysical shift. The fundamental ontology includes nothing but dynamics and the histories that emerge from them. This means that the dynamics cannot be thought of in the usual way. Dynamics cease to be a rule governing the evolution of a state that is held to exist a priori. They instead become a thing in itself, out of which states emerge. The causal approach to decoherence suggests not only a shift in perspective on the emergence of classical physics, but on physics more generally: dynamics is not the \textit{governor}, but the \textit{mother}, of states.

\section{Limitations and future work} \label{sec:limitations}

The analysis of decoherence presented here is limited to the case where (1) decoherence is ideal, for example in the sense that no leakage into the environment whatsoever is permitted, and (2) all systems are finite-dimensional. Future work must generalize the analysis beyond this for the causal approach to provide a comprehensive foundation for the theory of decoherence. Indeed, both limitations must be overcome before the causal approach can be applied to the decoherence of coherent states and the derivation of classical Hamiltonian equations of motion. 

One promising strategy for extending the causal approach to approximate decoherence is to begin by quantifying quantum causal influence, perhaps using the operator norm of the commutator. One could then seek to leverage the quantitative notion of causal influence to define approximate accessibility, potential accessibility, and decoherence, and ultimately to establish that the set of approximately decohered operators forms something \enquote{close} to a commutative von Neumann algebra. 

The causal approach to decoherence is likely to help explain the emergence of classical causal models \cite{pearl2009causality} from quantum causal models. It would be interesting to see what insights this causal modelling perspective may bring to the question of how classical physics emerges from quantum physics. 

Although \cref{app:continuous} generalizes the causal approach to the continuous-time case, it does not provide a continuous-time version of the derivation of consistent histories. Although \cref{sec:circuit} explicitly used circuits for the derivation, the role of those circuits was simply to define and motivate the operator algebras associated with systems of interest. This suggests that the derivation of consistent histories might be most naturally generalized to the continuous-time case by taking an algebraic field-theoretic approach, on which the systems of interest are taken to be algebras associated with spacetime regions.

One of the most interesting, but also one of the most ambitious, future directions is to take up the challenge from \cref{subsec:relationalism} of determining whether or in what sense classicality emerges in a relational way, and, in particular, to investigate whether the privileged history sets associated with different sets of systems of interest $\mathfrak{B}$ can be viewed as different descriptions of the same underlying reality. Another future direction---vital for transforming DFEI into a worked-out interpretation rather than a mere interpretational approach---is to derive the Born rule from the unitary dynamics alone. Such a derivation must not presuppose the existence of any quantum state or assume (like CDT) that a unique history is selected from each privileged set, and must therefore start from a nonstandard account of probabilities. Finally, a particularly ambitious future direction is to explore the idea that not only events but also spacetime emerge from a quantum causal structure, possibly leveraging insights from the Hawking-Malament theorem \cite{hawking1976new, malament1977class} and causal set theory (reviewed in \cite{Surya_2019}).

\section{Conclusion} \label{sec:conc}

We started this work by providing a precise account of the process of decoherence in the context of a bipartite unitary interaction between a system and a fragment of its environment.
In particular, taking a quantum Darwinist approach, we identified the classical degrees of freedom picked out by the unitary interaction through the conditions of reproduction and survival---concepts that we formalised in terms of the causal conditions of \emph{accessibility} and \emph{potential accessibility}.
Our focus on the \emph{process} of decoherence, and the absence of quantum states from our fundamental analysis, allowed us to naturally introduce the concept of \emph{dual decoherence} by considering the time-reversed unitary.
Finally, by applying the causal approach to decoherence to the larger context of a circuit of unitary interactions, we saw that it acts as a unifying framework for environmentally induced decoherence and consistent histories, in which states emerge from the unitary dynamics just as outcomes do.

We suggested in \cref{subsec:circuit-summary} that the point of view developed in this work might be called \textit{causal quantum Darwinism}. This term for the approach signals its two most significant influences: quantum causal modelling and quantum Darwinism. The word \enquote{causal} points to the role played by causal influences in pinning down the dynamical structures associated with decoherence independently of their kinematical symptoms. 
But the same word also serves as a reminder of a fact about our position in the world that we have argued is essential to an explanation of the classicality of our experiences: that we are not passive observers, but agents with causal power.
Our experiences, whose classical character we are attempting to explain, are defined not only by what we are able to perceive, but by what we are able to do. 

We have shown that if one articulates in causal terms the quantum Darwinist idea that classical outcomes emerge through the proliferation of information, then, simply by considering the time-reversed perspective, one uncovers a dual version of the story, in which quantum states emerge through the redundant implementation of generators. Decoherence explains the limitations in the observables we are able to measure, while dual decoherence explains the limitations in the states we are able to prepare. But because they are related by time reversal, dual decoherence is not a concept that is independent of decoherence or requires a wholly separate motivation. Indeed, if at the fundamental level the direction of time is an arbitrary convention, then so too is the distinction between decoherence and its dual. In our formalism, the symmetry between decoherence and its dual is only broken once one specifies a prior belief that is most naturally interpreted as a prior belief about states.

The major result of this paper is the derivation of physically meaningful consistent history sets from decoherence, leading to a unified framework that combines the strengths of environmentally induced decoherence and consistent histories. This result crucially relies on the causal quantum Darwinist idea that outcomes and states emerge on an even footing from the unitary dynamics alone. Decoherence and dual decoherence each contribute only half of the events that make up a privileged history, and, as our examples have illustrated, correlations between events in one half only arise by conditioning on events in the other half: it is the interplay between the two types of event that makes the privileged history sets rich enough to describe the classical world.

We conclude this work by placing it in a broader historical context. Zurek has suggested that one reason why it took so long for the importance of decoherence to be recognized was the traditional emphasis on isolated systems in fundamental physics:
\begin{quote}
    The idea that the “openness” of quantum systems might have anything to do with the transition from quantum to classical was ignored for a very long time, probably because in classical physics problems of fundamental importance were always settled in isolated systems. \cite{zurek2003decoherence}
\end{quote}
We suggest that a second, complementary tendency has also been an obstacle to progress in the emergence of classicality: the \textit{state-first} view of physics. On this view, physics describes the world by assigning it a quantum state, while the role of dynamics is merely to transform that state. From this perspective, the existence of a state is not something that needs to be explained. It is unproblematic to assume a priori that a system begins in some particular initial state.

By contrast, on a \textit{dynamics-first} approach, a state only exists if it emerges from the causal structure of the dynamics. This means that \textit{the existence of a state calls for an explanation}. We believe that this demand for explanation facilitates progress by forcing one to ask questions that turn out to be important in the emergence of classicality. On the other hand, by not demanding such explanations, the state-first perspective inhibits progress. Indeed, the main result of this paper provides a concrete instance of this: requiring an explanation of the origin of quantum states leads the dynamics-first approach to the notion of dual decoherence which is essential for the derivation of privileged consistent history sets, whereas a state-first approach has no obvious need for the concept of dual decoherence.

The dynamics-first viewpoint that we are advocating is foreshadowed by the observation, made nearly half a century ago~\cite{zurek1981pointer}, that preferred degrees of freedom can be read directly off the Hamiltonian. Despite this suggestive mathematical insight, the emergent classicality literature since then has tended to view classicality as emerging from a decohering quantum state rather than directly from the underlying dynamical process. Perhaps one reason for this is that, due to the state-first view, physicists have lacked a conceptually motivated and formally well-defined language in which to speak of the dynamics of decoherence independently of its kinematical consequences. 

But this work has shown that with the account of causal influence developed in recent work on quantum causal modelling, we finally have a language for decoherence without the quantum state. The time is ripe to put that language to work in explaining the emergence of the classical world.

\section*{Acknowledgments}
\addcontentsline{toc}{section}{Acknowledgements}  

We are grateful to Emily Adlam and Robert W.\ Spekkens for helpful discussions. We also thank the organisers and participants of the Emergence of Classicality 2024 conference at Trinity College Dublin.

Perimeter Institute is situated on the traditional territory of the Anishinaabe, Haudenosaunee, and Neutral peoples.
It is located on the Haldimand Tract, granted by the British to the Six Nations of the Grand River and the Mississaugas of the Credit First Nation.
Of the 950,000 acres granted to the Haudenosaunee, less than 5 percent remains Six Nations land, and only 6,100 acres remain Mississaugas of the Credit land.

Research at Perimeter Institute is supported in part by the Government of Canada through the Department of Innovation, Science and Economic Development and by the Province of Ontario through the Ministry of Colleges, Universities, Research Excellence and Security.
This project was funded in part within the QuantERA II Programme that has received funding from the European Union’s Horizon 2020 research and innovation programme under Grant Agreement No 101017733. JKK acknowledges the support of the National Science Centre (NCN) through the QuantEra project Qucabose grant no 2023/05/Y/ST2/00139.
YY acknowledges the support of the Natural Sciences and Engineering Research Council of Canada (Grant No. RGPIN-2024-04419).

\phantomsection
\addcontentsline{toc}{section}{References}
\printbibliography

\appendix

\section{On the use of non-Hermitian operators in \cref{sec:channel}} \label{app:hermitian} 

Our discussion of the concepts of accessibility, potential accessibility,
and decoherence often assumes that we are dealing with an operator $M_\bS$ that represents an observable, and must therefore be Hermitian. However, \cref{def:influence,def:influence2,def:accessibility,def:potential_accessibility,def:decoherent} refer to both Hermitian and non-Hermitian operators, and the algebras $\bS_{{\rm acc}\bG}^\cu$, $\bS_{{\rm pacc}\bG}^\cu$, and $\bS_{{\rm dec}\bG}^\cu$ contain both Hermitian and non-Hermitian elements.
Here we explain how the consideration of non-Hermitian operators in our analysis turns out to be an inconsequential mathematical convenience. 

For any algebra $\bA$, each element $M_\bA \in \bA$ can be written as a complex combination of Hermitian operators in the algebra, $M_\bA = \frac{1}{2}(M_\bA + M_\bA^\dagger) - \frac{i}{2}(iM_\bA - iM_\bA^\dagger)$. Hence even though not all members of $\bA$ are Hermitian, $\bA$ is spanned by its Hermitian elements ${\rm Herm}(\bA)$; that is
\begin{equation} \label{eq:alg_span_herm}
    \bA =  {\rm span}_{\mathbb{C}} ({\rm Herm}(\bA)).
\end{equation}
It follows that
\begin{equation} \label{eq:comm_herm_comm}
    \bA' = {\rm Herm}(\bA)'.
\end{equation}

Applying this identity to our accessible algebra,
\begin{equation}
    \begin{split}
        \bS_{{\rm acc}\bG}^\cu & \stackrel{\eqref{eq:accdoublecom}}{=}  ( \cu^{-1}(\bG)' \cap \bS)' \cap \bS \\
        &\stackrel{\eqref{eq:comm_herm_comm}}{=} {\rm Herm}({\rm Herm}(\cu^{-1}(\bG))' \cap \bS)' \cap \bS, \\
    \end{split}
\end{equation}
we find that $M_\bS \in \bS_{{\rm acc}\bG}^\cu $ if and only if every \textit{Hermitian} operator $G_\bS \in \bS $ that does not commute with $M_\bS$ influences some \textit{Hermitian} element of $\bG$. Therefore, if \cref{def:influence,def:influence2} of influence and  \cref{def:accessibility} of accessibility were modified so that all mentioned operators were assumed to be Hermitian, then the resulting set of accessible operators would simply be ${\rm Herm}(\bS_{{\rm acc}\bG}^\cu)$, i.e.\ the set of all of the operators deemed accessible by \cref{def:accessibility} that happen to be Hermitian. By \cref{eq:alg_span_herm}, one can recover $\bS_{{\rm acc}\bG}^\cu$ from ${\rm Herm}(\bS_{{\rm acc}\bG}^\cu)$ by taking the complex linear span. This alternative definition of accessibility is thus effectively equivalent to \cref{def:accessibility}. 

Similarly, if one defined potential accessibility and decoherence purely in terms of Hermitian operators, then one would simply obtain the sets ${\rm Herm}(\bS_{{\rm pacc}\bG}^\cu)$ and ${\rm Herm}(\bS_{{\rm dec}\bG}^\cu)$, whose complex linear spans are the original algebras $\bS_{{\rm pacc}\bG}^\cu$ and $\bS_{{\rm dec}\bG}^\cu$. Hence this entirely Hermitian version of the analysis turns out to be effectively equivalent to the one from \cref{sec:channel} that refers to both Hermitian and non-Hermitian operators. 

Due to this equivalence, it is harmless to give the notions of influence, accessibility, potential accessibility, and decoherence a \textit{physical interpretation} that assumes Hermiticity and a \textit{mathematical analysis} that does not. We have chosen to take this approach for pedagogical reasons.

\section{The predictability sieve} \label{app:sieve}

A classical system can be measured without disturbing its state, and thus without any loss in an observer’s ability to predict its behaviour.
Zurek’s predictability sieve \cite{Zurek:1994zq} uses loss of predictability as a criterion to \enquote{filter out} states that are genuinely quantum, leaving only those that are effectively classical. In this appendix, we explore the connection between the predictability sieve and the causal approach to decoherence in the discrete-time setting, demonstrating that the two approaches agree on when a system basis is privileged due to an interaction with the environment. Since such a basis is not necessarily robust, this reinforces the argument from \cref{sec:symptoms} that robustness is not necessary for decoherence in this setting.

Although \cite{Zurek:1994zq} introduced the sieve in the context of a
Hamiltonian interaction between a system and its environment, we
describe it here in the context of a unitary interaction
$\cu : \bS \bF \rightarrow \bT \bG$, where we assume that
\begin{equation} \label{eq:dim}
    \dim(\bS) = \dim(\bT), \qquad \dim(\bF) = \dim(\bG).
\end{equation}

The \enquote{unpredictability} of a state $\rho$ is quantified by its von Neumann
entropy $S(\rho):= - \Tr (\rho {\rm log} \rho)$, which is nonnegative,
\begin{equation} \label{eq:vne_pos}
    S(\rho) \geq 0,
\end{equation}
and vanishes if and only if $\rho$ is pure:
\begin{equation} \label{eq:vne_pure}
    S(\rho) = 0
    \qquad \Longleftrightarrow \qquad
    \exists\, \ket{\phi} \ \text{such that}\ 
    \rho = \ketbra{\phi}{\phi}.
\end{equation}

Given an initial state $\rho_\bF$ of the environment, the state of the system
is mapped as
\begin{equation}
    \rho_\bS
    \;\longmapsto\;
    \rho_\bT
    = {\rm Tr}_\bG\!\bigl( \cu(\rho_\bS \otimes \rho_\bF) \bigr).
\end{equation}
The loss of predictability of the system is quantified by comparing the
initial and final von Neumann entropies:
\begin{equation}
    {\rm loss}_{\cu,\rho_\bF}(\rho_\bS)
    = S(\rho_\bT) - S(\rho_\bS).
\end{equation}
If
\begin{equation} \label{eq:preferloss}
    {\rm loss}_{\cu,\rho_\bF}(\rho_\bS)
    <
    {\rm loss}_{\cu,\rho_\bF}(\sigma_\bS),
\end{equation}
then $\rho_\bS$ is said to be \emph{preferred over}, i.e.\ more effectively
classical than, $\sigma_\bS$.

We now come to state a theorem about the relationship between potential accessibility and the predictability sieve. It refers to the notion of a \textit{potentially accessible state}, by which we simply mean a density operator $\rho$ contained in the algebra $\bS_{{\rm pacc}\bG}^{\cu}$ from \cref{def:potential_accessibility}. It shows that for pure states, immunity to any loss of predictability is equivalent not to robustness, but to potential accessibility.

\begin{theorem} \label{thm:pacc_sieve}
    The potentially accessible pure states of $\bS$ are precisely those that suffer no loss of predictability for any initial state of the environment: for all $\ketbra{\psi}{\psi}_\bS \in \bS$,
    \begin{equation} \label{eq:pacc_sieve}
        \ketbra{\psi}{\psi}_\bS \in \bS_{{\rm pacc}\bG}^{\cu}
        \qquad \Longleftrightarrow \qquad
        \forall\, \rho_\bF,\ 
        {\rm loss}_{\cu,\rho_\bF}\!\bigl( \ketbra{\psi}{\psi}_\bS \bigr) = 0.
    \end{equation}
\end{theorem}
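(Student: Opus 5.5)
We will prove the two directions separately, in both cases using \cref{thm:pacc_ninf}, which identifies $\bS_{{\rm pacc}\bG}^\cu$ with $\cu^{-1}(\bT)\cap\bS$. The condition $\ketbra{\psi}{\psi}_\bS \in \bS_{{\rm pacc}\bG}^\cu$ therefore says that $\cu(\ketbra{\psi}{\psi}_\bS \otimes I_\bF) = P_\bT \otimes I_\bG$ for some operator $P_\bT$.

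For the forward direction, we first identify $P_\bT$. The pushed-forward operator is a projector of rank $\dim(\bF)$, because $\cu$ is a unitary conjugation. Hence $P_\bT$ is a projector with $\mathrm{rank}(P_\bT)\dim(\bG) = \dim(\bF)$. By \cref{eq:dim}, $P_\bT$ has rank one, say $P_\bT = \ketbra*{\tilde\psi}_\bT$; this is the same argument as in the proof of \cref{thm:uni_control}.

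Next, take any $\rho_\bF$ and write $\sigma = \cu(\ketbra{\psi}_\bS\otimes\rho_\bF)$. Because $\ketbra{\psi}_\bS\otimes\rho_\bF = (\ketbra{\psi}_\bS\otimes I_\bF)(\ketbra{\psi}_\bS\otimes\rho_\bF)$, we get $\sigma = (\ketbra*{\tilde\psi}_\bT\otimes I_\bG)\,\sigma\,(\ketbra*{\tilde\psi}_\bT\otimes I_\bG)$. Therefore $\sigma = \ketbra*{\tilde\psi}_\bT\otimes\tau_\bG$ for some $\tau_\bG$. Its marginal $\rho_\bT = \ketbra*{\tilde\psi}_\bT$ is pure, so by \cref{eq:vne_pure} we have $S(\rho_\bT)=0=S(\ketbra{\psi}_\bS)$, and the loss is zero.

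For the converse, zero loss together with $S(\ketbra{\psi}_\bS)=0$ gives $S(\rho_\bT)=0$, so $\rho_\bT$ is pure for every $\rho_\bF$. We specialise to the maximally mixed state $\rho_\bF = I_\bF/\dim(\bF)$. Then $\sigma = \cu(\ketbra{\psi}_\bS\otimes I_\bF)/\dim(\bF)$, which is the normalised projector onto a subspace $\mathcal K \subseteq \ch_\bT\otimes\ch_\bG$ of dimension $\dim(\bF)=\dim(\bG)$.

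A bipartite state with pure marginal $\ketbra*{\tilde\psi}_\bT$ must factorise as $\ketbra*{\tilde\psi}_\bT\otimes\tau_\bG$. Matching ranks, $\tau_\bG$ is proportional to a projector of rank $\dim(\bG)$, so $\tau_\bG = I_\bG/\dim(\bG)$. It follows that $\cu(\ketbra{\psi}_\bS\otimes I_\bF) = \ketbra*{\tilde\psi}_\bT\otimes I_\bG \in \bT$, and \cref{thm:pacc_ninf} then gives $\ketbra{\psi}_\bS \in \bS_{{\rm pacc}\bG}^\cu$.

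The main obstacle is this last step: choosing the maximally mixed environment state and using the standard fact that a pure marginal forces a product state, $\sigma = \ketbra*{\tilde\psi}\otimes\tau$. One proves that fact by noting $\bra{\phi^\perp}\mathrm{Tr}_\bG\sigma\ket{\phi^\perp}=0$ for every $\ket{\phi^\perp}\perp\ket*{\tilde\psi}$. Positivity of $\sigma$ then forces its support to lie in $\ket*{\tilde\psi}\otimes\ch_\bG$.
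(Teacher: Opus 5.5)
Your proof is correct and takes the same route as the paper. For the forward direction you use the rank-one identification of $\cu(\ketbra{\psi}_\bS\otimes I_\bF)$ followed by the sandwich argument giving $\cu(\ketbra{\psi}_\bS\otimes\rho_\bF)=\ketbra*{\tilde\psi}_\bT\otimes\tau_\bG$. For the converse you specialise to the maximally mixed environment state and use \cref{eq:dim}. You also make explicit two steps the paper leaves implicit: the rank count behind the existence of $\ket*{\tilde\psi}_\bT$, and the pure-marginal-forces-product argument behind the paper's \cref{eq:dim}-labelled equivalence.
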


\begin{proof}
    If $\ketbra{\psi}{\psi}_\bS \in \bS_{{\rm pacc}\bG}^{\cu}$, then by condition \ref{itm:autonomy} of \cref{thm:pacc_ninf}, there exists a state-vector $\ket{\phi}_\bT \in \ch_\bT$ such that 
    \begin{align} \label{eq:phi}
            \cu(\ketbra{\psi}{\psi}_\bS \otimes I_\bF) = \ketbra{\phi}{\phi}_\bT \otimes I_\bG.
    \end{align}
    Hence for any density operator $\rho_\bF \in \bF$,
    \begin{equation}
    \begin{split}
        \cu\!\bigl( \ketbra{\psi}{\psi}_\bS \otimes \rho_\bF \bigr)
        &=
        \cu\!\Bigl(
            (\ketbra{\psi}{\psi}_\bS \otimes I_\bF)
            (\ketbra{\psi}{\psi}_\bS \otimes \rho_\bF)
            (\ketbra{\psi}{\psi}_\bS \otimes I_\bF)
        \Bigr) \\
        &=
        \cu(\ketbra{\psi}{\psi}_\bS \otimes I_\bF)\,
        \cu(\ketbra{\psi}{\psi}_\bS \otimes \rho_\bF)\,
        \cu(\ketbra{\psi}{\psi}_\bS \otimes I_\bF) \\
        &\stackrel{\eqref{eq:phi}  }{=}
        (\ketbra{\phi}{\phi}_\bT \otimes I_\bG)\,
        \cu(\ketbra{\psi}{\psi}_\bS \otimes \rho_\bF)\,
        (\ketbra{\phi}{\phi}_\bT \otimes I_\bG) \\
        &=
        \ketbra{\phi}{\phi}_\bT
        \otimes
        \bra{\phi}_\bT
        \cu(\ketbra{\psi}{\psi}_\bS \otimes \rho_\bF)
        \ket{\phi}_\bT .
    \end{split}
    \end{equation}
    Thus $\rho_\bT = \ketbra{\phi}{\phi}_\bT$ and $S(\rho_\bT)=S(\rho_\bS)=0$. It follows that ${\rm loss}_{\cu,\rho_\bF}(\ketbra{\psi}{\psi}_\bS) = 0$.
    This proves the $\Rightarrow$ direction of
    \cref{eq:pacc_sieve}.
    
    For the $\Leftarrow$ direction, it suffices to consider the case where $\rho_\bF = I_\bF / d$:
    \begin{equation}
    \begin{split}
        {\rm loss}_{\cu, I_\bF\!/\!d} \bigl( \ketbra{\psi}{\psi}_\bS \bigr) = 0 \quad 
        \;\;&\Longleftrightarrow\; \quad 
        S(\rho_\bT) = 0 \\
        &\Longleftrightarrow\; \quad 
        \exists\, \ket{\phi}_\bT: \ 
        \rho_\bT = \ketbra{\phi}{\phi}_\bT \\
        &\stackrel{\eqref{eq:dim}}{\Longleftrightarrow}\; \quad    \exists\, \ket{\phi}_\bT: \  
        \cu\!\bigl( \ketbra{\psi}{\psi}_\bS \otimes I_\bF/d \bigr)
        =
        \ketbra{\phi}{\phi}_\bT \otimes I_\bG/d \\
        &\Longleftrightarrow\; \quad  
        \cu\!\bigl( \ketbra{\psi}{\psi}_\bS \otimes I_\bF/d \bigr) \in \bT \\
        &\stackrel{\ref{thm:pacc_ninf}}{\Longleftrightarrow}\; \quad 
        \ketbra{\psi}{\psi}_\bS \in \bS_{{\rm pacc}\bG}^{\cu}.
        \qedhere
    \end{split}
    \end{equation}
\end{proof}

According to the predictability sieve, preferred states arise not when ${\rm loss}_{\cu, I_\bF/d}(\ketbra{\psi}{\psi}_\bS)$ is minimized for
all pure states, but when it is minimized for some states and not all. There is a unique preferred basis when states on one basis suffer a lower loss in predictability than any states outside the basis. From the causal perspective developed here, this reflects the fact that nontrivial observables are decohered only when some, but not all, observables are left potentially accessible. 

An important consequence of \cref{thm:pacc_sieve} is that our causal approach to decoherence is in perfect agreement with the predictability sieve on when a unique system basis is preferred. Indeed, on the causal approach to decoherence, a basis $\{\ket{\psi_k}_\bS\}_k$ for the Hilbert space of the system is privileged by decoherence precisely when $\bS_{{\rm pacc}\bG}^{\cu}={\rm span}_{\mathbb{C}}\,\{\ketbra{\psi_k}{\psi_k}_\bS\}_k$ (a condition which is equivalent to $\bS_{{\rm dec}\bG}^{\cu}={\rm span}_{\mathbb{C}}\,\{\ketbra{\psi_k}{\psi_k}_\bS\}_k$ by \cref{thm:algebras}). \cref{thm:pacc_sieve} shows that this is the case precisely when $\{\ket{\psi_k}_\bS\}_k$ is the unique system basis that suffers no loss of predictability for any initial state of $\bF$.

We note that the equivalence established by \cref{thm:pacc_sieve} is between potential accessibility and
\emph{perfect} preservation of predictability, whereas the predictability
sieve more generally allows one to regard as preferred all states satisfying
${\rm loss}_{\cu, I_\bF/d}(\ketbra{\psi}{\psi}_\bS) < \epsilon$ for some
$\epsilon > 0$. This serves as a reminder that what this paper is studying is \textit{ideal} decoherence rather than approximate. As discussed in \cref{sec:limitations}, an important future research direction for the causal approach to decoherence is developing a quantitative version of the analysis that permits the treatment of cases where decoherence is only approximate. If such a direction is pursued, it would be interesting to investigate whether a
quantitative definition of potential accessibility permits a generalization of \cref{thm:pacc_sieve} to the case where the preference threshold $\epsilon$ is greater than zero.

\section{Decoherence in the context of a Hamiltonian interaction} \label{app:continuous}

Throughout the main body of this work, our analysis of decoherence takes place in the context of discrete-time evolutions, represented by unitary channels and circuits.
This contrasts with many works on decoherence that focus on continuous-time evolutions generated by a Hamiltonian.
In this appendix, we apply the causal approach to analyse decoherence in a time-independent Hamiltonian $H \in \bS\bF$ acting on a system $\bS$ and a fragment $\bF$ of the environment (which are still assumed finite-dimensional).
As an application of our analysis, we derive the form of maximally decohering Hamiltonians and further clarify the relationship between decoherence and robustness, building on the discussion from \cref{sec:symptoms}.
Finally, we discuss how the situation changes when one describes the evolution in a rotating frame of reference, and form a connection between our approach and the noiseless subsystems of quantum error correction \cite{knill1997theory, lidar2003decoherence, lidar2014review}.

\subsection{Basic definitions}

Consider any Hermitian operator $H \in \bS\bF$.
It generates unitary channels of the form
\begin{equation}\label{eq:u_t}
    \cu_t \coloneqq e^{-iHt}(\cdot)e^{iHt} : \bS\bF \to \bS\bF.
\end{equation}
\Cref{sec:channel} has already discussed decoherence in the context of a general unitary channel $\cu: \bS \bF \rightarrow \bT \bG$;
by identifying $\bS=\bT$ and $\bF = \bG$, we can apply that analysis here and obtain the accessible and potentially accessible algebras for a particular time $t$, denoted by $\bS_{{\rm acc}\bF}^{\cu_t}$ and $\bS_{{\rm pacc}\bF}^{\cu_t}$.

We consider a system observable to be left \emph{potentially accessible} by $H$ to other fragments of the environment precisely if it is left potentially accessible after $\cu_t$ by $\bF$ for all possible times $t$.
Hence we define
\begin{equation}
    \label{eq:cont_pacc}
    \bS_{{\rm pacc}\bF}^H \coloneqq \bigcap_{t} \bS_{{\rm pacc}\bF}^{\cu_t} = \bigcap_{t} \cu_t^{-1}(\bS).
\end{equation}
By \cref{thm:pacc_ninf}, this is precisely the subalgebra of system operators that do not influence $\bF$ for any time $t$.
(It can be shown, by an analyticity argument, that it makes no difference for this definition if we restrict ourselves to positive times $t>0$, or even to an arbitrarily small time interval.)

Moreover, we consider an operator to be made \emph{accessible} to the fragment $\bF$ by $H$ if any local generator that nontrivially transforms it influences $\bF$ through some (or many) $\cu_t$. 
Just as in the discrete-time case (\cref{thm:algebras}), this makes the accessible algebra into the commutant of the non-influencing algebra: 
\begin{equation} \label{eq:cont_acc}
        \bS_{{\rm acc}\bF}^H
        \coloneqq (\bS_{{\rm pacc}\bF}^H)' \cap \bS
        = \bigvee_{t} \bS_{{\rm acc}\bF}^{\cu_t}, 
\end{equation}
where $\bigvee_t \bX_t$ denotes the algebra generated by all $\bX_t$.

Analogously to \cref{def:decoherent} for discrete-time unitaries, the observables \emph{decohered} by $H$ are those in the commutative subalgebra
\begin{equation}
     \bS_{{\rm dec}\bF}^H \coloneqq \bS_{{\rm acc}\bF}^H \cap \bS_{{\rm pacc}\bF}^H.
\end{equation}
Furthermore, we will call $H$ \emph{decohering} if $\bS_{{\rm dec}\bF}^H$ contains a nontrivial observable and \emph{maximally decohering} if it contains a nondegenerate observable.
The decohered algebra picks out preferred subspaces $\ch_\bS^i \subseteq \ch_\bS$ as in \cref{eq:preferred-decomposition}, and the condition of maximal decoherence is equivalent to each of these subspaces being one-dimensional.

\subsection{The form of maximally decohering Hamiltonians and the role of robustness}

We now state a theorem that relates decoherence to robustness and derives the form of maximally decohering Hamiltonians.

\begin{theorem} \label{thm:invariance-and-hamiltonian-control}
    In the continuous case, an observable is decohered if and only if it is both accessible and robust, i.e.\ 
    \begin{equation} \label{eq:dec_rob}
        \bS_{{\rm dec}\bF}^H = \bS_{{\rm acc}\bF}^H \cap \bS_{{\rm rob}}^H,
    \end{equation}
    where $\bS_{{\rm rob}}^H := \{H\}' \cap \bS = \{M_\bS\in \bS \mid \forall t,\, \cu_t(M_\bS) = M_\bS\}$ is the algebra of robust observables.

    As a consequence, $H$ is maximally decohering if and only if it has the \enquote{coherent} control form 
    \begin{equation} \label{eq:hamiltonian_control} 
        H= \sum_k \ketbra{\psi_k}{\psi_k}_\bS \otimes H_\bF^{(k)}
    \end{equation}
    where $\{\ket{ \psi_k}_\bS \}_{k} \subseteq \ch_\bS$ is an orthonormal basis and $H_\bF^{(k)}$ are Hermitian operators that are neither equal nor equivalent up to the addition of a multiple of the identity, i.e.\ $H_\bF^{(k)} - H_\bF^{(l)} \not\propto I_\bF$ for all $k\neq l$. Moreover, when $H$ is maximally decohering the control basis $\{\ket{ \psi_k}_\bS \}_{k}$ is unique up to phase.
\end{theorem}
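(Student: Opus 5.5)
The proof has two parts: first the identity \cref{eq:dec_rob}, then the structure of maximally decohering Hamiltonians, which we will read off from it.

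\emph{The identity \cref{eq:dec_rob}.} By \cref{eq:cont_pacc}, an operator $M_\bS$ lies in $\bS_{{\rm pacc}\bF}^H$ iff $\cu_t(M_\bS)\in\bS$ for all $t$. So $\bS_{{\rm pacc}\bF}^H$ is exactly the largest subalgebra of $\bS$ that is mapped into $\bS$ by every $\cu_t$. Since $\cu_{t}\circ\cu_{s}=\cu_{t+s}$, this algebra is invariant under the group $\{\cu_t\}$. Differentiating at $t=0$ shows that the derivation $\delta(X)=-i[H,X]$ maps $\bS_{{\rm pacc}\bF}^H$ into itself. Write $\mathcal P:=\bS_{{\rm pacc}\bF}^H$ and $\mathcal Z:=Z(\mathcal P)=\bS_{{\rm dec}\bF}^H$; the latter equality holds by the same commutant argument as in \cref{thm:algebras}, using \cref{eq:cont_acc}. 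Any automorphism of $\mathcal P$ preserves its centre, so each $\cu_t$ restricts to a one-parameter automorphism group of the finite-dimensional commutative algebra $\mathcal Z$. Such an automorphism permutes the finitely many minimal projections of $\mathcal Z$. A continuous path of permutations starting at the identity is constant, so each minimal central projection is fixed by every $\cu_t$, and hence $\mathcal Z\subseteq\{H\}'$. This gives $\bS_{{\rm dec}\bF}^H\subseteq \bS_{{\rm acc}\bF}^H\cap\bS_{\rm rob}^H$. For the converse, note that a robust operator satisfies $\cu_t(M_\bS)=M_\bS\in\bS$ for all $t$, so $\bS_{\rm rob}^H\subseteq\bS_{{\rm pacc}\bF}^H$. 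Intersecting with $\bS_{{\rm acc}\bF}^H$ gives the reverse inclusion. The continuity-of-permutations step is the main point needing care. Instead of it, one could also argue via the derivation: $\delta$ maps a minimal central projection $P$ to $\delta(P)=\delta(P^2)=P\delta(P)+\delta(P)P$, and centrality combined with invariance then forces $\delta(P)=0$.

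\emph{Maximally decohering Hamiltonians.} Suppose $H$ is maximally decohering, with decohered algebra $\spn_\mathbb C\{\ketbra{\psi_k}\}_k$. By the first part, each $\ketbra{\psi_k}_\bS$ commutes with $H$, so $H=\sum_{k,l}\ketbra{\psi_k}H\ketbra{\psi_l}$ collapses to its block-diagonal part, $\sum_k\ketbra{\psi_k}\otimes H_\bF^{(k)}$. Now suppose $H^{(k)}_\bF-H^{(l)}_\bF=cI$ for some $k\neq l$. Then, exactly as in the proof of \cref{thm:uni_control}, the operator $\ket{\psi_k}\!\bra{\psi_l}$ evolves as $e^{-ict}\ket{\psi_k}\!\bra{\psi_l}\in\bS$. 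It is therefore potentially accessible but does not commute with $\ketbra{\psi_k}$, contradicting $\ketbra{\psi_k}\in\bS_{{\rm acc}\bF}^H=(\bS_{{\rm pacc}\bF}^H)'\cap\bS$.

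Conversely, suppose $H$ has the form \cref{eq:hamiltonian_control} with inequivalent $H^{(k)}_\bF$. Then $\cu_t(\ket{\psi_k}\!\bra{\psi_l}\otimes I)=\ket{\psi_k}\!\bra{\psi_l}\otimes e^{-iH^{(k)}t}e^{iH^{(l)}t}$. Differentiating at $t=0$ shows this stays in $\bS$ for all $t$ only if $H^{(k)}-H^{(l)}\propto I$. Therefore $\bS_{{\rm pacc}\bF}^H$ consists precisely of the operators diagonal in $\{\ket{\psi_k}\}$. This algebra is commutative, so its centre is itself and equals $\bS_{{\rm dec}\bF}^H$, which is therefore maximal. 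Uniqueness of the basis up to phase follows because it is determined by the decohered algebra.
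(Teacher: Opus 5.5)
Your proof is correct. For the identity \cref{eq:dec_rob} it takes a genuinely different route from the paper.

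\textbf{The identity.} The paper argues elementwise. A decohered $M_\bS$ lies in the commutant of $\bS_{{\rm pacc}\bF}^H$, while each evolute $\cu_t(M_\bS)$ stays in $\bS_{{\rm pacc}\bF}^H$. Hence $[M_\bS,\cu_t(M_\bS)]=0$ for all $t$. The paper then invokes, without proof, the finite-dimensional fact that an observable commuting with all of its evolutes is invariant. You instead work at the level of the algebra. Your $\mathcal P=\bS_{{\rm pacc}\bF}^H$ is invariant under the group $\{\cu_t\}$, so each $\cu_t$ restricts to an automorphism of $\mathcal P$. It therefore preserves the centre $\mathcal Z=\bS_{{\rm dec}\bF}^H$. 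A norm-continuous one-parameter automorphism group of a finite-dimensional commutative algebra must fix each of its finitely many minimal projections. Your derivation variant says the same thing: every derivation of such an algebra vanishes. Both proofs rest on the same invariance of $\mathcal P$ and on a continuity/discreteness phenomenon. The paper's version is shorter but outsources the decisive step to an unproved lemma, whose proof would need essentially the same kind of continuity argument. Yours makes that step explicit and self-contained.

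\textbf{Maximally decohering Hamiltonians.} Here you follow the paper's structure. Robustness of the decohered projectors forces the block form, and inequivalence of the $H_\bF^{(k)}$ comes from the argument of \cref{thm:uni_control}. You supply the details that the paper explicitly skips.

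One small point in your converse: you only test the matrix units $\ket{\psi_k}\!\bra{\psi_l}$. To conclude that $\bS_{{\rm pacc}\bF}^H$ is exactly the diagonal algebra, take a general $M_\bS=\sum_{k,l}m_{kl}\ket{\psi_k}\!\bra{\psi_l}$. Then $\cu_t(M_\bS)$ splits into the independent blocks $m_{kl}\ket{\psi_k}\!\bra{\psi_l}\otimes e^{-iH_\bF^{(k)}t}e^{iH_\bF^{(l)}t}$, and each block must separately lie in $\bS$. This reduces the general case to your matrix-unit computation, exactly as in the proof of \cref{thm:uni_control}.
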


\begin{proof}[Proof]
    By \cref{eq:cont_pacc}, an operator is potentially accessible, i.e.\   $M_\bS \in \bS_{{\rm pacc}\bF}^H$, if and only if every operator to which it is transformed is potentially accessible, i.e.\  $\cu_t(M_\bS) \in \bS_{{\rm pacc}\bF}^H$ for all $t$. Hence a potentially accessible $M_\bS$ has the property that for any $t$, $\cu_t(M_\bS)$ commutes with all accessible operators, i.e.\ $\cu_t(M_\bS) \in (\bS_{{\rm acc}\bF}^H)'$. 
 
    Therefore, if $M_\bS$ is both potentially accessible and accessible itself, then it commutes with every operator to which it is transformed, i.e.\ $[M_\bS, \cu_t(M_\bS) ]=0$ for all $t$. 
 
    But for a finite-dimensional system $\bS$ and any observable $M_\bS$, if $[M_\bS, \cu_t(M_\bS)]=0$ for all $t$ then in fact $\cu_t(M_\bS) = M_\bS$ for all $t$, meaning that $M_\bS$ is robust. It follows that all decohered observables are robust, $\bS_{{\rm dec}\bF}^H \subseteq  \bS_{{\rm rob}\bF}^H$. 

    Since decohered observables are also accessible, it follows that $\bS_{{\rm dec}\bF}^H \subseteq  \bS_{{\rm acc}\bF}^H \cap \bS_{{\rm rob}\bF}^H$. But since $\bS_{{\rm rob}\bF}^H \subseteq \bS_{{\rm pacc}\bF}^H$, we also have  that $\bS_{{\rm acc}\bF}^H \cap \bS_{{\rm rob}\bF}^H \subseteq \bS_{{\rm dec}\bF}^H$. This proves \cref{eq:dec_rob}.

    Therefore, $H$ is maximally decohering if and only if there exists a basis $\{\ket{ \psi_k}_\bS \}_{k}$ such that $\bS_{{\rm rob}\bF}^H = {\rm span}_\mathbb C \{\ket{ \psi_k}\bra{\psi_k}_\bS \}$ and $\bS_{{\rm acc}\bF}^H = {\rm span}_\mathbb C \{\ket{ \psi_k}\bra{\psi_k}_\bS\}$. The first condition holds if and only if $H = \sum_k \ket{ \psi_k}\bra{\psi_k}_\bS \otimes H_\bF^{(k)}$ for some set of Hermitian operators $H_\bF^{(k)}$. For an $H$ of this form, the second condition, $\bS_{{\rm acc}\bF}^H = {\rm span}_\mathbb C \{\ket{ \psi_k}\bra{\psi_k}_\bS\}$, holds if and only if $H_\bF^{(k)} - H_\bF^{(l)} \not\propto I_\bF$ for all $k\neq l$. (The argument for this last claim closely resembles the one from \cref{thm:uni_control} that established the condition $V_{\bF \rightarrow \bG}^{(k)} \not\propto V_{\bF \rightarrow \bG}^{(l)} $ for $k \neq l$, so we do not give it here explicitly.)
\end{proof}

We note that the proof above relies on a fact about finite-dimensional observables that does not extend to the infinite-dimensional case: that if an observable is nontrivially transformed at all by a Hamiltonian then it is at some point transformed to an observable with which it does not commute. The problem of generalizing \cref{thm:invariance-and-hamiltonian-control} to infinite dimensions is left open for future work.

\begin{examplebox}
    A good sanity check on this causal analysis of decoherence through Hamiltonians is to confirm that a von Neumann measurement of a nondegenerate observable $M_\bS$, implemented using a Hamiltonian of the form
    \begin{equation} \label{eq:vnm}
        H = M_\bS \otimes N_\bF,
    \end{equation}
    is maximally decohering.
    Indeed, since $M_\bS$ is nondegenerate, the Hamiltonian can be rewritten as
    \begin{equation}
        H = \sum_k \ketbra{\psi_k}{\psi_k}_\bS \otimes m_k N_\bF,
    \end{equation}
    where all $m_k$ are distinct. 
    It then follows from \cref{thm:invariance-and-hamiltonian-control} above that $H$ is indeed maximally decohering as long as $N_\bF$ is not proportional to the identity.
\end{examplebox}

\cref{thm:invariance-and-hamiltonian-control}, and in particular \cref{eq:hamiltonian_control}, directly connects the causal approach to many concrete interactions that are commonly studied in the decoherence literature. These include spin-spin and spin-boson models \cite{zurek1981pointer, zurek1982environment, leggett1987dynamics}, quantum Brownian motion \cite{caldeira1983path, hu1992quantum}, collisional decoherence \cite{joos1985emergence, hornberger2003collisional}, optomechanical and superconducting systems \cite{aspelmeyer2014cavity, clerk2010introduction}, and models of nitrogen-vacancy centers in diamond \cite{doherty2013nitrogen}, among many others. (We note however that some of these references include interactions that only approximately satisfy \cref{eq:hamiltonian_control}, and others that (approximately) satisfy an obvious infinite-dimensional generalization of \cref{eq:hamiltonian_control}.)

 Just as \cref{thm:uni_control} showed in the discrete-time case that off-diagonal suppression is a symptom rather than a defining feature of decoherence, \cref{thm:invariance-and-hamiltonian-control} shows the same in the continuous-time setting. Indeed, in the case of a maximally decohering Hamiltonian acting on an initial product state $\rho_\bS\otimes \rho_\bF$, the off-diagonal elements of $\rho_\bS$ are suppressed by a factor of $|{\rm Tr}(e^{-iH^{(k)}_{\bF}t} \rho_\bF e^{iH^{(l)}_{\bF}t})| \leq 1$. (As remarked below \cref{thm:uni_control}, this implies strict suppression whenever $\rho_\bF$ is full-rank.)

We now discuss the implications of \cref{thm:invariance-and-hamiltonian-control} for the role of robustness in the emergence of classicality. We argued in \cref{sec:symptoms} that in the discrete-time case robustness was not necessary for decoherence. We said that an observable was decohered if and only if it was both made accessible and left potentially accessible, and we saw that not all such observables are robust.

But \cref{thm:invariance-and-hamiltonian-control} shows that in the finite-dimensional continuous-time case, robustness is more closely related to decoherence. In this setting, the conjunction of accessibility and potential accessibility turns out to be equivalent to the conjunction of accessibility and robustness. Hence as long as we are working only in this setting, decoherence can equivalently be \textit{defined} as the conjunction of accessibility and robustness.

This explains why, in the finite-dimensional continuous-time setting, robustness does in fact serve as a useful criterion for identifying preferred degrees of freedom, despite the arguments from \cref{sec:channel} that potential accessibility is what really matters a priori.
Indeed, in the case of maximal decoherence, \cref{thm:invariance-and-hamiltonian-control} implies that the preferred basis is the unique system basis (up to phase) that is perfectly robust. However, it should be remembered that this is not true in the discrete-time setting, wherein robustness is not necessary for decoherence.

In existing works on emergent classicality, it is often only required that preferred degrees of freedom are \textit{approximately} robust. But \cref{thm:invariance-and-hamiltonian-control} shows that on our analysis, preferred degrees of freedom are perfectly robust. This serves as a reminder that in this work we are only studying \textit{ideal} decoherence, on which we will comment more in the next section.

\subsection{Robustness in a rotating frame and noiseless subsystems} \label{app:noiseless}

If one describes the Hamiltonian evolution of operators using a frame of reference that is itself rotating (as one does in the interaction picture), this changes which operators appear to be robust, i.e.\ invariant under the evolution. While not all operators that are potentially accessible are robust relative to the usual fixed reference frame, this subsection will show that  an operator is left potentially accessible if and only if it is robust relative to a rotating frame that fully absorbs the system’s local Hamiltonian. We will then use the result to form a link between potential accessibility and \textit{noiseless subsystems}, a key concept from quantum error correction \cite{knill1997theory, lidar2003decoherence, lidar2014review}.

Let us begin by explaining how the evolution of operators is described relative to the rotating frame. As above, we consider a time-independent Hamiltonian $H \in \bS\bF$ acting on a system $\bS$ and a fragment $\bF$ of the environment. The Hamiltonian can always be
decomposed as
\begin{equation} \label{eq:ham_int}
    H = H_\bS + H_I ,
\end{equation}
such that 
\begin{equation} \label{eq:partial_traceless}
    {\rm Tr}_\bF H_I = 0 .
\end{equation}
This condition ensures that $H_\bS$ fully captures the local evolution on $\bS$ while $H_I$ represents both $\bF$'s local evolution and the interaction between the two systems. 
In terms of the standard Heisenberg-evolved operator $M_\bS^H(t):=\cu^{-1}_t(M_\bS) = e^{iHt} M_\bS e^{-iHt}$, the operator
$M_\bS^I(t)$ seen in the rotating frame is defined by 
\begin{equation} \label{eq:int_op}
    M_\bS^I(t) = e^{-iH_\bS t} M_\bS^H(t) e^{iH_\bS t},
\end{equation}
and satisfies the equation of motion
\begin{equation} \label{eq:int_eom}
    \dot M_\bS^I(t) = i[H_I(t), M_\bS^I(t)] ,
\end{equation}
where
\begin{equation} \label{eq:evolved_ham}
    H_I(t) = e^{-iH_\bS t} H_I e^{iH_\bS t} .
\end{equation}
It will prove useful to note that, by \cref{eq:partial_traceless,eq:evolved_ham},
\begin{equation} \label{eq:partial_traceless_2}
    {\rm Tr}_\bF H_I(t) = 0
\end{equation}
for all $t$.

From the point of view of this rotating frame, the robust operators are those in the algebra
\begin{equation}
\begin{split}
        \bS_{\rm rob}^{H_I} &\coloneqq \{M_\bS \in \bS \mid \forall  t \in \mathbb R, \ M^I_\bS(t) = M_\bS \},
        \\ 
        &\phantom{:}=   \{ H_I(t) \mid t \in \mathbb R\}' \cap \bS .
\end{split}
\end{equation}

We are now ready to state the result. 

\begin{theorem} \label{thm:pacc_noiseless}
    Potential accessibility is equivalent to robustness in the rotating frame, i.e.\ 
    \begin{equation}
   \bS_{{\rm pacc}\bF}^H = \bS_{{\rm rob}}^{H_I}.
\end{equation}
\end{theorem}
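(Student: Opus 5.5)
We prove the two inclusions separately, working directly from the definition \cref{eq:cont_pacc}: $M_\bS \in \bS_{{\rm pacc}\bF}^H$ if and only if $M_\bS^H(t) = \cu_t^{-1}(M_\bS) \in \bS$ for all $t$. The rotating-frame operator $M_\bS^I(t)$ in \cref{eq:int_op} is obtained from $M_\bS^H(t)$ by conjugating with the purely local unitary $e^{-iH_\bS t}$. Such a conjugation maps $\bS \otimes I_\bF$ onto itself. So for every $t$,
\begin{equation}
M_\bS^H(t) \in \bS \quad\Longleftrightarrow\quad M_\bS^I(t) \in \bS .
\end{equation}
This observation drives both directions.

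For the inclusion $\bS_{\rm rob}^{H_I} \subseteq \bS_{{\rm pacc}\bF}^H$, suppose $M_\bS^I(t) = M_\bS$ for all $t$. Then $M_\bS^I(t) \in \bS$ trivially. Inverting \cref{eq:int_op} gives
\begin{equation}
M_\bS^H(t) = e^{iH_\bS t} M_\bS e^{-iH_\bS t} \in \bS
\end{equation}
for all $t$, which is exactly potential accessibility.

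For the converse, suppose $M_\bS \in \bS_{{\rm pacc}\bF}^H$. By the observation above, $M_\bS^I(t) = A(t) \otimes I_\bF$ for some operator-valued function $A(t)$ on $\ch_\bS$. This function is differentiable, since $M_\bS^I(t)$ is a product of matrix exponentials. Its derivative $\dot M_\bS^I(t) = \dot A(t) \otimes I_\bF$ is then also a system operator. The plan is to apply ${\rm Tr}_\bF$ to the equation of motion \cref{eq:int_eom}.

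On the left-hand side, the partial trace gives $d_\bF\,\dot A(t)$, where $d_\bF = \dim \ch_\bF$. On the right-hand side we use the elementary identities
\begin{equation}
{\rm Tr}_\bF\bigl(X\,(A\otimes I_\bF)\bigr) = ({\rm Tr}_\bF X)\,A \quadand {\rm Tr}_\bF\bigl((A\otimes I_\bF)\,X\bigr) = A\,{\rm Tr}_\bF X,
\end{equation}
valid for any $X \in \bS\bF$. They give
\begin{equation}
{\rm Tr}_\bF\, i\bigl[H_I(t), A(t)\otimes I_\bF\bigr] = i\bigl[{\rm Tr}_\bF H_I(t), A(t)\bigr] = 0,
\end{equation}
where the last equality is \cref{eq:partial_traceless_2}. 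Hence $\dot A(t) = 0$, so $A(t) = A(0)$ and $M_\bS^I(t) = M_\bS^I(0) = M_\bS$ for all $t$, i.e.\ $M_\bS \in \bS_{\rm rob}^{H_I}$.

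The main step I expect to need care is this partial-trace argument. It is exactly where the normalisation ${\rm Tr}_\bF H_I = 0$ from \cref{eq:partial_traceless} enters, and it is what distinguishes the rotating-frame notion of robustness from ordinary robustness. I would therefore double-check that \cref{eq:int_eom} holds in the stated form, including the sign convention and the definition of $H_I(t)$ in \cref{eq:evolved_ham}. Everything else is routine manipulation of conjugations by local unitaries.
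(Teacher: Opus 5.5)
Your proof is correct and follows essentially the same route as the paper. Both arguments first reduce potential accessibility to the condition $M_\bS^I(t)\in\bS$ for all $t$, using conjugation by the local unitary $e^{-iH_\bS t}$, and then partially trace the rotating-frame equation of motion with ${\rm Tr}_\bF H_I(t)=0$ to show that the evolution is trivial. The only difference is cosmetic: the paper first concludes $[M_\bS^I(t),H_I(t)]=0$ and then invokes the equation of motion, whereas you conclude $\dot A(t)=0$ directly, which is the same statement.
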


\begin{proof}[Proof] 
An operator $M_\bS \in \bS$ does not leak into the environment in the usual nonrotating frame if and only if it does not leak into the environment in the rotating frame. Indeed, it is immediate from \cref{eq:int_op} that
\begin{equation}
    \forall t, \ \  M_\bS^H(t) \in \bS \qquad \qquad \Longleftrightarrow \qquad \qquad  \forall t,  \ \
    M_\bS^I(t) \in \bS.
\end{equation}

We will now show that in the rotating frame, $M_\bS \in \bS$ does not leak into the environment if and only if it is robust, i.e.\ that
\begin{equation} \label{eq:rot_aut_inv}
  \forall t, \ \ M_\bS^I(t) \in \bS \qquad \qquad  \Longleftrightarrow \qquad \qquad \forall t, \ \ M_\bS^I(t) = M_\bS.
\end{equation}
The $\Longleftarrow$ direction holds because $M_\bS \in \bS$. For the other direction, note that the condition on the left entails that $\dot M_\bS^I(t)\in \bS$ for all $t$, and hence, by \cref{eq:int_eom}, that for all $t$ there exists an $N^I_\bS(t) \in \bS$ such that
\begin{equation}
       [M_\bS^I(t), H_I(t)] = N^I_\bS(t) \otimes I_\bF.
    \end{equation}
Partial tracing over $\bF$ on both sides and applying \cref{eq:partial_traceless_2} shows  that $[M_\bS^I(t), H_I(t)] =0$. By \cref{eq:int_eom}, it follows that $M_\bS$ is robust in the rotating frame.

Combining the two equivalences gives 
\begin{equation}
    \forall t, \ \  M_\bS^H(t) \in \bS \qquad \qquad \Longleftrightarrow \qquad \qquad \forall t, \ \  M_\bS^I(t) = M_\bS.
\end{equation}
The condition on the left is equivalent to $M_\bS \in \bS_{{\rm pacc}\bF}^H$ by \cref{thm:pacc_ninf}, while the condition on the right is equivalent to $M_\bS \in \bS_{{\rm rob}}^{H_I}$ by definition.
\end{proof}

We note that $\bS_{{\rm pacc}\bF}^H$ is not in general equal to the algebra $\bS_{\rm rob}^H \coloneqq  \{H\}' \cap \bS$ of operators that are robust relative to the usual fixed frame. 
Indeed, the algebras are always distinct when the Hamiltonian acts nontrivially on $\bS$ but not $\bF$: if $H = H_\bS$, then $\bS_{{\rm pacc}\bF}^H = \bS$ but $\bS_{\rm rob}^H = \langle H_\bS \rangle$. But when such a Hamiltonian is seen in the rotating frame, it becomes trivial (i.e.\ $H_I(t) = 0$ for all $t$) and thus $\bS_{{\rm pacc}\bF}^H = \bS_{\rm rob}^{H_I} =\bS$.

\cref{thm:pacc_noiseless} establishes a connection between decoherence as we define it and the \textit{noiseless subsystems}  (introduced in \cite{knill1997theory}, reviewed in \cite{lidar2003decoherence, lidar2014review}) of quantum error correction. A noiseless subsystem of $\bS$ is one that is protected from noise induced by the environment. It is formally represented by a Hilbert space factor associated with a robust algebra of operators, where robustness may be judged using a frame of reference appropriate to the problem at hand. If we use the rotating frame that fully absorbs the system's local evolution (as suggested in \cite{knill1997theory}), then the noiseless subsystems are identified with the Hilbert factors appearing in the Artin--Wedderburn representation of $\bS_{\rm rob}^{H_I}$---that is, the factors $\ch_{\bS_R^i}$ appearing in the decomposition $\ch = \bigoplus_{i=1}^n \ch_{\bS_L^i} \otimes \ch_{\bS_R^i}$ chosen such that $\bS_{{\rm rob}}^{H_I}= \bigoplus_{i=1}^n I_{\bS_L^i} \otimes \cl(\ch_{\bS_R^i})$. \cref{thm:pacc_noiseless} shows that these noiseless subsystems are equivalently the Hilbert space factors associated with the algebra $\bS_{{\rm pacc} \bF}^H$, thus demonstrating an equivalence between noiselessness and potential accessibility.

\section{Causal Decoherence Theory} \label{app:bubble}

This appendix briefly introduces the interpretation of quantum theory laid out in \cite{ormrod2024quantum}, minorly adapting the notation for the sake of consistency with the current article. We will also show how this interpretation can naturally be reformulated in terms of our causal approach to decoherence, leading to a simplification of its formalism and clarification of its message.

The interpretation in \cite{ormrod2024quantum} claims that reality consists of a unitary circuit and, for each subset $\mathfrak{B}$ of systems, a stochastically selected history from a preferred consistent history set associated with $\mathfrak{B}$. Given this description, the interpretation already sounds very similar to the interpretation that we called Causal Decoherence Theory in \cref{sec:crqt}. Indeed, all that has to be done to demonstrate that the interpretations are equivalent is to show that the preferred consistent history sets from \cite{ormrod2024quantum} are precisely the ones that we have derived from decoherence and dual decoherence. 

We start by introducing some concepts from \cite{ormrod2024quantum} that are foundational to its notion of a preferred consistent history set. Given a system $\bS$, a \textit{projective decomposition} $\{P^i_\bS\}$ is a complete orthogonal family of projectors, i.e.\ a set of projectors that each projects onto some subspace in a direct-sum decomposition $\ch_\bS = \bigoplus_i \ch^i_\bS$ of the Hilbert space of the system. Given a unitary interaction $\cu: \bS \bF \rightarrow \bT \bG$, \cite{ormrod2024quantum} says that a projective decomposition $\{P_\bS^i\}$ exerts an \textit{interference influence} on a projective decomposition $\{P_\bG^i\}$, written $\{P_\bS^i\} \xrightarrow \cu \{P_\bG^j\}$, if and only if at least one $P_\bS^i$ influences at least one $P_\bG^j$ in the sense of our \cref{def:influence}, i.e.\ $\exists i, j: \ P_\bS^i \xrightarrow \cu P_\bG^j$. \cite{ormrod2024quantum} shows that there is a unique projective decomposition $\{P_\bS^i\}$ that satisfies all of the following conditions:
\begin{enumerate}
        \item $\{P_\bS^i\}$ does not exert an interference influence on any projective decomposition on $\bG$: $\forall \{P_\bG^j\}: \{P_\bS^i\} \ninf \cu \{P_\bG^j\}$.
    \item If $\{P_\bS^i\}$  is incompatible with some other projective decomposition $\{Q_\bS^k\}$, then $\{Q_\bS^k\}$ exerts an interference influence on at least one projective decomposition on $\bG$: $\exists i, k: [P_\bS^i, Q_\bS^k]\neq0 \implies \{Q_\bS^k\} \xrightarrow \cu\{P_\bG^j\} $.
    \item Any other decomposition $\{R_\bS^l\}$ satisfying (1) and (2) is a coarse-graining of $\{P_\bS^i\}$, in the sense that $\{R_\bS^l\} \subseteq {\rm span}_\mathbb{C} (\{P_\bS^i\})$.
\end{enumerate}
The unique projective decomposition that satisfies all three conditions is called \textit{preferred}. We denote it by $\cs_{{\rm pref}\bG}^\cu$, using the calligraphic symbol $\cs$ to avoid confusing this projective decomposition with an algebra (which would be represented using a boldface symbol $\bS$).

Making use of this notion of a preferred projective decomposition, \cite{ormrod2024quantum} introduces a procedure for finding a preferred consistent history set given any subset $\mathfrak{B}$ of systems in a unitary circuit. Although we will ultimately prove that it yields the same consistent history set as the procedure we introduced in \cref{sec:circuit}, the procedure itself is different. In \cite{ormrod2024quantum} the first step is to define a single unitary channel by \enquote{breaking the wires} corresponding to each system in $\mathfrak{B}$. For example, given the unitary circuit on the left side of \cref{fig:circuit_three} and the set $\mathfrak{B} = \{\bA, \bB, \bC\}$, one obtains the broken circuit on the right side of the figure. This broken unitary circuit defines a unitary channel whose outputs include the wires that go into the cuts and whose inputs include wires that come out of the cuts. Explicitly, the unitary channel associated with the right of \cref{fig:circuit_three}  is of the type
\begin{equation}
    \cv: \quad  \bA^{\rm out} \bB^{\rm out} \bC^{\rm out} \bP \quad \rightarrow \quad \bA^{\rm in} \bB^{\rm in} \bC^{\rm in} \bF,
\end{equation}
where $\bP$ corresponds to the bottom-right wire in the broken circuit and $\bF$ corresponds to the top-right wire. 

\begin{figure}
    \centering
    \begin{tikzpicture}
	\begin{pgfonlayer}{nodelayer}
		\node [style=none] (0) at (0.5, -2.75) {};
		\node [style=none] (1) at (5.5, -2.75) {};
		\node [style=none] (2) at (0.5, -1.75) {};
		\node [style=none] (3) at (5.5, -1.75) {};
		\node [style=none] (4) at (3, -2.25) {$\cv_1$};
		\node [style=none] (5) at (1, -5.75) {};
		\node [style=none] (6) at (1, -2.75) {};
		\node [style=none] (7) at (1, -1.75) {};
		\node [style=none] (8) at (1, 1.75) {};
		\node [style=none] (9) at (5, -5.75) {};
		\node [style=none] (10) at (5, -2.75) {};
		\node [style=none] (11) at (5, -1.75) {};
		\node [style=none] (12) at (5, 1.75) {};
		\node [style=none] (13) at (1.5, -4.25) {$\bA$};
		\node [style=none] (14) at (1.5, 0) {$\bB$};
		\node [style=none] (15) at (0.5, 1.75) {};
		\node [style=none] (16) at (5.5, 1.75) {};
		\node [style=none] (17) at (0.5, 2.75) {};
		\node [style=none] (18) at (5.5, 2.75) {};
		\node [style=none] (19) at (3, 2.25) {$\cv_2$};
		\node [style=none] (20) at (1, 2.75) {};
		\node [style=none] (21) at (1, 6) {};
		\node [style=none] (22) at (5, 2.75) {};
		\node [style=none] (23) at (5, 6) {};
		\node [style=none] (24) at (1.5, 4.5) {$\bC$};
		\node [style=none] (25) at (13.5, -2.75) {};
		\node [style=none] (26) at (18.5, -2.75) {};
		\node [style=none] (27) at (13.5, -1.75) {};
		\node [style=none] (28) at (18.5, -1.75) {};
		\node [style=none] (29) at (16, -2.25) {$\cv_1$};
		\node [style=none] (30) at (14, -4) {};
		\node [style=none] (31) at (14, -2.75) {};
		\node [style=none] (32) at (14, -1.75) {};
		\node [style=none] (33) at (14, -0.5) {};
		\node [style=none] (34) at (18, -6) {};
		\node [style=none] (35) at (18, -2.75) {};
		\node [style=none] (36) at (18, -1.75) {};
		\node [style=none] (37) at (18, 1.75) {};
		\node [style=none] (38) at (14.83, -5.25) {$\bA^{\rm in}$};
		\node [style=none] (39) at (13.5, 1.75) {};
		\node [style=none] (40) at (18.5, 1.75) {};
		\node [style=none] (41) at (13.5, 2.75) {};
		\node [style=none] (42) at (18.5, 2.75) {};
		\node [style=none] (43) at (16, 2.25) {$\cv_2$};
		\node [style=none] (44) at (14, 2.75) {};
		\node [style=none] (45) at (14, 4) {};
		\node [style=none] (46) at (18, 2.75) {};
		\node [style=none] (47) at (18, 6) {};
		\node [style=none] (48) at (14, -5) {};
		\node [style=none] (49) at (14, -6) {};
		\node [style=none] (50) at (14, 0.5) {};
		\node [style=none] (51) at (14, 1.75) {};
		\node [style=none] (52) at (14, 5) {};
		\node [style=none] (53) at (14, 6) {};
		\node [style=none] (54) at (15, -3.75) {$\bA^{\rm out}$};
		\node [style=none] (55) at (14.83, -0.75) {$\bB^{\rm in}$};
		\node [style=none] (56) at (15, 0.75) {$\bB^{\rm out}$};
		\node [style=none] (57) at (14.83, 3.75) {$\bC^{\rm in}$};
		\node [style=none] (58) at (15, 5.25) {$\bC^{\rm out}$};
		\node [style=none] (59) at (9.5, 0) { \Large $ \mapsto$};
	\end{pgfonlayer}
	\begin{pgfonlayer}{edgelayer}
		\draw (3.center) to (1.center);
		\draw (1.center) to (0.center);
		\draw (0.center) to (2.center);
		\draw (2.center) to (3.center);
		\draw (6.center) to (5.center);
		\draw (8.center) to (7.center);
		\draw (10.center) to (9.center);
		\draw (12.center) to (11.center);
		\draw (18.center) to (16.center);
		\draw (16.center) to (15.center);
		\draw (15.center) to (17.center);
		\draw (17.center) to (18.center);
		\draw (21.center) to (20.center);
		\draw (23.center) to (22.center);
		\draw (28.center) to (26.center);
		\draw (26.center) to (25.center);
		\draw (25.center) to (27.center);
		\draw (27.center) to (28.center);
		\draw (31.center) to (30.center);
		\draw (33.center) to (32.center);
		\draw (35.center) to (34.center);
		\draw (37.center) to (36.center);
		\draw (42.center) to (40.center);
		\draw (40.center) to (39.center);
		\draw (39.center) to (41.center);
		\draw (41.center) to (42.center);
		\draw (45.center) to (44.center);
		\draw (47.center) to (46.center);
		\draw (49.center) to (48.center);
		\draw (51.center) to (50.center);
		\draw (53.center) to (52.center);
	\end{pgfonlayer}
\end{tikzpicture}
    \caption{In \cite{ormrod2024quantum}, the first step to obtaining the preferred decompositions is to break the wires corresponding to all systems of interest, thus obtaining a unitary channel $\cv:  \bA^{\rm out} \bB^{\rm out} \bC^{\rm out} \bP  \rightarrow  \bA^{\rm in} \bB^{\rm in} \bC^{\rm in} \bF$, where $\bP$ (for \enquote{past}) is the bottom-right input to the circuit and $\bF$ (for \enquote{future}) is the top-right output. The labels $\bP$ and $\bF$ are omitted from the diagram because they are not being considered as systems of interest.}
    \label{fig:circuit_three}
\end{figure}

Having defined this unitary channel, one then applies the definition above to obtain the projective decomposition on each $\bX^{\rm out}$ that is preferred by the combination of all $\bY^{\rm in}$, which, in the present example, is $\bA^{\rm out} \bB^{\rm out} \bC^{\rm out}$. In order to avoid introducing a time asymmetry into the theory, one also does the analogous thing with the inverse $\cv^{-1}$, finding for each $\bX^{\rm in}$ the  projective decomposition preferred by the combination of all $\bY^{\rm out}$. In the example of \cref{fig:circuit_three}, one obtains $6 = 2 \times 3$ projective decompositions, denoted 
\begin{equation} \label{eq:bubble_decs}
\begin{array}{l@{\hspace{10em}}l}
(\ca^{\rm out})_{{\rm pref}\mathfrak{B}^{\rm in}}^\cv
&
(\ca^{\rm in})_{{\rm pref}\mathfrak{B}^{\rm out}}^{\cv^{-1}}
\\[0.6ex]
(\cb^{\rm out})_{{\rm pref}\mathfrak{B}^{\rm in}}^\cv
&
(\cb^{\rm in})_{{\rm pref}\mathfrak{B}^{\rm out}}^{\cv^{-1}}
\\[0.6ex]
(\cc^{\rm out})_{{\rm pref}\mathfrak{B}^{\rm in}}^\cv
&
(\cc^{\rm in})_{{\rm pref}\mathfrak{B}^{\rm out}}^{\cv^{-1}}.
\end{array}
\end{equation}
where $\mathfrak{B}^{\rm out}:= \bA^{\rm out}\bB^{\rm out}\bC^{\rm out}$ and $\mathfrak{B}^{\rm in} :=\bA^{\rm in}\bB^{\rm in}\bC^{\rm in}$. 

Since $\bX^{\rm out}$ and  $\bX^{\rm in}$ are each a copy of the original system $\bX$, both of the preferred projective decompositions $(\cx^{\rm out})_{{\rm pref}\mathfrak{B}^{\rm in}}^\cv$ and $(\cx^{\rm in})_{{\rm pref}\mathfrak{B}^{\rm out}}^{\cv^{-1}}$ naturally correspond to projective decompositions on $\bX$ itself. We denote these by $(\cx)_{{\rm pref}\mathfrak{B}}^\uparrow$ and $(\cx)_{{\rm pref}\mathfrak{B}}^\downarrow$ respectively. It is shown in \cite{ormrod2024quantum} that for any unitary circuit and any $\mathfrak{B}$, these projective decompositions define a consistent history set (with $\rho = I/d$), and that the probability rule takes the form of \cref{eq:prob}.

To show that this is in fact precisely the consistent history set that is selected by decoherence and dual decoherence, the first step is to show that in the simple case of a unitary channel $\cu: \bS  \bF \rightarrow \bT\bG$, the preferred projective decomposition $\cs_{{\rm pref}\bG}^\cu$ of  \cite{ormrod2024quantum} is precisely the set of projectors onto the subspaces that are selected by decoherence.
Appendix C of \cite{ormrod2024quantum} proves that $\cs_{{\rm pref}\bG}^\cu$ is the unique projective decomposition that satisfies 
\begin{equation}
    {\rm span}_\mathbb{C}(\cs_{{\rm pref}\bG}^\cu) = Z(\cu^{-1}(\bG)' \cap \bS).
\end{equation}
But this paper (\cref{thm:algebras}) has shown that
\begin{equation}
    \bS_{{\rm dec} \bG}^\cu = Z(\cu^{-1}(\bG)' \cap \bS).
\end{equation}
It follows that the preferred projective decomposition of \cite{ormrod2024quantum} is the unique one that spans the decoherent algebra, i.e.\ 
\begin{equation} \label{eq:pref_dec}
    {\rm span}_\mathbb{C}(\mathcal{S  }_{{\rm pref}\bG}^\cu) = \bS_{{\rm dec} \bG}^\cu,
\end{equation}
and thus that each element of $\cs_{{\rm pref}\bG}^\cu$ projects onto one of the subspaces from the Hilbert space decomposition induced by $\bS_{{\rm dec} \bG}^\cu$.

Having demonstrated an equivalence in the case of a single unitary between the preferred projective decomposition and our decohered algebra, we now show that in the setting of unitary circuits, the preferred consistent history sets of \cite{ormrod2024quantum} are precisely the ones privileged by decoherence and dual decoherence. We begin with the example of the circuit on the left of \cref{fig:circuit_three}. By \cref{thm:pacc_ninf},
\begin{equation}
     M_\bA \in \bA_{{\rm pacc}\mathfrak{B}}^\uparrow \qquad \Longleftrightarrow \qquad M_\bA \ninf {\cv_1} \bB \ \  {\rm and} \ \  \cv_1(M_\bA) \ninf {\cv_2} \bC.
\end{equation}
Note that $M_\bA \ninf {\cv_1} \bB$ implies that $\cv_1(M_\bA)$ acts locally on the (unlabelled) right-hand output of $\cv_1$, i.e.\ $\cv_1(M_\bA) \in \bB'$. Hence, by the circuit form of $\cv$, \cref{thm:pacc_ninf} also implies that
\begin{equation}
M_{\bA^{\rm out}}  \in (\bA^{\rm out})^\cv_{{\rm pacc} \mathfrak{B}^{\rm in}}\qquad \Longleftrightarrow \qquad M_{\bA^{\rm out}} \ninf{\cv_1} \bB \ \ {\rm and} \ \ \cv_1(M_{\bA^{\rm out}}) \ninf{\cv_2} \bC.
\end{equation}
It follows that  $\bA_{{\rm pacc}\mathfrak{B}}^\uparrow$  and $(\bA^{\rm out})^\cv_{{\rm pacc} \mathfrak{B}^{\rm in}}$ are equivalent in the sense that
\begin{equation}
    M_\bA \in \bA_{{\rm pacc}\mathfrak{B}}^\uparrow  \qquad \Longleftrightarrow \qquad M_{\bA^{\rm out}}    \in (\bA^{\rm out})^\cu_{{\rm pacc} \mathfrak{B}^{\rm in}}.
\end{equation}
Finally, \cref{eq:pref_dec} then implies that
\begin{equation}
    \bA_{{\rm dec}\mathfrak{B}}^\uparrow  = {\rm span}_\mathbb C (\ca_{{\rm pref} \mathfrak{B}}^\uparrow).
\end{equation}
Generalizing this argument, we find that for all three systems of interest $\bX \in \mathfrak{B}$ in this example,
\begin{equation} \label{eq:gen_match}
    \bX_{{\rm dec}\mathfrak{B}}^\uparrow  = {\rm span}_\mathbb C (\cx_{{\rm pref} \mathfrak{B}}^\uparrow), \qquad \qquad \bX_{{\rm dec}\mathfrak{B}}^\downarrow  = {\rm span}_\mathbb C (\cx_{{\rm pref} \mathfrak{B}}^\downarrow).
\end{equation}

To generalize the argument beyond this example, note that for any set $\mathfrak{B}$ of systems of interest in any unitary circuit, one can deform the circuit without changing its topology so that it resembles the left of \cref{fig:circuit_three}, in which each system of interest occupies a different time slice of the circuit, and each neighbouring pair of time slices is related by some unitary. Hence an obvious generalization of the arguments above establishes \cref{eq:gen_match} in this more general setting.

Therefore, for any unitary circuit and any $\mathfrak{B}$, the preferred consistent history set is precisely the one privileged by decoherence as we define it. It follows that the interpretation introduced in \cite{ormrod2024quantum} is equivalent to Causal Decoherence Theory.

\end{document}